\newtheorem{theorem}{Theorem}
\newtheorem{lemma}[theorem]{Lemma}
\title{Combinatorial generation via permutation languages. \\  VI. Binary trees}
\author{Petr Gregor}
\address[Petr Gregor]{Department of Theoretical Computer Science and Mathematical Logic, Charles University, Prague, Czech Republic}
\email{gregor@ktiml.mff.cuni.cz}
\author{Torsten M\"utze}
\address[Torsten M\"utze]{Department of Computer Science, University of Warwick, United Kingdom \& Department of Theoretical Computer Science and Mathematical Logic, Charles University, Prague, Czech Republic}
\email{torsten.mutze@warwick.ac.uk}
\author{Namrata}
\address[Namrata]{Department of Computer Science, University of Warwick, United Kingdom}
\email{namrata@warwick.ac.uk}
\thanks{An extended abstract of this paper has been accepted for presentation at ISAAC~2023. This work was supported by Czech Science Foundation grant GA~22-15272S. The three authors participated in the workshop `Combinatorics, Algorithms and Geometry' in March 2024, which was funded by German Science Foundation grant~522790373.}
\begin{document}

\begin{abstract}
In this paper we propose a notion of pattern avoidance in binary trees that generalizes the avoidance of contiguous tree patterns studied by Rowland and non-contiguous tree patterns studied by Dairyko, Pudwell, Tyner, and Wynn.
Specifically, we propose algorithms for generating different classes of binary trees that are characterized by avoiding one or more of these generalized patterns.
This is achieved by applying the recent Hartung--Hoang--M\"utze--Williams generation framework, by encoding binary trees via permutations.
In particular, we establish a one-to-one correspondence between tree patterns and certain mesh permutation patterns.
We also conduct a systematic investigation of all tree patterns on at most 5 vertices, and we establish bijections between pattern-avoiding binary trees and other combinatorial objects, in particular pattern-avoiding lattice paths and set partitions.
\end{abstract}

\maketitle

\section{Introduction}
\label{sec:intro}

Pattern avoidance is a central theme in combinatorics and discrete mathematics.
For example, in Ramsey theory one investigates how order arises in large unordered structures such as graphs, hypergraphs, or subsets of the integers.
The concept also arises naturally in algorithmic applications.
For example, Knuth~\cite{MR3077152} showed that the integer sequences that are sortable by one pass through a stack are precisely 231-avoiding permutations.
Pattern-avoiding permutations are a particularly important and heavily studied strand of research, one that comes with its own associated conference `Permutation Patterns', held annually since~2003.
While it may seem that pattern-avoiding permutations are somewhat limited in scope, via suitable bijections they actually encode many objects studied in other branches of combinatorics.
Pattern avoidance has also been studied directly in these other classes of objects, such as trees~\cite{MR2645188,Dotsenko_2011,MR3118904,MR2967227,MR2872462,PSSS14,MR3548807,MR3872611,MR4118862}, set partitions~\cite{MR309747,MR1370819,MR1750890,MR1769065,MR2419765,MR2398831,MR2824451,MR2863229,MR2599721,MR2954657,MR3003349,MR3245891,MR3548808}, lattice paths~\cite{MR2371066,MR3091032,MR3788052,MR4256410}, heaps~\cite{MR3426216}, matchings~\cite{MR3066344}, and rectangulations~\cite{MR4598046}.
In this work, we focus on binary trees, a class of objects that is fundamental within computer science, and also a classical Catalan family.

So far, two different notions of pattern avoidance in binary trees have been studied in the literature.
We consider a binary tree~$T$, which serves as the host tree, and another binary tree~$P$, which serves as the pattern tree.
Rowland~\cite{MR2645188} considered a \defi{contiguous} notion of pattern containment, where $T$ contains~$P$ if $P$ is present as an induced subtree of~$T$; see Figure~\ref{fig:notions}~(a).
He devised an algorithm to compute the generating function for the number of $n$-vertex binary trees that avoid~$P$, and he showed that this generating function is always algebraic.
Dairyko, Pudwell, Tyner, and Wynn~\cite{MR2967227} considered a \defi{non-contiguous} notion of pattern containment, where $T$ contains~$P$ if $P$ is present as a ``minor'' of~$T$; see Figure~\ref{fig:notions}~(b).
They discovered the remarkable phenomenon that for any two distinct $k$-vertex pattern trees~$P$ and~$P'$, the number of $n$-vertex host trees that avoid~$P$ is the same as the number of trees that avoid~$P'$, i.e., $P$ and~$P'$ are \defi{Wilf-equivalent} patterns.
They also obtain the corresponding generating function (which is independent of~$P$, but only depends on $k$ and~$n$).

In this paper, we consider \defi{mixed} tree patterns, which generalize both of the two aforementioned types of tree patterns, by specifying separately for each edge of~$P$ whether it is considered contiguous or non-contiguous, i.e., whether its end vertices in the occurrence of the pattern must be in a parent-child or ancestor-descendant relationship (in the correct direction left/right), respectively; see Figure~\ref{fig:notions}~(c).

Observe that the notions of tree patterns considered in~\cite{MR2645188} and~\cite{MR2967227} are the tree analogues of consecutive~\cite{MR1979785} and classical permutation patterns, respectively.
Our new notion of mixed patterns is the tree analogue of vincular permutation patterns~\cite{MR1758852}, which generalize classical and consecutive permutation patterns.

\begin{figure}
\includegraphics[page=2]{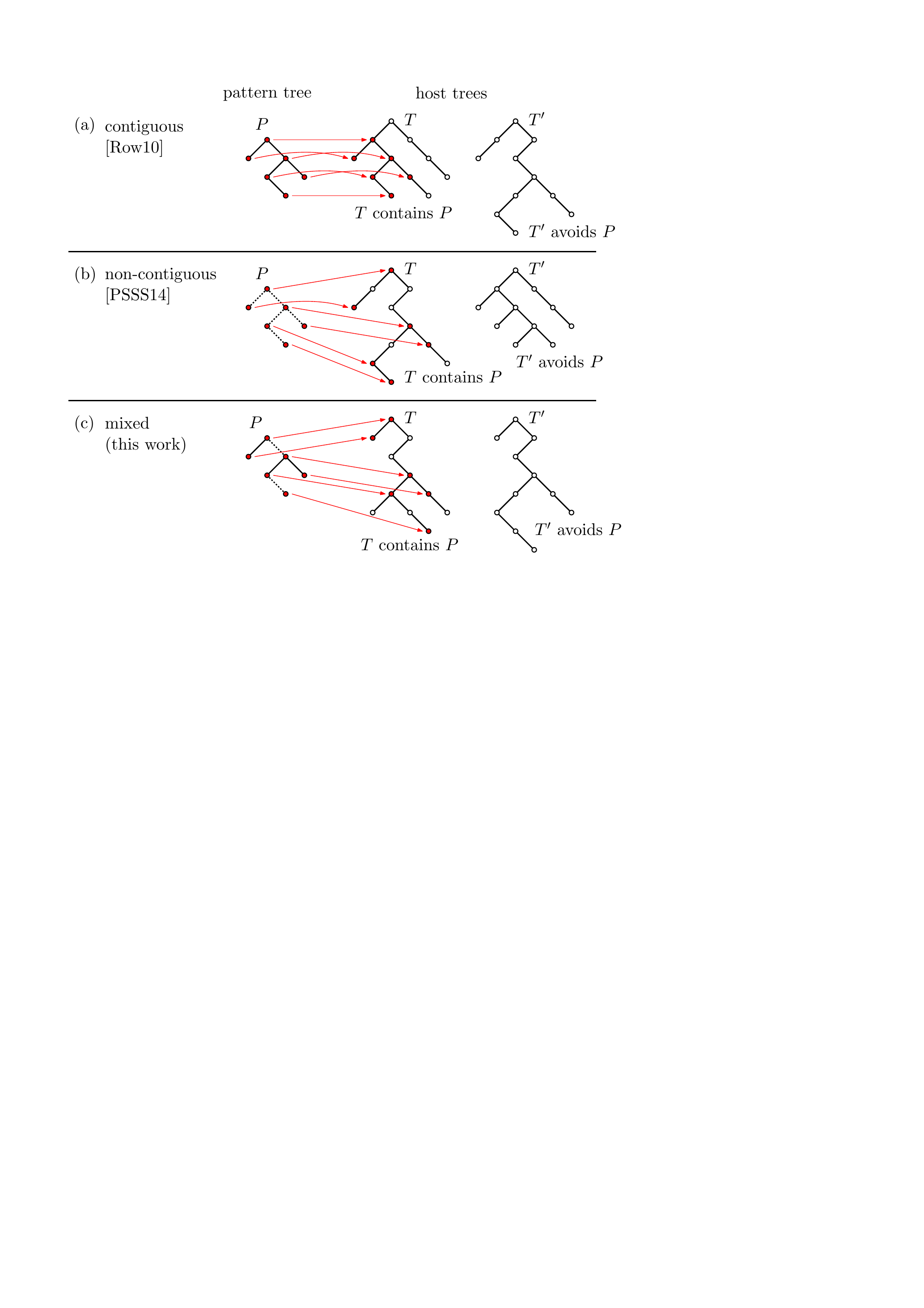}
\caption{Illustration of different notions of pattern containment in binary trees.
Contiguous edges are drawn solid, whereas non-contiguous edges are drawn dotted.}
\label{fig:notions}
\end{figure}

\subsection{The Lucas--Roelants van Baronaigien--Ruskey algorithm}

\begin{wrapfigure}{r}{0.4\textwidth}
\vspace{-6mm}
\includegraphics{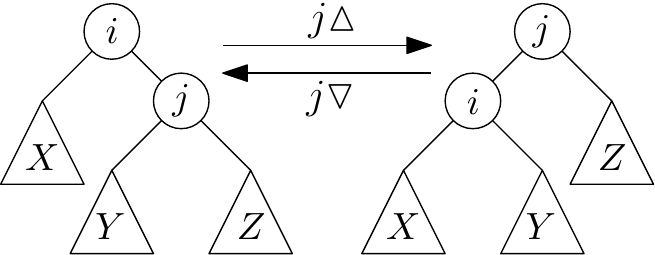}
\caption{Rotation in binary trees.}
\label{fig:rot}
\vspace{-14mm}
\end{wrapfigure}
One of the goals in this paper is to generate different classes of binary trees, i.e., we seek an algorithm that visits every tree from the class exactly once.
Our starting point is a classical result due to Lucas, Roelants van Baronaigien, and Ruskey~\cite{MR1239499}, which asserts that all $n$-vertex binary trees can be generated by tree rotations, i.e., every tree is obtained from its predecessor by a single \defi{tree rotation} operation; see Figures~\ref{fig:rot} and~\ref{fig:LRB}.
The algorithm is an instance of a \defi{combinatorial Gray code}~\cite{MR1491049,MR4649606}, which is a listing of objects such that any two consecutive objects differ in a `small local' change.
The aforementioned Gray code algorithm for binary trees can be implemented in time~$\cO(1)$ per generated tree.

\begin{figure}[h!]
\includegraphics[page=1]{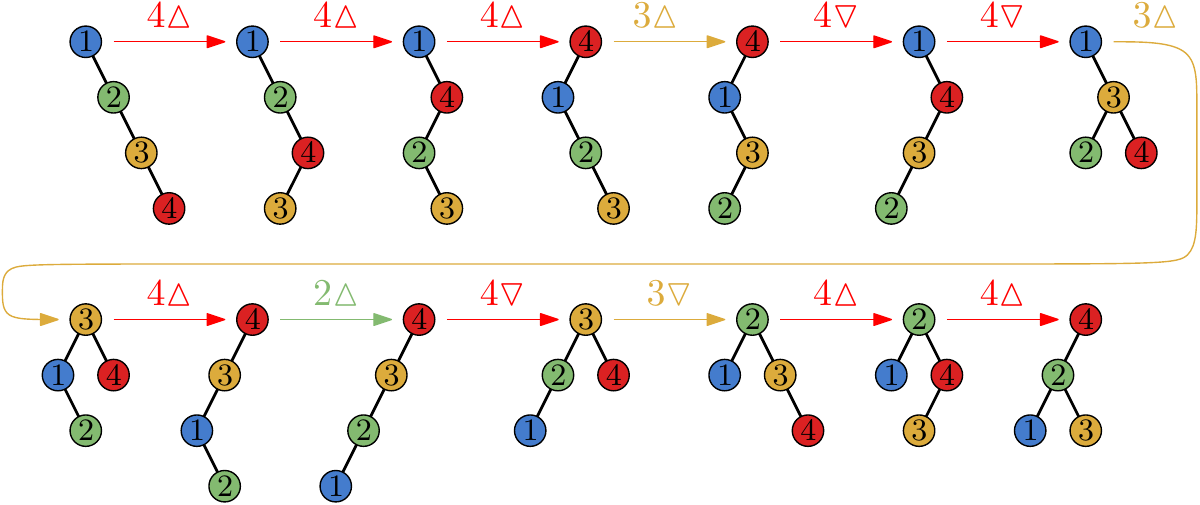}
\caption{The Lucas--Roelants van Baronaigien--Ruskey algorithm to generate all binary trees with $n=4$ vertices by tree rotations.
The vertices are labeled with $1,2,3,4$ according to the search tree property.}
\label{fig:LRB}
\end{figure}

Williams~\cite{MR3126386} discovered a stunningly simple description of the Lucas--Roelants van Baronaigien--Ruskey Gray code for binary trees via the following greedy algorithm, which is based on labeling the vertices with $1,\ldots,n$ according to the search tree property:
Start with the right path, and then repeatedly perform a tree rotation with the largest possible vertex that creates a previously unvisited tree.

\subsection{Our results}

It is well known that binary trees are in bijection with 231-avoiding permutations.
Our first contribution is to generalize this bijection, by establishing a one-to-one correspondence between mixed binary tree patterns and mesh permutation patterns, a generalization of classical permutation patterns introduced by Br{\"a}nd{\'e}n and Claesson~\cite{MR2795782}.
Specifically, we show that $n$-vertex binary trees that avoid a particular (mixed) tree pattern~$P$ are in bijection with 231-avoiding permutations that avoid a corresponding mesh pattern~$\sigma(P)$ (see Theorem~\ref{thm:bijection} below).

This bijection enables us to apply the Hartung--Hoang--M\"utze--Williams generation framework~\cite{MR4391718}, which is based on permutations.
We thus obtain algorithms for efficiently generating different classes of pattern-avoiding binary trees, which work under some mild conditions on the tree pattern(s).
These algorithms are all based on a simple greedy algorithm, which generalizes Williams' algorithm for the Lucas--Roelants van Baronaigien--Ruskey Gray code of binary trees (see Algorithm~S, Algorithm~H, and Theorems~\ref{thm:algoS} and~\ref{thm:algoH}, respectively).
Specifically, instead of tree rotations our algorithms use a more general operation that we refer to as a \defi{slide}.
We implemented our generation algorithm in~C++, and we made it available for download and experimentation on the Combinatorial Object Server~\cite{cos_btree}.

For our new notion of mixed tree patterns, we conduct a systematic investigation of all tree patterns on up to 5 vertices.
This gives rise to many counting sequences, some already present in the OEIS~\cite{oeis} and some new to it, giving rise to several interesting conjectures.
In this work we establish most of these as theorems, by proving bijections between different classes of pattern-avoiding binary trees and other combinatorial objects, in particular pattern-avoiding lattice paths (Section~\ref{sec:motzkin}) and set partitions (Theorem~\ref{thm:staggered}).

This paper is the sixth installment in a series of papers on generating a large variety of combinatorial objects by encoding them in a unified way via permutations.
This algorithmic framework was developed in~\cite{MR4391718} and so far has been applied to generate pattern-avoiding permutations~\cite[Table~1]{MR4391718}, lattice congruences of the weak order on permutations~\cite{MR4344032}, pattern-avoiding rectangulations~\cite{MR4598046}, elimination trees of graphs~\cite{MR4415121}, and acyclic orientations of graphs~\cite{MR4614413}.
The present paper thus further extends the reach of this framework to pattern-avoiding Catalan structures.
For readers familiar with elimination trees, we mention that when the underlying graph is a path with vertices labeled~$1,\ldots,n$, then its elimination trees are precisely all $n$-vertex binary trees.
Very recently, another application of the aforementioned generation framework to derive Gray codes for geometric Catalan structures, specifically staircases and squares, has been presented in~\cite{williams_et_al_23}.

\subsection{Outline of this paper}

In Section~\ref{sec:prelim} we introduce basic notions that will be used throughout the paper.
In Section~\ref{sec:enc-perm} we establish a bijection between binary trees patterns and mesh patterns.
In Section~\ref{sec:generate} we present our algorithms for generating classes of binary trees that are characterized by pattern avoidance.
In Section~\ref{sec:equality} we establish the equality between certain tree patterns that differ in few contiguous or non-contiguous edges.
In Section~\ref{sec:count} we report on our computational results on counting pattern-avoiding binary trees for all tree patterns on at most 5~vertices.
In Section~\ref{sec:bijection} we prove bijections between different classes of pattern-avoiding binary trees and other combinatorial objects, in particular pattern-avoiding lattice paths and set partitions.
In Section~\ref{sec:wilf} we present results for establishing Wilf-equivalence between tree patterns.
We conclude with some open problems in Section~\ref{sec:open}.

\section{Preliminaries}
\label{sec:prelim}

In this section we introduce a few general definitions related to binary trees, and we define our notion of pattern avoidance for those objects.

\subsection{Binary tree notions}
\label{sec:notions}

We consider binary trees whose vertex set is a set of consecutive integers~$\{i,i+1,\ldots,j\}$.
In particular, we write~$\cT_n$ for the set of binary trees with the vertex set~$[n]:=\{1,2,\ldots,n\}$.
The vertex labels of each tree are defined uniquely by the \defi{search tree property}, i.e., for any vertex~$i$, all its left descendants are smaller than~$i$ and all its right descendants are greater than~$i$.
The special empty tree with $n=0$ vertices is denoted by~$\varepsilon$, so $\cT_0=\{\varepsilon\}$.
The following definitions are illustrated in Figure~\ref{fig:preorder}.
For any binary tree~$T$, we denote the root of~$T$ by~$r(T)$.
For any vertex~$i$ of~$T$, its left and right child are denoted by~$c_L(i)$ and~$c_R(i)$, respectively, and its parent is denoted by~$p(i)$.
If~$i$ does not have a left child, a right child or a parent, then we define~$c_L(i):=\varepsilon$, $c_R(i):=\varepsilon$, or $p(i):=\varepsilon$, respectively.
Furthermore, we write~$T(i)$ for the subtree of~$T$ rooted at~$i$.
Also, we define $L(i):=T(c_L(i))$ if $c_L(i)\neq\varepsilon$ and $L(i):=\varepsilon$ otherwise, and $R(i):=T(c_R(i))$ if $c_R(i)\neq \varepsilon$ and $R(i):=\varepsilon$ otherwise.
The subtrees rooted at the left and right child of the root are denoted by~$L(T)$ and~$R(T)$, respectively, i.e., we have $L(T)=L(r(T))$, and similarly $R(T)=R(r(T))$.
A \defi{left path} is a binary tree in which no vertex has a right child.
A \defi{left branch} in a binary tree is a subtree that is isomorphic to a left path.
The notions \defi{right path} and \defi{right branch} are defined analogously, by interchanging left and right.

\begin{figure}
\includegraphics{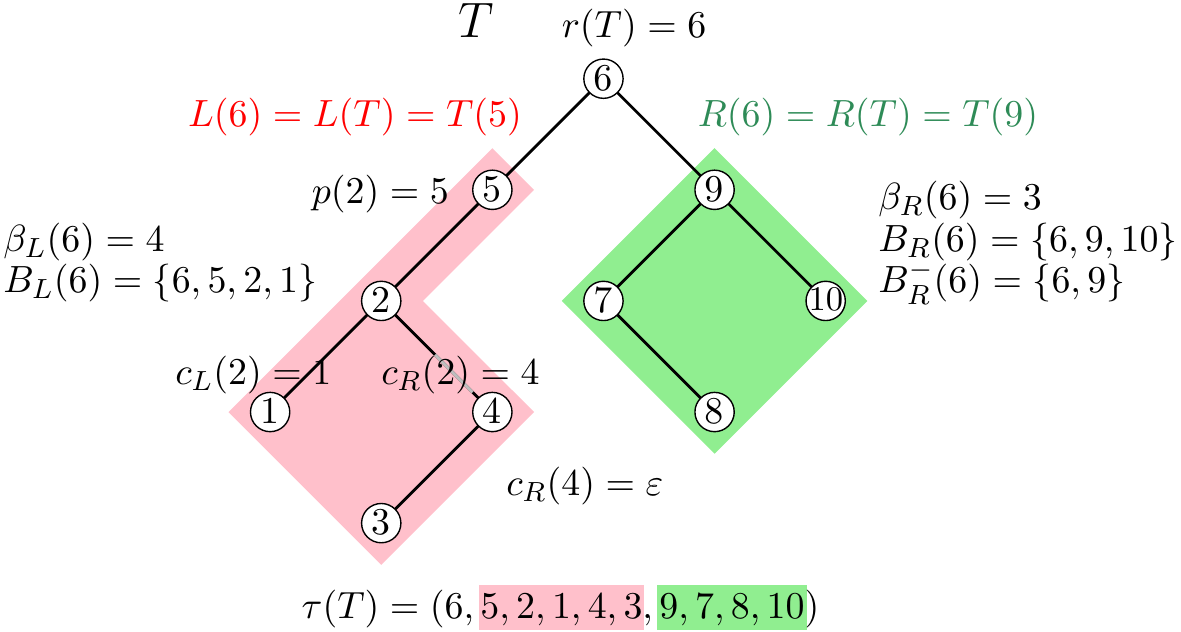}
\caption{Illustration of definitions related to binary trees.}
\label{fig:preorder}
\end{figure}

We associate~$T\in\cT_n$ with a permutation~$\tau(T)$ of~$[n]$ defined by
\begin{equation}
\label{eq:tau}
\tau(T):=\big(r(T),\tau(L(T)),\tau(R(T))\big),
\end{equation}
where the base case of the empty tree~$\varepsilon$ is defined to be the empty permutation~$\tau(\varepsilon):=\varepsilon$.
In words, $\tau(T)$ is the sequence of vertex labels obtained from a preorder traversal of~$T$, i.e., we first record the label of the root and then recursively record labels of its left subtree followed by labels of its right subtree.
Note that the right path~$T\in\cT_n$ satisfies~$\tau(T)=\ide_n$, the identity permutation.

For any vertex~$i$ we let~$\beta_L(i)$ and~$\beta_R(i)$ denote the number of vertices on the left branch or right branch, respectively, starting at~$i$, with the special cases $\beta_L(\varepsilon):=0$ and $\beta_R(\varepsilon):=0$.
We also define~$B_L(i):=\{c_L^{j-1}(i)\mid j=1,\ldots,\beta_L(i)\}$ and~$B_R(i):=\{c_R^{j-1}(i)\mid j=1,\ldots,\beta_R(i)\}$ as the corresponding sets of vertices on this branch.
Lastly, we define~$B_R^-(i):=B_R(i)\setminus c_R^{\beta_R(i)-1}(i)$, i.e., all vertices on the right branch except the last one.

In all the functions defined before that take a vertex~$i$ as an argument, the tree~$T$ containing~$i$ can be inferred from the context, and is omitted as an additional argument to avoid cluttering notation.

\subsection{Pattern-avoiding binary trees}
\label{sec:pat-tree}

\begin{figure}[b!]
\includegraphics{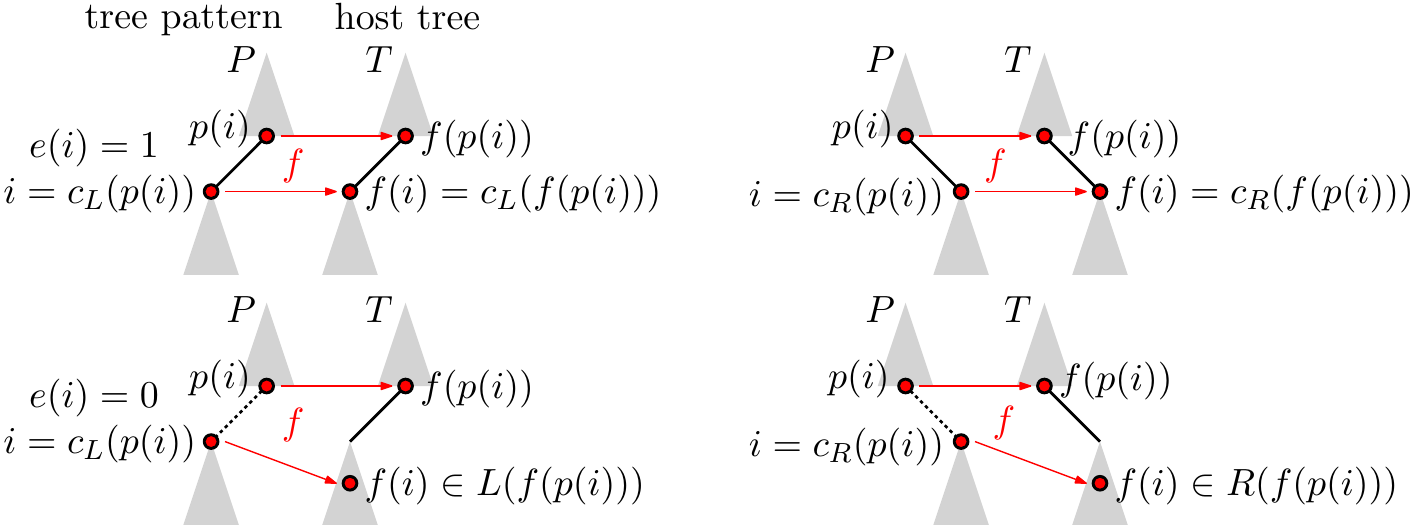}
\caption{Illustration of our notion of pattern containment in binary trees.}
\label{fig:contains}
\end{figure}

Our notion of pattern avoidance in binary trees generalizes the two distinct notions considered in~\cite{MR2645188} and~\cite{MR2967227} (recall Figure~\ref{fig:notions}).
This definition is illustrated in Figure~\ref{fig:contains}.
A \defi{tree pattern} is a pair~$(P,e)$ where~$P\in\cT_k$ and $e\colon [k]\setminus r(P) \rightarrow \{0,1\}$.
For any vertex~$i\in[k]\setminus r(P)$, a value~$e(i)=0$ is interpreted as the edge leading from~$i$ to its parent~$p(i)$ being non-contiguous, whereas a value~$e(i)=1$ is interpreted as this edge being contiguous.
In our figures, edges~$(i, p(i))$ in~$P$ with $e(i)=1$ are drawn solid, and edges with $e(i)=0$ are drawn dotted.
Formally, a tree~$T\in\cT_n$ \defi{contains} the pattern~$(P,e)$ if there is an injective mapping $f\colon [k]\rightarrow [n]$ satisfying the following conditions:
\begin{enumerate}[label=(\roman*),leftmargin=8mm, noitemsep, topsep=1pt plus 1pt]
\item  For every edge~$(i,p(i))$ of~$P$ with~$e(i)=1$, we have that $f(i)$ is a child of~$f(p(i))$ in~$T$.
Specifically, if~$i=c_L(p(i))$ then~$f(i)$ is the left child of~$f(p(i))$, i.e., we have $f(i)=c_L(f(p(i)))$, whereas if~$i=c_R(p(i))$ then~$f(i)$ is the right child of~$f(p(i))$, i.e., we have $f(i)=c_R(f(p(i)))$.
\item For every edge~$(i,p(i))$ of~$P$ with~$e(i)=0$, we have that $f(i)$ is a descendant of~$f(p(i))$ in~$T$.
Specifically, if~$i=c_L(p(i))$, then~$f(i)$ is a left descendant of~$f(p(i))$, i.e., we have $f(i)\in L(f(p(i)))$, whereas if~$i=c_R(p(i))$, then~$f(i)$ is a right descendant of~$f(p(i))$, i.e., we have $f(i)\in R(f(p(i)))$.
\end{enumerate}
We can retrieve the notions of contiguous and non-contiguous pattern containment used in~\cite{MR2645188} and~\cite{MR2967227} as special cases by defining~$e(i):=1$ for all $i\in[k]\setminus r(P)$, or $e(i):=0$ for all $i\in[k]\setminus r(P)$, respectively.

If $T$ does not contain~$(P,e)$, then we say that $T$ \defi{avoids}~$(P,e)$.
Furthermore, we define the set of binary trees with~$n$ vertices that avoid the pattern~$(P,e)$ as
\[ \cT_n(P,e) := \{T \in \cT_n \mid T \text{ avoids } (P, e)\}. \]
Note that $\cT_0(P,e)=\{\varepsilon\}$ for any nonempty tree pattern~$(P,e)$.
For avoiding multiple patterns $(P_1,e_1),\ldots,(P_\ell,e_\ell)$ simultaneously, we define
\[ \cT_n\bigl((P_1,e_1),\ldots,(P_\ell, e_\ell)\bigl) := \bigcap\nolimits_{i=1}^\ell \cT_n(P_i,e_i). \]

Clearly, the set of binary trees that avoids a tree pattern~$(P,e)$, $P\in \cT_k$, is monotonously non-decreasing in~$e$, i.e., if~$e(i) \leq e'(i)$ for every vertex~$i \in [k]\setminus r(P)$, then~$\cT_n(P,e) \seq \cT_n(P,e')$.

Given a tree pattern~$(P,e)$ and a vertex~$i$ in~$P$, we sometimes consider the induced subpattern~$(P(i),e_{P(i)})$, where $e_{P(i)}$ denotes the restriction of~$e$ to the vertex set of~$P(i)\setminus i$.

We often write a tree pattern~$(P,e)$, $P\in\cT_k$, in compact form as a pair~$\big(\tau(P),(e(\tau_2),\ldots,e(\tau_k))\big)$ where $\tau(P)=(\tau_1,\tau_2,\ldots,\tau_k)$; see Figure~\ref{fig:mirror}.
In words, the tree~$P$ is specified by the preorder permutation~$\tau(P)$, and the function~$e$ is specified by the sequence of values for all vertices except the root in the preorder sequence, i.e., this sequence has length~$k-1$.

For any tree pattern~$(P,e)$, we write~$\mu(P,e)$ for the pattern obtained by mirroring the tree, i.e., by changing left and right.
Note that the mirroring operation changes the vertex labels so that the search tree property is maintained, specifically the vertex~$i$ becomes~$n+1-i$.
Trivially, we have~$\cT_n(\mu(P,e))=\mu(\cT_n(P,e))$, in particular $(P,e)$ and~$\mu(P,e)$ are Wilf-equivalent.

\begin{figure}[h!]
\includegraphics[page=1]{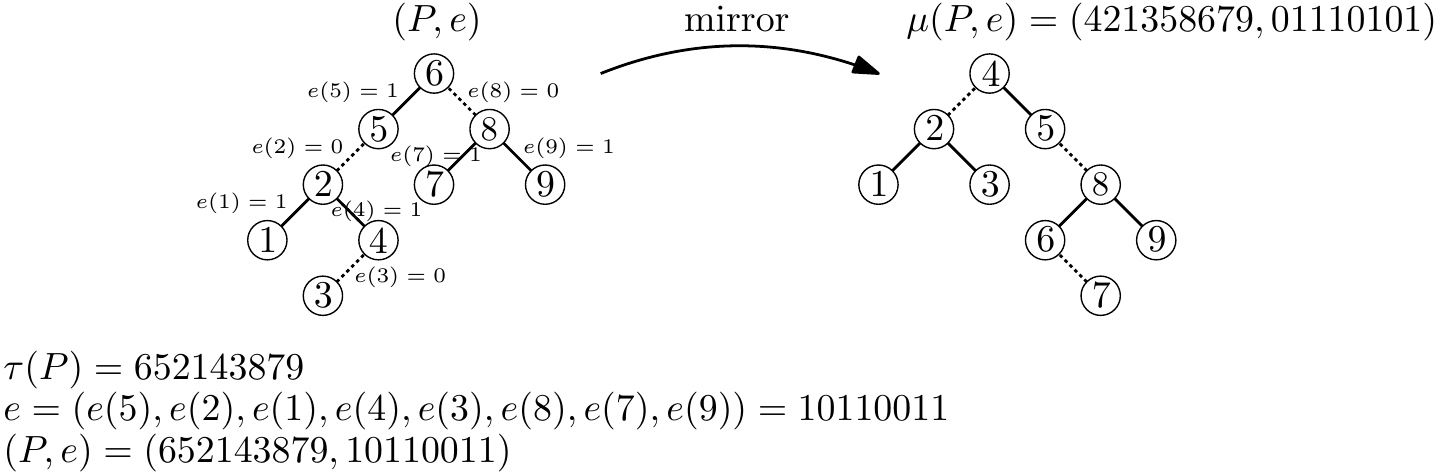}
\caption{Compact encoding of binary tree patterns and mirroring operation.}
\label{fig:mirror}
\end{figure}

\section{Encoding binary trees by permutations}
\label{sec:enc-perm}

In this section we establish that avoiding a tree pattern in binary trees is equivalent to avoiding a corresponding permutation mesh pattern in 231-avoiding permutations (Theorem~\ref{thm:bijection} below).

\subsection{Pattern-avoiding permutations}
\label{sec:pat-perm}

We write~$S_n$ for the set of all permutations of~$[n]$.
Given two permutations~$\pi\in S_n$ and~$\tau\in S_k$, we say that \defi{$\pi$ contains~$\tau$ as a pattern} if there is a sequence of indices~$\nu_1<\cdots<\nu_k$, such that $\pi(\nu_1),\ldots,\pi(\nu_k)$ are in the same relative order as~$\tau=\tau(1),\ldots,\tau(k)$.
If $\pi$ does not contain~$\tau$, then we say that \defi{$\pi$ avoids~$\tau$}.
We write $S_n(\tau)$ for the permutations from~$S_n$ that avoid the pattern~$\tau$.
More generally, for multiple patterns~$\tau_1, \ldots, \tau_\ell$ we define~$S_n(\tau_1, \ldots, \tau_\ell):=\bigcap_{i=1}^\ell S_n(\tau_i)$, i.e., this is the set of permutations of length~$n$ that avoid each of the patterns~$\tau_1, \ldots, \tau_\ell$.

It is well known that preorder traversals of binary trees are in bijection with 231-avoiding permutations (see, e.g.~\cite{DBLP:journals/cacm/Knott77}).

\begin{lemma}
\label{lem:231-tree}
The mapping~$\tau\colon \cT_n \rightarrow S_n(231)$ defined in~\eqref{eq:tau} is a bijection.
\end{lemma}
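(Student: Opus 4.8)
The plan is to argue by induction on $n$, working slightly more generally with binary search trees whose vertex set is an arbitrary interval of consecutive integers $\{i,i+1,\ldots,j\}$. The recursive definition~\eqref{eq:tau} makes such an induction natural: if $T\in\cT_n$ has root $m:=r(T)$, then by the search tree property $L(T)$ and $R(T)$ are again search trees of this form, on the vertex sets $\{1,\ldots,m-1\}$ and $\{m+1,\ldots,n\}$, and $\tau(T)=\big(m,\tau(L(T)),\tau(R(T))\big)$. The base case $n=0$, where $\tau$ maps the empty tree to the empty permutation, is trivial.

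First I would isolate the one structural fact that does all the work. \emph{Claim:} if $\pi$ is a $231$-avoiding permutation of an interval $I$ and $m:=\pi(1)$, then in $\pi$ every value of $I$ smaller than $m$ precedes every value of $I$ larger than $m$. Indeed, otherwise some value $b>m$ would occur before some value $a<m$, and then the three positions carrying $m$, $b$, $a$ (in that order) would form a $231$ pattern. Hence such a $\pi$ decomposes uniquely as $\pi=(m,\pi_L,\pi_R)$, where $\pi_L$ is a permutation of $\{\min I,\ldots,m-1\}$ and $\pi_R$ is a permutation of $\{m+1,\ldots,\max I\}$, and both $\pi_L$ and $\pi_R$ are themselves $231$-avoiding, being order-isomorphic to subsequences of $\pi$. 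Conversely, for any search tree $T$ with root $m$ the definition~\eqref{eq:tau} already exhibits $\tau(T)$ in exactly this block form, with $\pi_L=\tau(L(T))$ and $\pi_R=\tau(R(T))$.

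With this in hand, the three things to verify are routine inductions. \textbf{Well-definedness}, i.e.\ $\tau(T)\in S_n(231)$: a hypothetical $231$ pattern in $\tau(T)=(m,\tau(L(T)),\tau(R(T)))$ cannot use the leading entry $m$, since that would require a value smaller than $m$ occurring after a value larger than $m$, both after $m$ — impossible, as the larger value and everything after it lie in the right block; and it cannot straddle the two blocks, since every left-block value is below every right-block value. So it lies entirely within one block, contradicting the induction hypothesis. \textbf{Injectivity}: from $\tau(T)$ one reads off $r(T)$ as the first entry, and then the preorder words $\tau(L(T))$ and $\tau(R(T))$ as the uniquely determined initial and final blocks (of lengths $r(T)-1$ and $n-r(T)$); hence $\tau(T)=\tau(T')$ forces equal roots and, applying the induction hypothesis to the subtrees, $T=T'$. \textbf{Surjectivity}: given $\pi\in S_n(231)$, write $\pi=(m,\pi_L,\pi_R)$ via the Claim, use the induction hypothesis to obtain trees $T_L$ on $\{1,\ldots,m-1\}$ and $T_R$ on $\{m+1,\ldots,n\}$ with $\tau(T_L)=\pi_L$ and $\tau(T_R)=\pi_R$, and let $T$ be the tree with root $m$ having $T_L$ and $T_R$ as left and right subtrees; this $T$ satisfies the search tree property and $\tau(T)=\pi$ by~\eqref{eq:tau}.

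There is no genuine obstacle here; the only point that needs a little care is the bookkeeping of vertex labels — the subtrees are search trees on sub-intervals rather than on $[\,m-1\,]$ — which is exactly why I would phrase the induction over all intervals of consecutive integers rather than only over $[n]$. (Alternatively one can carry the argument on $[n]$ and simply note at each step that $\tau(L(T))$, $\tau(L(T'))$, etc., are order-isomorphic to honest permutations of shorter length and invoke the statement for those.)
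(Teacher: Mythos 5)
Your proof is correct. The paper does not actually prove this lemma — it states it as a well-known fact with a citation to Knott — so there is no in-paper argument to compare against; your induction via the block decomposition $\pi=(m,\pi_L,\pi_R)$ of a 231-avoiding permutation is the standard argument and is exactly the structural fact the paper itself relies on implicitly (e.g., in the case analysis of the proof of Theorem~\ref{thm:bijection}, where $\tau(T)=\big(r(T),\tau(L(T)),\tau(R(T))\big)$ is split into these blocks). Your care about subtrees living on sub-intervals of consecutive integers rather than on $[m-1]$ matches the paper's convention that vertex labels are determined by the search tree property, so the bookkeeping is sound.
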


\subsection{Mesh patterns}
\label{sec:mesh-pat}

Mesh patterns were introduced by Br{\"a}nd{\'e}n and Claesson~\cite{MR2795782}, and they generalize classical permutation patterns discussed in the previous section.
We recap the required definitions; see Figure~\ref{fig:mesh}.
The \defi{grid representation} of a permutation~$\pi \in S_n$ is defined as~$G(\pi) := \{(i,\pi(i)) \mid i \in [n]\}$.
Graphically, this is the permutation matrix corresponding to~$\pi$.

A \defi{mesh pattern} is a pair~$\sigma:=(\tau,C)$, where $\tau\in S_k$ and $C \seq \{0,\ldots,k\}\times \{0,\ldots,k\}$.
In our figures, we depict $\sigma$ by the grid representation of~$\tau$, and we shade all unit squares~$[i,i+1]\times [j,j+1]$ for which~$(i,j)\in C$.
A permutation~$\pi \in S_n$ \defi{contains} the mesh pattern~$\sigma=(\tau,C)$, if there is a sequence of indices~$\nu_1<\cdots<\nu_k$ such that the following two conditions hold:
\begin{enumerate}[label=(\roman*),leftmargin=8mm, noitemsep, topsep=1pt plus 1pt]
\item The entries of~$\pi(\nu_1),\ldots,\pi(\nu_k)$ are in the same relative order as~$\tau=\tau(1),\ldots,\tau(k)$.
\item We let $\lambda_1<\cdots<\lambda_k$ be the values~$\pi(\nu_1),\ldots,\pi(\nu_k)$ sorted in increasing order.
For all pairs~$(i,j) \in C$, we require that~$G(\pi)\cap R_{i,j}=\emptyset$, where $R_{i,j}$ is the rectangular open set defined as $R_{i,j}:=(\nu_i,\nu_{i+1}) \,\times\, (\lambda_j,\lambda_{j+1})$, using the sentinel values $\nu_0:=\lambda_0:= 0$ and~$\nu_{k+1}=\lambda_{k+1}:=n+1$.
\end{enumerate}
The first condition requires a match of the classical pattern~$\tau$ in~$\pi$.
The second condition requires that~$G(\pi)$ has no point in any of the regions~$R_{i,j}$ that correspond to the shaded cells~$C$ of the pattern.
Thus, the classical pattern~$\tau\in S_k$ is the mesh pattern~$(\tau,\emptyset)$.

\begin{figure}[h!]
\includegraphics{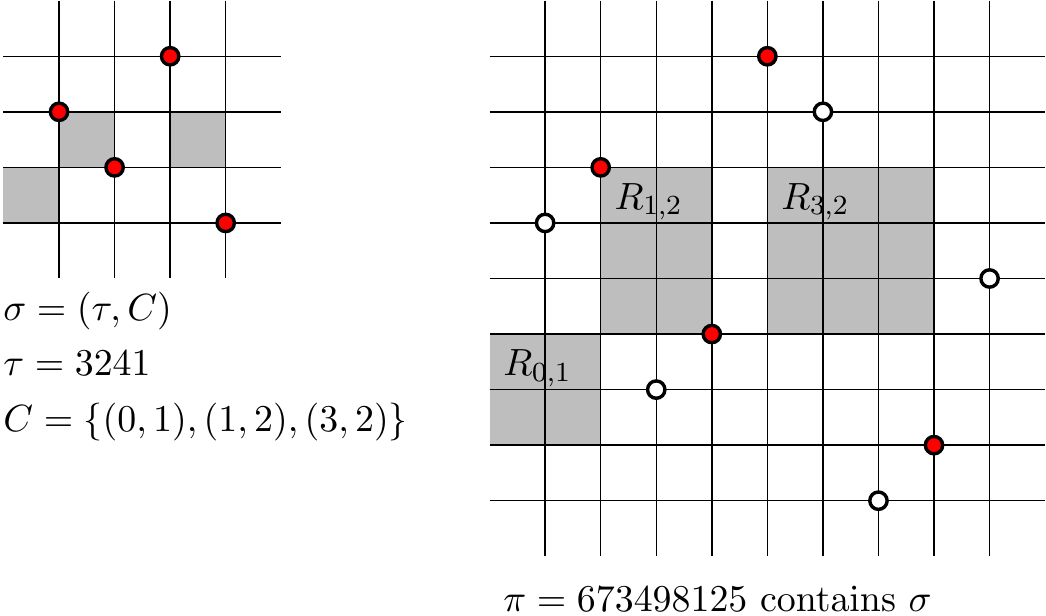}
\caption{Illustration of mesh pattern containment.}
\label{fig:mesh}
\end{figure}

\subsection{From binary tree patterns to mesh patterns}
\label{sec:tree-mesh}

In the following, for a given tree pattern~$(P,e)$, $P\in \cT_k$, we construct a permutation mesh pattern~$\sigma(P,e)=(\tau(P),C)$, consisting of the permutation~$\tau(P)$ obtained by a preorder traversal of the tree~$P$ and a set of shaded cells~$C$.
These definitions are illustrated in Figures~\ref{fig:sigmaPe1} and~\ref{fig:sigmaPe2}.
We consider the inverse permutation of~$\tau(P)\in S_k$, which we abbreviate to~$\rho:=\tau(P)^{-1}\in S_k$.
The permutation~$\rho$ gives the position of each vertex in the preorder traversal~$\tau(P)$ of~$P$.
Recall the definition of the set~$B_R^-(i)$ given in Section~\ref{sec:notions}.
For any vertex~$i\in[k]$ we define
\begin{subequations}
\label{eq:sigmaPe}
\begin{equation}
\label{eq:Ci}
C_i:=\big\{(\rho(i)-1,j)\mid j\in B_R^-(i)\big\},
\end{equation}
and for any $i\in[k]\setminus r(P)$ we define
\begin{equation}
\label{eq:Cip}
C_i':=\begin{cases} \emptyset & \text{if } e(i)=0, \\
\Big\{\big(\rho(i)-1,\min P(i)-1\big),\,\big(\rho(i)-1,\max P(i)\big)\Big\} & \text{if } e(i)=1.
\end{cases}
\end{equation}
Then the mesh pattern~$\sigma(P,e)$ corresponding to the tree pattern~$(P,e)$ is defined as
\begin{equation}
\sigma(P,e):=\Big(\tau(P),\,\bigcup\nolimits_{i\in[k]} C_i\cup \bigcup\nolimits_{i\in[k]\setminus r(P)} C_i'\Big).
\end{equation}
\end{subequations}
In words, for any two vertices (not necessarily distinct and not necessarily forming an edge) on a maximal right branch, such that neither of them is the last vertex on the branch, we shade the cell directly left of the smaller vertex and directly above the larger vertex, and for every edge~$(i,p(i))$ with $e(i)=1$ we shade two additional cells to the left and bottom/top of the submatrix corresponding to the subtree~$P(i)$.

\begin{figure}
\includegraphics[page=1]{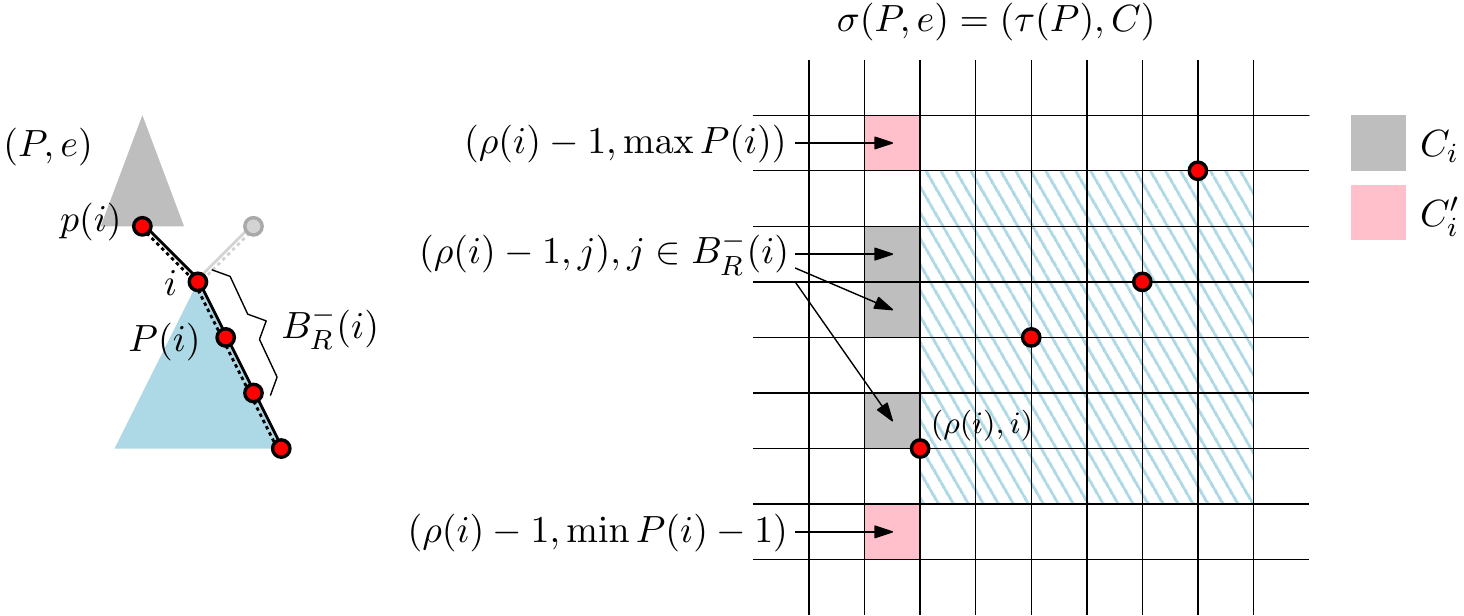}
\caption{Schematic illustration of the definition of the mesh pattern~$\sigma(P,e)$ for a tree pattern~$(P,e)$.
The edges of the tree~$P$ can be contiguous or non-contiguous, and are therefore drawn half solid and half dotted.
In the tree shown in the figure, $i$ is the right child of~$p(i)$, but it might also be the left child of~$p(i)$ (faint lines).
On the right, the shaded cells belong to the mesh pattern, and the hatched region corresponds to the submatrix given by the subtree~$P(i)$.
}
\label{fig:sigmaPe1}
\end{figure}

\begin{figure}
\includegraphics[page=2]{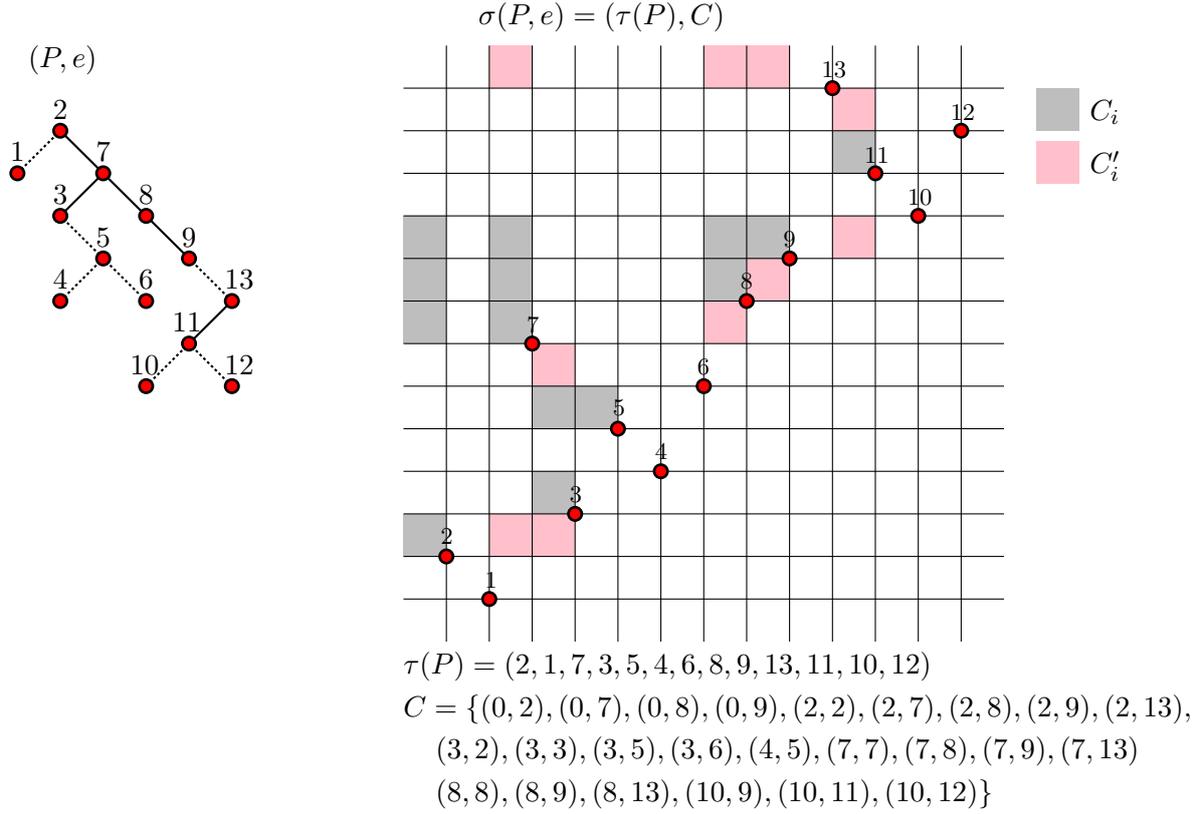}
\caption{Specific example of the mesh pattern~$\sigma(P,e)$ corresponding to a tree pattern~$(P,e)$.}
\label{fig:sigmaPe2}
\end{figure}

The following generalization of Lemma~\ref{lem:231-tree} is the main result of this section.
Our theorem also generalizes Theorem~12 from~\cite{PSSS14}, which is obtained as the special case when all edges of~$P$ are non-contiguous, i.e., $e(i)=0$ for all $i\in[k]\setminus r(P)$.

\begin{theorem}
\label{thm:bijection}
For any tree pattern~$(P,e)$, $P\in \cT_k$, consider the mesh pattern~$\sigma(P,e)=(\tau(P),C)$ defined in~\eqref{eq:sigmaPe}.
Then the mapping~$\tau\colon \cT_n(P,e) \rightarrow S_n(231,\sigma(P,e))$ is a bijection.
\end{theorem}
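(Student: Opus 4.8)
The plan is to build on Lemma~\ref{lem:231-tree}, which already tells us that $\tau\colon\cT_n\to S_n(231)$ is a bijection. Since $\cT_n(P,e)\seq\cT_n$ and $S_n(231,\sigma(P,e))\seq S_n(231)$, it suffices to prove that for every $T\in\cT_n$, the tree $T$ contains the tree pattern~$(P,e)$ if and only if the permutation~$\pi:=\tau(T)$ contains the mesh pattern~$\sigma(P,e)$. The bijection then restricts to the stated bijection between the avoidance classes. So the whole argument reduces to an ``if and only if'' statement about a single fixed $T$.

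To set up the correspondence, first I would record the dictionary between the structure of $T$ and the grid representation $G(\pi)$ of $\pi=\tau(T)$. The key facts are: (a) the preorder position of a vertex $v$ in $T$ equals its $x$-coordinate in $G(\pi)$, while $v$ itself (its search-tree label) is its $y$-coordinate; (b) $w$ is a descendant of $v$ in $T$ iff $w$ lies in a contiguous block of preorder positions starting right after $v$ (this block being exactly the vertex set of $T(v)$), which in $G(\pi)$ means the point for $w$ lies in a certain axis-parallel rectangle determined by $v$; and (c) the finer relations ``$w$ is a left (resp.\ right) descendant of $v$'' and ``$w=c_L(v)$'' (resp.\ ``$w=c_R(v)$'') translate into $w<v$ (resp.\ $w>v$) together with position constraints, and for the child relations into the \emph{absence} of intervening points of $G(\pi)$ in a suitable region. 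This last point is where the $231$-avoidance of $\pi$ is used crucially: in a $231$-avoiding permutation, $f(i)=c_L(f(p(i)))$ is equivalent to $f(i)$ being the left descendant of $f(p(i))$ that is largest / closest in preorder with no point of $G(\pi)$ strictly between them in the relevant cell — precisely the ``empty rectangle'' condition, and analogously on the right. I would state these as a couple of small observations (or cite the analogous reasoning already implicit in Lemma~\ref{lem:231-tree} and in~\cite{PSSS14}) before the main argument.

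Given the dictionary, the proof is a direct unpacking of definitions in both directions. Suppose $T$ contains $(P,e)$ via an injection $f\colon[k]\to[n]$. Let $\nu_j:=$ preorder position in $T$ of $f(\tau(P)_j)$ for $j=1,\dots,k$; I would check that $\nu_1<\cdots<\nu_k$ and that $\pi(\nu_1),\dots,\pi(\nu_k)$ is order-isomorphic to $\tau(P)$ — this is just the statement that $f$ preserves the ancestor/left/right structure that preorder and search-tree labels encode, so the occurrence of the classical pattern $\tau(P)$ follows from conditions~(i)--(ii) of pattern containment. Then I would verify the shading conditions: for the cells $C_i$ of~\eqref{eq:Ci}, the claim is that any point of $G(\pi)$ in $R_{\rho(i)-1,\,\cdot\,}$ at height between two right-branch vertices $<$ the larger one would force an extra descendant contradicting the structure of $T(f(p(i)))$ along that right branch; for the cells $C_i'$ of~\eqref{eq:Cip} when $e(i)=1$, emptiness of the two rectangles flanking the block $P(i)$ is exactly the assertion that $f(i)$ is the \emph{child} (not merely a descendant) of $f(p(i))$ in the correct direction, again via $231$-avoidance. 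Conversely, if $\pi$ contains $\sigma(P,e)$ via indices $\nu_1<\cdots<\nu_k$, I would \emph{define} $f(\tau(P)_j):=$ the vertex of $T$ at preorder position $\nu_j$ and run the same equivalences backwards: the order-isomorphism gives the ancestor/left/right relations (so condition~(ii) holds for every edge, and the left/right refinement is correct), and emptiness of the $C_i'$ rectangles upgrades ``descendant'' to ``child'' for the edges with $e(i)=1$, giving condition~(i). The $C_i$ cells are needed to make the backward direction's ancestor relations genuinely nest the way $T(f(p(i)))$ requires, i.e.\ to guarantee that $f$ lands inside a single subtree rather than straddling a right branch.

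The main obstacle, as usual with this kind of encoding theorem, is getting the geometry of the shaded regions exactly right — in particular, confirming that the two ``bookend'' cells $\big(\rho(i)-1,\min P(i)-1\big)$ and $\big(\rho(i)-1,\max P(i)\big)$ in~\eqref{eq:Cip} together with the right-branch cells $C_i$ capture precisely the parent-child relation and nothing stronger or weaker, and that the sentinel conventions $\nu_0=\lambda_0=0$, $\nu_{k+1}=\lambda_{k+1}=n+1$ interact correctly with boundary cases (the root, vertices with no left or no right subtree, right branches of length one where $B_R^-(i)=\emptyset$). I would handle this by carefully tracking, for a vertex $i$ of $P$ with $\rho(i)-1$ fixed, which $y$-intervals $(\lambda_j,\lambda_{j+1})$ the cells $C_i$ and $C_i'$ pick out, and matching them against the preorder block structure of $T$; the identity ``$\min P(i)-1$ and $\max P(i)$ are consecutive preorder labels bracketing the block of $P(i)$'' is what makes the two bookend cells translate into ``no point of $G(\pi)$ between $f(p(i))$'s position and the block $f(P(i))$,'' i.e.\ the child relation. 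Everything else is a routine, if slightly tedious, double inclusion.
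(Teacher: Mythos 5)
Your overall reduction (via Lemma~\ref{lem:231-tree}, it suffices to show that $T$ contains $(P,e)$ iff $\tau(T)$ contains $\sigma(P,e)$) matches the paper, and your forward direction is essentially fine. The genuine gap is in the backward direction, and it sits exactly where the theorem's content lies. You assert that ``the order-isomorphism gives the ancestor/left/right relations (so condition~(ii) holds for every edge)''. This is true for left edges (in a preorder word of a binary search tree, a later and smaller entry is automatically a left descendant), but it is false for right edges: in the host tree with root~$2$, left child~$1$ and right child~$3$ (preorder $2,1,3$), the entries $1,3$ form an occurrence of the classical pattern $12$, yet $3$ is not a right descendant of~$1$. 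Enforcing the right-descendant relations is precisely what the cells $C_i$ are for, and your only treatment of this is the closing sentence that the $C_i$ cells ``guarantee that $f$ lands inside a single subtree rather than straddling a right branch'' --- an assertion of the needed conclusion, not an argument. The difficulty is that if $f(i)$ fails to be a right descendant of $f(p(i))$, the natural witness (the lowest common ancestor of the two images) lies at a column that is generally \emph{not} in the single column-gap $\rho(p(i))-1$ where the cells of $C_{p(i)}$ live, so a purely edge-local reading of the shading does not close the argument; one needs either a global argument (locating the violating host vertex in a shaded cell attached to a different pattern vertex, or in a $231$-occurrence together with matched points) or the paper's route: an induction on $T$ via the decomposition $\tau(T)=(r(T),\tau(L(T)),\tau(R(T)))$ with four cases, where the straddling configuration is killed in Case~(d) by the cell $(0,r_j)\in C$ with the host root as the offending point, and the contiguous edges at the root are handled in Case~(c) by showing the shaded cells force the sub-occurrences to start at the first elements of $\tau(L(T))$ and $\tau(R(T))$.

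A second, related inaccuracy: you claim the two bookend cells of~\eqref{eq:Cip} alone ``translate into no point of $G(\pi)$ between $f(p(i))$'s position and the block $f(P(i))$'', i.e.\ upgrade descendant to child. They do not by themselves: they only forbid host points in two specific value strips (just below $\min P(i)$ and just above $\max P(i)$); excluding host points whose values fall in the interior gaps of the block, or below the lower bookend strip, additionally requires the cells $C_i$ along the right branch of the subtree's root together with the $231$-avoidance of the host (matched points supply the needed $231$-occurrences). These facts can be proved, so a direct, non-inductive proof in the spirit of your ``dictionary'' is salvageable, but as written the proposal omits the arguments for exactly the two places where the mesh shading does real work; by contrast the paper's induction localizes all of this to the root, where the geometry is simple.
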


This theorem extends naturally to avoiding multiple tree patterns~$(P_1,e_1),\ldots,(P_\ell,e_\ell)$, i.e., $\tau\colon \cT_n((P_1,e_1),\ldots,(P_\ell,e_\ell))\rightarrow S_n(231,\sigma(P_1,e_1),\ldots,\sigma(P_\ell,e_\ell))$ is a bijection.

\begin{proof}
By Lemma~\ref{lem:231-tree}, $\tau\colon \cT_n\rightarrow S_n(231)$ is a bijection.
Consequently, it suffices to show that $T\in \cT_n$ contains the tree pattern~$(P,e)$ if and only if~$\tau(T)\in S_n(231)$ contains the mesh pattern~$\sigma(P,e)$.

Using the definitions of tree patterns and mesh patterns from Sections~\ref{sec:pat-tree} and~\ref{sec:mesh-pat}, respectively, and combining them with ~\eqref{eq:sigmaPe}, a straightforward induction shows that if~$T\in\cT_n$ contains the tree pattern~$(P,e)$, then~$\tau(T)$ contains the mesh pattern~$\sigma(P,e)$.
For this argument we also use that in the mesh pattern~$\sigma(P,e)=(\tau(P),C)$, none of the four corner cells is shaded, i.e., we have $(0,0),(0,k),(k,0),(k,k)\notin C$.

It remains to show that if $\tau(T)$ for $T\in\cT_n$ contains the mesh pattern~$\sigma(P,e)=(\tau(P),C)$, then $T$ contains the tree pattern~$(P,e)$.
This means that there are indices $\nu_1<\cdots<\nu_k$ satisfying conditions~(i) and~(ii) stated in Section~\ref{sec:mesh-pat}.
We define the abbreviation~$\pi:=\tau(T)$.
Let $\lambda_1<\cdots<\lambda_k$ be the values~$\pi(\nu_1),\ldots,\pi(\nu_k)$ sorted in increasing order, and let $Q:=\{\lambda_i\mid i \in [k]\}$ be the corresponding set of values of~$\tau(T)$ that correspond to this occurrence of the mesh pattern~$\sigma(P,e)$.
From~\eqref{eq:tau}, we have~$\tau(T)=\big(r(T),\tau(L(T)),\tau(R(T))\big)\in S_n(231)$, and so we consider the following four cases, illustrated in Figure~\ref{fig:proof}.

\begin{figure}
\includegraphics{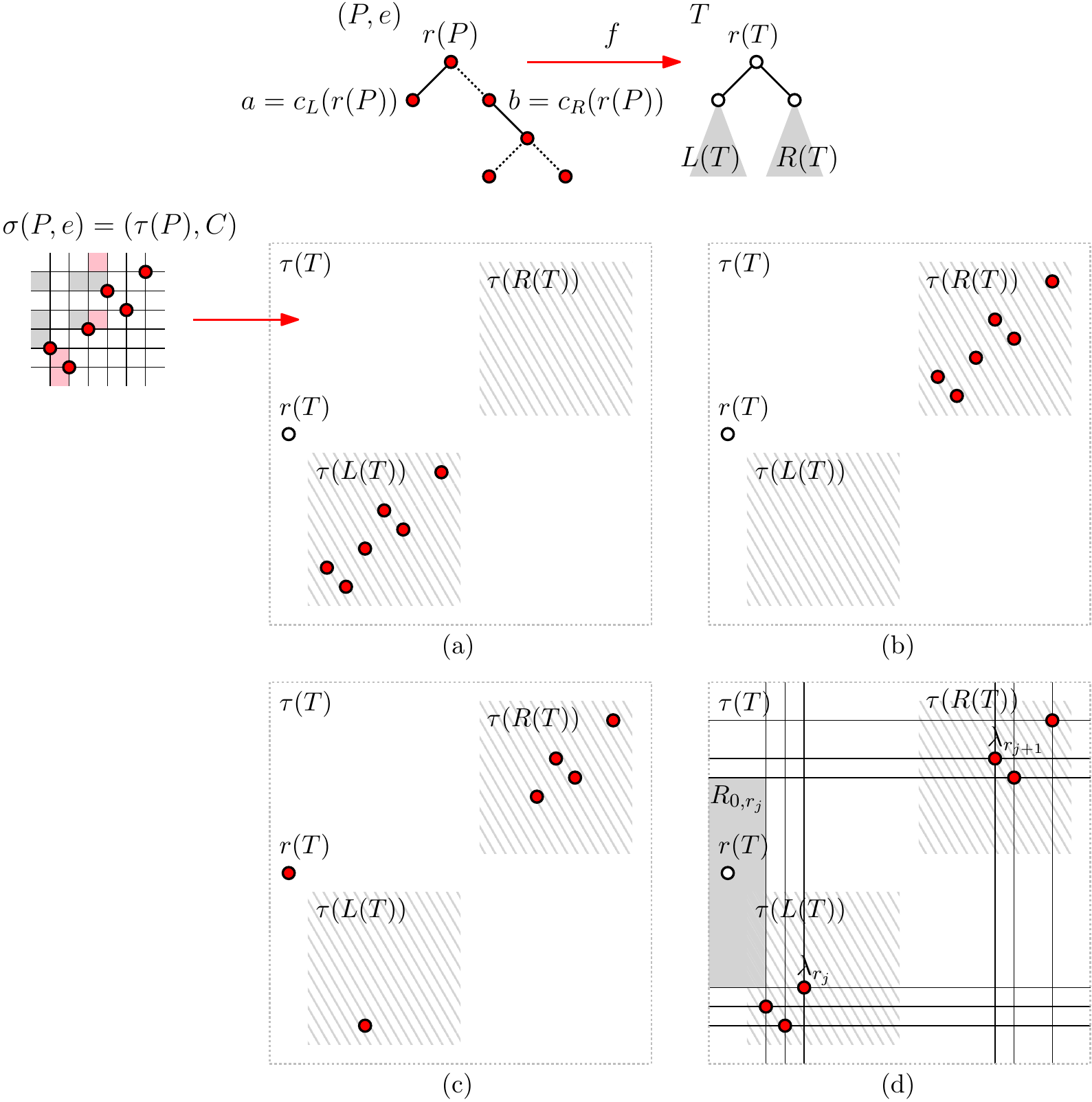}
\caption{Illustration of the four cases in the proof of Theorem~\ref{thm:bijection}.}
\label{fig:proof}
\end{figure}

\textbf{Case~(a):} $Q\seq \tau(L(T))$.
In this case $\tau(L(T))$ contains the mesh pattern~$\sigma(P,e)$.
It follows by induction that $L(T)$ contains the tree pattern~$(P,e)$, and therefore $T$ contains the tree pattern~$(P,e)$.

\textbf{Case~(b):} $Q\seq \tau(R(T))$.
In this case $\tau(R(T))$ contains the mesh pattern~$\sigma(P,e)$.
It follows by induction that $R(T)$ contains the tree pattern~$(P,e)$, and therefore $T$ contains the tree pattern~$(P,e)$.

\textbf{Case~(c):} $r(T)\in Q$.
We define~$a:=c_L(r(P))=r(L(P))$ and~$b:=c_R(r(P))=r(R(P))$.
We assume that $a,b\neq \varepsilon$; the other cases are analogous.
In this case $\tau(L(T))$ contains the mesh pattern~$\sigma(L(P),e_{L(P)})$ and $\tau(R(T))$ contains the mesh pattern~$\sigma(R(P),e_{R(P)})$.

We define $\rho:=\tau(P)^{-1}\in S_k$.
From~\eqref{eq:Ci} we see that if $c_R(a)\neq \varepsilon$, then we have $(\rho(a)-1,j)=(1,j)\in C$ for all $j\in B_R^-(a)$.
Furthermore, if $e(a)=1$, then we have $(\rho(a)-1,\min P(a)-1)=(1,0)\in C$ and $(\rho(a)-1,\max P(a))=(1,r(P)-1)\in C$.
We thus obtain that if $e(a)=1$, then the occurrence of the mesh pattern~$\sigma(L(P),e_{L(P)})$ in~$\tau(L(T))$ must contain the first element of~$\tau(L(T))$.
By induction, we obtain that $L(T)$ contains the tree pattern~$(L(P),e_{L(P)})$, and if $e(a)=1$ then an occurrence of this pattern includes the vertex~$c_L(T)$.

Similarly, from~\eqref{eq:Ci} we see that if $c_R(b)\neq \varepsilon$, then we have $(\rho(b)-1,j)\in C$ for all $j\in B_R^-(b)$.
Furthermore, if $e(b)=1$, then we have $(\rho(b)-1,\min P(b)-1)=(\rho(b)-1,r(P))\in C$ and $(\rho(b)-1,\max P(b))=(\rho(b)-1,k)\in C$.
We thus obtain that if $e(b)=1$, then the occurrence of the mesh pattern~$\sigma(R(P),e_{R(P)})$ in~$\tau(R(T))$ must contain the first element of~$\tau(R(T))$.
By induction, we obtain that $R(T)$ contains the tree pattern~$(R(P),e_{R(P)})$, and if $e(b)=1$ then an occurrence of this pattern includes the vertex~$c_R(T)$.

Combining these observations yields that $T$ contains the tree pattern~$(P,e)$.

\textbf{Case~(d):} $r(T)\notin Q$, $Q\cap \tau(L(T))\neq \emptyset$ and $Q\cap \tau(R(T))\neq \emptyset$.

We define $\rho:=\tau(P)^{-1}\in S_k$, $b:=\beta_R(r(P))$ and $r_i:=c_R^{i-1}(r(P))$ for $i=1,\ldots,b$.
For any vertex~$i\in [k]$ of~$P$, we have that all~$j\in L(i)$ come after~$i$ in~$\tau(P)$ and are smaller than~$i$.
Consequently, there is an integer~$1\leq j<b$ such that for all $i=1,\ldots,j$ we have $\lambda_{r_i}\in \tau(L(T))$ and $\lambda_k\in\tau(L(T))$ for all $k\in L(r_i)$, and moreover for all $i=j+1,\ldots,b$ we have $\lambda_{r_i}\in\tau(R(T))$ and $\lambda_k\in \tau(R(T))$ for all $k\in L(r_i)$.
However, by the definition~\eqref{eq:Ci} we have $(0,r_j)\in C$, which implies that $(1,r(T))\in R_{0,r_j}$, so this case cannot occur.

This completes the proof of the theorem.
\end{proof}

\section{Generating pattern-avoiding binary trees}
\label{sec:generate}

In this section we apply the Hartung--Hoang--M\"utze--Williams generation framework to pattern-avoiding binary trees.
The main results are simple and efficient algorithms (Algorithm~S and Algorithm~H) to generate different classes of pattern-avoiding binary trees, subject to some mild constraints on the tree pattern(s) that are inherited from applying the framework (Theorems~\ref{thm:algoS} and~\ref{thm:algoH}, respectively).

\subsection{Tree rotations and slides}

A natural and well-studied operation on binary trees are tree rotations; see Figure~\ref{fig:rot}.
We consider a tree~$T\in \cT_n$ and one of its edges~$(i,j)$ with $j=c_R(i)$, and we let $Y$ be the left subtree of~$j$, i.e., $Y:=L(j)$.
A \defi{rotation of the edge~$(i,j)$} yields the tree obtained by the following modifications:
The child~$i$ of~$p(i)$ is replaced by~$j$ (unless $p(i)=\varepsilon$ in~$T$), $i$ becomes the left child of~$j$, and $Y$ becomes the right subtree of~$i$.
We denote this operation by~$j\diru$, and we refer to it as \defi{up-rotation of~$j$}, indicating that the vertex~$j$ moves up.
The operation~$j\diru$ is well-defined if and only if $j$ is not the root and $p(j)<j$, or equivalently $j=c_R(p(j))$.
The inverse operation is denoted by~$j\dird$, and we refer to it as \defi{down-rotation of~$j$}, indicating that the vertex~$j$ moves down.
The operation~$j\dird$ is well-defined if and only if $j$ has a left child (which must be smaller), i.e., $c_L(j)\neq \varepsilon$.
An \defi{up-slide} or \defi{down-slide of~$j$ by $d$ steps} is a sequence of $d$ up- or down-rotations of~$j$, respectively, which we write as $(j\diru)^d$ and~$(j\dird)^d$.

\subsection{A simple greedy algorithm}
\label{sec:algo}

We use the following simple greedy algorithm to generate a set of binary trees $\cL_n\seq \cT_n$.
We say that a slide is \defi{minimal} (w.r.t.~$\cL_n$), if every slide of the same vertex in the same direction by fewer steps creates a binary tree that is not in~$\cL_n$.

\begin{algo}{Algorithm~S}{Greedy slides}
This algorithm attempts to greedily generate a set of binary trees $\cL_n\seq \cT_n$ using minimal slides starting from an initial binary tree~$T_0\in \cL_n$.
\begin{enumerate}[label={\bfseries S\arabic*.}, leftmargin=8mm, noitemsep, topsep=3pt plus 3pt]
\item{} [Initialize] Visit the initial tree~$T_0$.
\item{} [Slide] Generate an unvisited binary tree from~$\cL_n$ by performing a minimal slide of the largest possible vertex in the most recently visited binary tree.
If no such slide exists, or the direction of the slide is ambiguous, then terminate.
Otherwise visit this binary tree and repeat~S2.
\end{enumerate}
\end{algo}

\begin{figure}[h!]
\includegraphics[page=2]{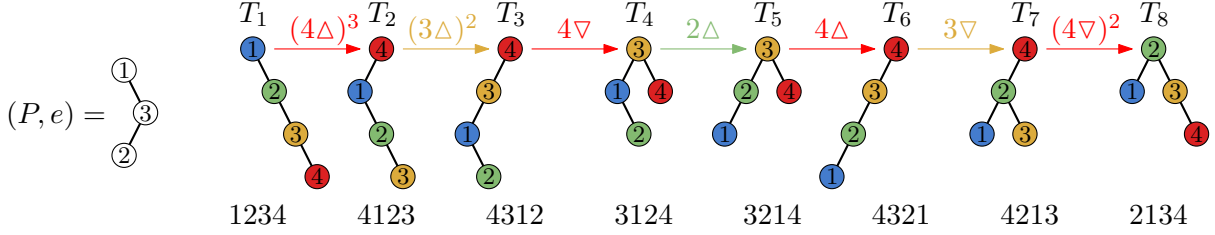}
\caption{Run of Algorithm~S that visits all binary trees in the set~$\cT_4(P,e)$.
Below each tree~$T$ is the corresponding permutation~$\tau(T)$.}
\label{fig:algo-ex1}
\end{figure}

\begin{figure}[h!]
\includegraphics[page=3]{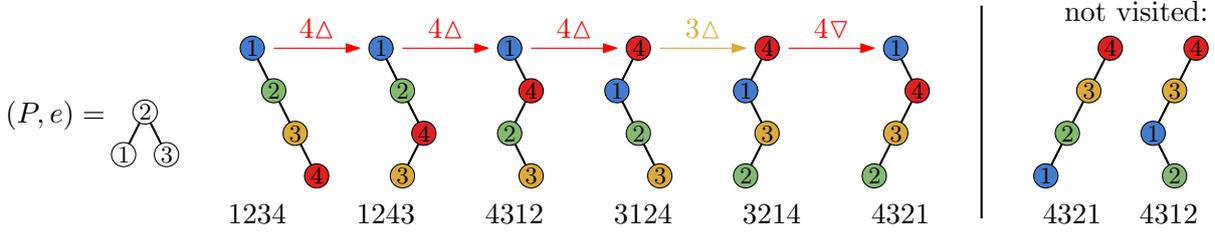}
\caption{Run of Algorithm~S that does not visit all binary trees in the set~$\cT_4(P,e)$.}
\label{fig:algo-ex2}
\end{figure}

To illustrate the algorithm, consider the example in Figure~\ref{fig:algo-ex1}.
Suppose we choose the right path~$T_1$ shown in the figure as initial tree for the algorithm, i.e., $T_0:=T_1$.
In the first iteration, Algorithm~S performs an up-slide of the vertex~4 by three steps to obtain~$T_2$.
This up-slide is minimal, as an up-slide of~4 in~$T_1$ by one or two steps creates the forbidden tree pattern~$(P,e)$.
Note that any tree created from~$T_2$ by a down-slide of~4 either contains the forbidden pattern or has been visited before.
Consequently, the algorithm applies an up-slide of~3 by two steps, yielding~$T_3$.
After five more slides, the algorithm terminates with~$T_8$, and at this point it has visited all eight trees in~$\cT_4(P,e)$.

Now consider the example in Figure~\ref{fig:algo-ex2}, where the algorithm terminates after having visited six different trees from~$\cT_4(P,e)$.
However, the set~$\cT_4(P,e)$ contains two more trees that are not visited by the algorithm.

\begin{wrapfigure}{r}{0.26\textwidth}
\vspace{-5mm}
\includegraphics[page=1]{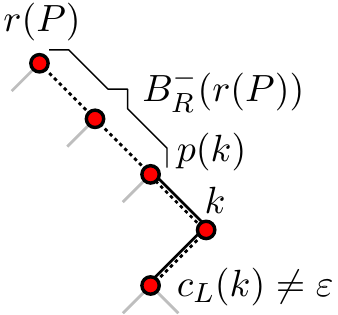}
\caption{Definition of friendly tree patterns.}
\label{fig:friendly}
\end{wrapfigure}
We now formulate simple sufficient conditions on the tree pattern~$(P,e)$ ensuring that Algorithm~S successfully visits all trees in~$\cT_n(P,e)$.
Specifically, we say that a tree pattern~$(P,e)$, $P\in \cT_k$, is \defi{friendly}, if it satisfies the following three conditions; see Figure~\ref{fig:friendly}:
\begin{enumerate}[label=(\roman*),leftmargin=8mm, noitemsep, topsep=1pt plus 1pt]
\item We have $p(k)\neq\varepsilon$ and~$c_L(k)\neq \varepsilon$, i.e., the largest vertex~$k$ is neither the root nor a leaf in~$P$.
\item For every $j\in B_R^-(r(P))\setminus r(P)$ we have $e(j)=0$, i.e., the edges on the right branch starting at the root, except possibly the last one, are all non-contiguous.
\item If~$e(k)=1$, then we have~$e(c_L(k))=0$, i.e., if the edge from~$k$ to its parent is contiguous, then the edge to its left child must be non-contiguous.
\end{enumerate}
Note that for non-contiguous tree patterns, i.e., $e(i)=0$ for all $i\in[k]\setminus r(P)$, conditions~(ii) and~(iii) are always satisfied.

The following is our main result of this section.

\begin{theorem}
\label{thm:algoS}
Let $(P_1,e_1),\ldots,(P_\ell,e_\ell)$ be friendly tree patterns.
Then Algorithm~S initialized with the tree~$\tau^{-1}(\ide_n)$ visits every binary tree from~$\cT_n((P_1,e_1),\ldots,(P_\ell,e_\ell))$ exactly once.
\end{theorem}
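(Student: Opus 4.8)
The plan is to move the statement to the world of permutations via Theorem~\ref{thm:bijection} and then invoke the Hartung--Hoang--M\"utze--Williams generation framework~\cite{MR4391718}. Abbreviate $\sigma_i:=\sigma(P_i,e_i)$ and $L_n:=S_n(231,\sigma_1,\ldots,\sigma_\ell)$; by Theorem~\ref{thm:bijection} the map $\tau$ is a bijection from $\cT_n((P_1,e_1),\ldots,(P_\ell,e_\ell))$ onto $L_n$, and it sends the right path $\tau^{-1}(\ide_n)$ to $\ide_n$. The first task is to check that, under $\tau$, an up-slide (resp.\ down-slide) of a vertex $j$ in a tree $T$ by $d$ steps corresponds exactly to a jump of the value $j$ to the left (resp.\ right) past the $d$ values immediately preceding (resp.\ following) it in $\tau(T)$, all of which are automatically smaller than $j$; this is Williams' description~\cite{MR3126386} of tree rotations in permutation terms, and it identifies Algorithm~S, run inside the class $\cT_n((P_1,e_1),\ldots,(P_\ell,e_\ell))$, with the framework's greedy jump algorithm run inside $L_n$, matching ``minimal slide'' with ``minimal jump'' and matching the termination rules. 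Granting this, the main theorem of~\cite{MR4391718} reduces everything to showing that $L_n$ is a \emph{zigzag language}, i.e.\ that (a) removing the entry $n$ from every permutation of $L_n$ produces exactly $L_{n-1}=S_{n-1}(231,\sigma_1,\ldots,\sigma_\ell)$ (so that, by induction on $n$, it is again a zigzag language), and (b) for every $\pi\in L_{n-1}$, inserting $n$ at the leftmost and at the rightmost position of $\pi$ yields elements of $L_n$.

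Since $L_n=\bigcap_{i=1}^\ell S_n(231,\sigma_i)$ and zigzag languages are closed under intersection, it suffices to prove (a) and (b) for a single friendly pattern $(P,e)$ with $P\in\cT_k$, where by Theorem~\ref{thm:bijection} these become statements about the class $\cT_n(P,e)$. Here I would first record the elementary fact that any occurrence $f\colon[k]\to[n]$ of $(P,e)$ in a tree $T$ is strictly increasing, because $f$ preserves the left/right (equivalently: smaller/larger) relation along every edge of $P$ and hence the in-order of $P$; in particular $f(k)$ is always the image of the largest vertex of $P$. For~(b): inserting $n$ at the right end of $\tau(T')$ amounts to appending $n$ as the right child of the current maximum of $T'$, and inserting $n$ at the left end amounts to making $n$ the new root with $T'$ as its left subtree. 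In the first case the new vertex is a leaf, and in the second it is the root; since a new occurrence of $(P,e)$ would have to use the new vertex as $f(k)$, condition~(i) of friendliness ($k$ is neither a leaf nor the root of $P$) forbids this, so both insertions stay inside $\cT_n(P,e)$. The ``$\supseteq$'' half of~(a) is the special case of the right-end insertion.

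The remaining half of~(a) is the crux, and I expect it to be the main obstacle: if $T\in\cT_n$ avoids $(P,e)$ then $T':=T\setminus n$, obtained by deleting the vertex $n$ and re-hanging its left subtree, avoids $(P,e)$ as well. The approach is the contrapositive: given an occurrence $f'$ of $(P,e)$ in $T'$, construct an occurrence in $T$. Passing from $T'$ to $T$ amounts to subdividing a single right-spine edge $(v,c_R(v))$ of $T'$, where $v=p(n)$ in $T$, by the new vertex $n$, with the old right subtree of $v$ becoming the left subtree of $n$. Consequently $f'$ is already an occurrence in $T$, unless it maps some contiguous right-child edge of $P$ precisely onto this subdivided edge. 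Ruling out or repairing that configuration is exactly what conditions~(ii) and~(iii) of friendliness are for: condition~(ii) (the right branch at the root of $P$ is non-contiguous except possibly at its last edge) prevents the subdivided edge from being an ``upper'' contiguous right-spine edge in the image of $P$, and condition~(iii) (if the edge above $k$ is contiguous then the edge to $c_L(k)$ is not) handles the one remaining dangerous position, at the image of $k$, by allowing $f'$ to be re-routed one step into the left subtree now hanging at $n$. Carrying out this case analysis is the technical heart of the argument. Once it is complete, $L_n$ is a zigzag language, so the framework's greedy algorithm lists $L_n$ starting from $\ide_n$, and transporting this back through the bijection $\tau$ shows that Algorithm~S, started at the right path $\tau^{-1}(\ide_n)$, visits every tree of $\cT_n((P_1,e_1),\ldots,(P_\ell,e_\ell))$ exactly once.
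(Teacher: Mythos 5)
Your proposal is correct in its overall strategy and shares the paper's outer skeleton (transfer everything through the bijection of Theorem~\ref{thm:bijection}, identify minimal slides with minimal jumps, then invoke Theorem~\ref{thm:jump}), but it establishes the zigzag property by a genuinely different route. The paper never verifies heredity and insertion-closure directly: instead it applies the framework's tameness criterion for mesh patterns (Lemma~\ref{lem:tame-mesh}) to the \emph{modified} pattern $\sigma^-(P,e)$ obtained by deleting the shaded cell $(i-1,k)$ -- since $\sigma(P,e)$ itself violates condition~(iii) of that criterion when $e(k)=1$ -- and then proves the coincidence $S_n(231,\sigma(P,e))=S_n(231,\sigma^-(P,e))$ by an exchange argument (Lemmas~\ref{lem:friendly-tame} and~\ref{lem:exchange}), before combining with Lemma~\ref{lem:intersection}. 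You instead verify the hereditary zigzag conditions directly at the tree level: closure under inserting $n$ as root or as rightmost leaf (your use of condition~(i) via the order-preservation of occurrences is right), and closure of $\cT_n(P,e)$ under deleting the vertex~$n$. Your sketch of this last, crucial step is sound and does complete: an occurrence $f'$ in $T'=T\setminus n$ survives in $T$ unless a contiguous right edge $(p(j),j)$ of $P$ is mapped exactly onto the subdivided right-spine edge $(v,c_R(v))$; since $v$ lies on the right spine of $T'$, the whole path of $P$ from $r(P)$ down to $j$ must map to right-spine vertices and hence consist of right edges, so $j\in B_R(r(P))$, and condition~(ii) forces $j$ to be the last vertex of that branch, i.e.\ $j=k$ with $e(k)=1$; the repair is then to remap $k$ to the new vertex $n$ and keep everything else, which is legitimate because $k$ has no right child and, by condition~(iii), the edge to $c_L(k)$ is non-contiguous, so its image may stay anywhere in $L(n)=T(c_L(n))$. (Your phrase ``re-routed one step into the left subtree'' should be read this way; also, an up-slide by $d$ rotations corresponds to a left jump past $d$ \emph{blocks} of smaller entries, not past exactly $d$ entries, but this does not affect the minimal-slide/minimal-jump correspondence.) What each approach buys: the paper's route reuses the black-box mesh-pattern machinery and isolates the coincidence phenomenon of Lemma~\ref{lem:exchange}, which is of independent interest; your route is more elementary and self-contained, making the roles of the three friendliness conditions combinatorially transparent, at the cost of redoing by hand (for this special case) the work that \cite[Thm.~15]{MR4391718} encapsulates -- but to count as a complete proof you would need to write out the case analysis above rather than leave it as the acknowledged ``technical heart''.
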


Recall that $\tau^{-1}(\ide_n)$ is the right path, i.e., the tree that corresponds to the identity permutation.
Note that by condition~(i) in the definition of friendly tree pattern, we have $\tau^{-1}(\ide_n) \in \cT_n((P_1,e_1),\ldots,(P_\ell,e_\ell))$.
We shall see that our notion of friendly tree patterns is inherited from the notion of tame mesh permutation patterns used in~\cite[Thm.~15]{MR4391718}.

\subsection{Permutation languages}

We prove Theorem~\ref{thm:algoS} by applying the Hartung--Hoang--M\"utze--Williams generation framework~\cite{MR4391718}.
Let us recap the most important concepts.
We interpret a permutation~$\pi\in S_n$ in one-line notation as a string as $\pi=\pi(1),\pi(2),\ldots,\pi(n)=a_1 a_2\cdots a_n$.
Recall that~$\varepsilon\in S_0$ denotes the empty permutation.
For any $\pi\in S_{n-1}$ and any $1\leq i\leq n$, we write $c_i(\pi)\in S_n$ for the permutation obtained from~$\pi$ by inserting the new largest value~$n$ at position~$i$ of~$\pi$, i.e., if $\pi=a_1\cdots a_{n-1}$ then $c_i(\pi)=a_1\cdots a_{i-1} \, n\, a_i \cdots a_{n-1}$.
Moreover, for~$\pi\in S_n$, we write $p(\pi)\in S_{n-1}$ for the permutation obtained from~$\pi$ by removing the largest entry~$n$.
Given a permutation $\pi=a_1\cdots a_n$ with a substring $a_i\cdots a_j$ with $a_i>a_{i+1},\ldots,a_j$, a \defi{right jump of the value~$a_i$ by $j-i$~steps} is a cyclic left rotation of this substring by one position to $a_{i+1}\cdots a_j a_i$.
Similarly, given a substring $a_i\cdots a_j$ with $a_j>a_i,\ldots,a_{j-1}$, a \defi{left jump of the value~$a_j$ by $j-i$~steps} is a cyclic right rotation of this substring to $a_j a_i\cdots a_{j-1}$.

The framework from~\cite{MR4391718} uses the following simple greedy algorithm to generate a set of permutations $L_n\seq S_n$.
We say that a jump is \defi{minimal} (w.r.t.~$L_n$), if every jump of the same value in the same direction by fewer steps creates a permutation that is not in~$L_n$.

\begin{algo}{Algorithm~J}{Greedy minimal jumps}
This algorithm attempts to greedily generate a set of permutations $L_n\seq S_n$ using minimal jumps starting from an initial permutation $\pi_0 \in L_n$.
\begin{enumerate}[label={\bfseries J\arabic*.}, leftmargin=8mm, noitemsep, topsep=3pt plus 3pt]
\item{} [Initialize] Visit the initial permutation~$\pi_0$.
\item{} [Jump] Generate an unvisited permutation from~$L_n$ by performing a minimal jump of the largest possible value in the most recently visited permutation.
If no such jump exists, or the jump direction is ambiguous, then terminate.
Otherwise visit this permutation and repeat~J2.
\end{enumerate}
\end{algo}

One can argue that if a certain jump yields a visited or unvisited permutation, then all permutations in $L_n$ obtained by a jump of the same value in the same direction by more steps are also visited or unvisited, respectively. 
An analogous statement holds for slides in Algorithm~S.

The following main result from~\cite{MR4391718} provides a sufficient condition on the set~$L_n$ to guarantee that Algorithm~J successfully generates all permutations in~$L_n$.
This condition is captured by the following closure property of the set~$L_n$.
A set of permutations~$L_n\seq S_n$ is called a \defi{zigzag language}, if either $n=0$ and $L_0=\{\varepsilon\}$, or if $n\geq 1$ and $L_{n-1}:=\{p(\pi)\mid \pi\in L_n\}$ is a zigzag language such that for every $\pi\in L_{n-1}$ we have~$c_1(\pi)\in L_n$ and~$c_n(\pi)\in L_n$.

\begin{theorem}[\cite{MR4391718}]
\label{thm:jump}
Given any zigzag language of permutations~$L_n$ and initial permutation $\pi_0=\ide_n$, Algorithm~J visits every permutation from~$L_n$ exactly once.
\end{theorem}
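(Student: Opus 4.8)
The plan is to prove Theorem~\ref{thm:jump} by induction on~$n$, peeling off the largest value~$n$ and exploiting the recursive structure built into the definition of a zigzag language. The base cases $n=0$ and $n=1$ are immediate. For the inductive step, note that $L_{n-1}=\{p(\pi)\mid\pi\in L_n\}$ is again a zigzag language, so by the inductive hypothesis Algorithm~J run on $L_{n-1}$ from $\ide_{n-1}$ produces a listing $\pi^{(1)}=\ide_{n-1},\pi^{(2)},\ldots,\pi^{(N)}$ containing every permutation of $L_{n-1}$ exactly once. The fibres of the projection $p\colon L_n\to L_{n-1}$ partition $L_n$, and for each $\pi\in L_{n-1}$ the fibre has the form $\{c_i(\pi)\mid i\in I_\pi\}$ for some index set $I_\pi\subseteq[n]$ with $\{1,n\}\subseteq I_\pi$, the inclusions $1,n\in I_\pi$ being exactly the zigzag condition.

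The heart of the argument is to show that Algorithm~J on $L_n$ traverses these fibres one at a time, in the order dictated by $\pi^{(1)},\ldots,\pi^{(N)}$. Two facts drive this. First, a jump of the value~$n$ is syntactically legal over any block of smaller values, so within the fibre of a fixed~$\pi$ the element~$n$ can be moved from any position of $I_\pi$ to any other; hence the greedy rule sweeps~$n$ monotonically through all positions of $I_\pi$ (skipping the positions not in $I_\pi$ in a single longer jump) until it reaches position~$1$ or position~$n$. Second, when $n$ sits at position~$1$ in $c_1(\pi)$ (symmetrically at position~$n$ in $c_n(\pi)$), it acts as a wall: no value $v<n$ can jump across~$n$, so deleting~$n$ induces a bijection between jumps of~$v$ in $c_1(\pi)$ and jumps of~$v$ in~$\pi$, and this bijection preserves both minimality and membership in the language --- here one uses the zigzag condition in the form $c_1(\sigma)\in L_n\iff\sigma\in L_{n-1}$ (symmetrically for $c_n$). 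Consequently, once the fibre of $\pi^{(t)}$ has been exhausted and~$n$ is parked at an extreme position, the largest value admitting an unvisited minimal jump is exactly the value realising the transition $\pi^{(t)}\to\pi^{(t+1)}$ in the run on $L_{n-1}$, and performing it lands on $c_1(\pi^{(t+1)})$ or $c_n(\pi^{(t+1)})$, i.e., at the correct starting end of the next fibre (the sweep direction alternating with the parity of~$t$). Starting from $\ide_n=c_n(\ide_{n-1})$, induction then shows that the run on $L_n$ visits the fibres of $\pi^{(1)},\ldots,\pi^{(N)}$ in succession, each exactly once, and halts only after the last one; so every permutation of $L_n$ is visited exactly once, and the direction of the required jump is never ambiguous (during a sweep only one direction of~$n$ leads to an unvisited permutation, and at a fibre boundary the smaller value behaves exactly as in the unambiguous run on $L_{n-1}$).

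I expect the main obstacle to be making the ``traverses the fibres in order'' claim precise: it should be packaged as an invariant carried through the induction --- roughly, that after Algorithm~J on $L_n$ has visited exactly the fibres of $\pi^{(1)},\ldots,\pi^{(t)}$, the element~$n$ occupies one of the two extreme positions, with which one determined by the parity of~$t$ --- and one must check that the greedy choice of the largest value with a minimal unvisited jump never deviates from this pattern. The two delicate points inside that check are (a)~that~$n$ can really be swept across \emph{all} of $I_\pi$ even though some positions are missing from $I_\pi$, which is where ``jumps of the maximum over smaller values are always legal'' enters, and (b)~that parking~$n$ at an extreme reduces the behaviour of the remaining values verbatim to the $(n-1)$-case, which is where ``no value can jump across a larger value'' together with the two-sided zigzag condition enters. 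The base cases, the fibre decomposition, and the sweep-direction bookkeeping are routine.
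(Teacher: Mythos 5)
This theorem is not proved in the present paper at all --- it is imported from the framework paper~\cite{MR4391718}, so the only comparison to make is with the proof given there. Your plan reconstructs essentially that proof: the fibre decomposition of $L_n$ over $p\colon L_n\to L_{n-1}$, the boustrophedon sweep of the largest value~$n$ through each fibre with the zigzag condition $c_1(\pi),c_n(\pi)\in L_n$ supplying the turning points, and the reduction of inter-fibre transitions to the run on $L_{n-1}$ via the fact that no smaller value can jump across~$n$ parked at an extreme position; the only difference is presentational, in that \cite{MR4391718} first defines the recursive zigzag ordering of $L_n$ explicitly and then verifies that the greedy algorithm follows it, whereas you carry the same information as an invariant of the greedy run.
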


\subsection{Tame permutation patterns and friendly tree patterns}

We say that an infinite sequence $L_0,L_1,\ldots$ of sets of permutations is \defi{hereditary}, if $L_{i-1}=p(L_i)$ holds for all $i\geq 1$.
We say that a (classical or mesh) permutation pattern~$\tau$ is \defi{tame}, if $S_n(\tau)$, $n\geq 0$, is a hereditary sequence of zigzag languages.

\begin{lemma}[{\cite[Lem.~6]{MR4391718}}]
\label{lem:intersection}
Let $L_0,L_1,\ldots$ and $M_0,M_1,\ldots$ be two hereditary sequences of zigzag languages.
Then $L_n\cap M_n$ for $n\geq 0$ is also a hereditary sequence of zigzag languages.
\end{lemma}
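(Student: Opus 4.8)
The plan is to verify the defining property of a hereditary sequence of zigzag languages directly for the sequence $(L_n\cap M_n)_{n\geq 0}$, arguing by induction on $n$. The hereditary condition is the easy half: since removing the largest entry commutes with intersection, $p(L_n\cap M_n)\seq p(L_n)\cap p(M_n)=L_{n-1}\cap M_{n-1}$, and conversely any $\pi\in L_{n-1}\cap M_{n-1}$ has some preimage in $L_n$ under $p$ and some preimage in $M_n$; the point is that $c_n(\pi)$ is a common preimage (inserting $n$ at the end, then deleting the largest entry, returns $\pi$), and $c_n(\pi)\in L_n$ and $c_n(\pi)\in M_n$ by the zigzag property of each sequence. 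Hence $p(L_n\cap M_n)=L_{n-1}\cap M_{n-1}$, which is the hereditary condition.

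For the zigzag condition I would set up the induction with the base case $n=0$, where $L_0\cap M_0=\{\varepsilon\}$ trivially. For the inductive step, assume $L_{n-1}\cap M_{n-1}$ is a zigzag language. We already know $(L_n\cap M_n)_{n}$ is hereditary, so $p(L_n\cap M_n)=L_{n-1}\cap M_{n-1}$, which by the inductive hypothesis is a zigzag language. It remains to check that for every $\pi\in L_{n-1}\cap M_{n-1}$ we have $c_1(\pi)\in L_n\cap M_n$ and $c_n(\pi)\in L_n\cap M_n$. But $\pi\in L_{n-1}$ gives $c_1(\pi),c_n(\pi)\in L_n$ since $(L_n)_n$ is a zigzag language, and $\pi\in M_{n-1}$ gives $c_1(\pi),c_n(\pi)\in M_n$ since $(M_n)_n$ is a zigzag language; intersecting yields the claim. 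This closes the induction.

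There is essentially no serious obstacle here: the argument is a direct unwinding of definitions, and the only thing to be careful about is the interplay between the hereditary condition and the zigzag condition, since the definition of zigzag language is itself recursive and presupposes that $L_{n-1}=p(L_n)$. One must therefore establish the hereditary identity $p(L_n\cap M_n)=L_{n-1}\cap M_{n-1}$ before (or alongside) invoking the recursive definition of zigzag language for the intersection, which is why I would prove the hereditary half first and then feed it into the inductive verification of the zigzag half. Since the statement is quoted from~\cite[Lem.~6]{MR4391718}, one could alternatively just cite it, but the self-contained proof above is short enough to include.
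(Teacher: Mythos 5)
Your proof is correct. Note that the paper itself gives no proof of this lemma --- it is quoted verbatim from~\cite[Lem.~6]{MR4391718} --- and your self-contained argument (establishing the hereditary identity $p(L_n\cap M_n)=L_{n-1}\cap M_{n-1}$ via the common preimage $c_n(\pi)$, then feeding it into the recursive definition of zigzag language by induction on $n$) is exactly the routine unwinding of definitions that the cited source uses, so nothing further is needed.
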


The following result about classical permutation patterns was proved in~\cite{MR4391718}.

\begin{lemma}[{\cite[Lem.~9]{MR4391718}}]
\label{lem:tame-classical}
If a pattern~$\tau\in S_k$, $k\geq 3$, does not have the largest value~$k$ at the leftmost or rightmost position, then it is tame.
\end{lemma}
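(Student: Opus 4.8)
The plan is to verify directly that the infinite sequence $S_0(\tau),S_1(\tau),\ldots$ satisfies the two defining properties of a hereditary sequence of zigzag languages: first, the \emph{hereditary} property $p(S_n(\tau))=S_{n-1}(\tau)$ for all $n\geq 1$; and second, that $S_n(\tau)$ is a zigzag language for every $n\geq 0$. I would derive the second property from the first by induction on~$n$. The base case $n=0$ is immediate since $S_0(\tau)=\{\varepsilon\}$. For the inductive step, assume $S_{n-1}(\tau)$ is a zigzag language; then by the hereditary property $p(S_n(\tau))=S_{n-1}(\tau)$ is a zigzag language, so to conclude that $S_n(\tau)$ is a zigzag language it remains only to check that $c_1(\pi)\in S_n(\tau)$ and $c_n(\pi)\in S_n(\tau)$ for every $\pi\in S_{n-1}(\tau)$.

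The heart of the proof is a single observation about inserting the new largest value at one of the two ends. Let $\pi\in S_{n-1}(\tau)$ and consider $c_1(\pi)=n\,\pi(1)\cdots\pi(n-1)$. Suppose for contradiction that $c_1(\pi)$ contains~$\tau$, witnessed by indices $\nu_1<\cdots<\nu_k$. If $\nu_1\geq 2$, these indices lie entirely within the suffix $\pi(1)\cdots\pi(n-1)$, so $\pi$ contains~$\tau$, contradicting $\pi\in S_{n-1}(\tau)$. Hence $\nu_1=1$, i.e.\ the occurrence uses the entry~$n$ in position~$1$; but $n$ is the globally largest value of~$c_1(\pi)$, so this forces~$\tau$ to have its largest value~$k$ in the leftmost position, contradicting the hypothesis. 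Therefore $c_1(\pi)\in S_n(\tau)$. The mirror argument applied to $c_n(\pi)=\pi(1)\cdots\pi(n-1)\,n$ — where an occurrence using the final entry would force $k$ to the rightmost position of~$\tau$ — shows $c_n(\pi)\in S_n(\tau)$.

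This observation supplies both remaining ingredients at once. For the hereditary property, the inclusion $p(S_n(\tau))\seq S_{n-1}(\tau)$ is clear, since deleting an entry from a permutation cannot create an occurrence of~$\tau$; conversely, for $\pi\in S_{n-1}(\tau)$ we have just shown $c_1(\pi)\in S_n(\tau)$, and $p(c_1(\pi))=\pi$, so $\pi\in p(S_n(\tau))$, giving the equality $p(S_n(\tau))=S_{n-1}(\tau)$. For the zigzag closure condition, $c_1(\pi),c_n(\pi)\in S_n(\tau)$ is precisely the observation. Combined with the induction sketched above, this completes the argument. I do not anticipate a genuine obstacle here; the only point requiring care is the case distinction in the key observation, specifically the step asserting that if the inserted extremal entry participates in an occurrence of~$\tau$ it must play the role of~$\tau$'s first (resp.\ last) index and, carrying the largest value, must correspond to the value~$k$ of~$\tau$. (The assumption $k\geq 3$ is needed only to guarantee that patterns satisfying the hypothesis exist at all; the proof itself is valid for every~$k$.)
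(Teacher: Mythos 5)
Your proposal is correct. Note that the paper does not prove this statement at all: it is imported verbatim as Lemma~9 of the cited framework paper~\cite{MR4391718}, and your argument — deletion of the largest entry cannot create an occurrence, and insertion of the new largest value at the leftmost (resp.\ rightmost) position could only create an occurrence in which that value plays the role of~$k$ at the first (resp.\ last) position of~$\tau$, contradicting the hypothesis — is precisely the standard argument behind that cited result, correctly assembled into the hereditary-plus-induction structure required by the definition of tameness. Your closing remark is also accurate: $k\geq 3$ only ensures the hypothesis is satisfiable, and plays no role in the argument itself.
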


Lemma~\ref{lem:tame-classical} applies in particular to the classical pattern~$\tau=231$.
The following more general result was proved for mesh patterns.

\begin{lemma}[{\cite[Thm.~15]{MR4391718}}]
\label{lem:tame-mesh}
Let $\sigma=(\tau,C)$, $\tau\in S_k$, $k\geq 3$, be a mesh pattern, and let $i$ be the position of the largest value~$k$ in~$\tau$.
If the pattern satisfies the following four conditions, then it is tame:
\begin{enumerate}[label=(\roman*),leftmargin=8mm, noitemsep, topsep=3pt plus 3pt]
\item $i$ is different from~1 and~$k$.
\item For all $a\in \{0,\ldots,k\}\setminus \{i-1,i\}$, we have $(a,k)\notin C$.
\item If $(i-1,k)\in C$, then for all $a\in \{0,\ldots,k\}\setminus i-1$ we have $(a,k-1)\notin C$ and for all $b\in \{0,\ldots,k-2\}$ we have that $(i,b)\in C$ implies $(i-1,b)\in C$.
\item If $(i,k)\in C$, then for all $a\in \{0,\ldots,k\}\setminus i$ we have $(a,k-1)\notin C$ and for all $b\in \{0,\ldots,k-2\}$ we have that $(i-1,b)\in C$ implies $(i,b)\in C$.
\end{enumerate}
\end{lemma}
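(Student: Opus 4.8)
\emph{Proof proposal.} The plan is an induction on $n$ showing that $S_n(\sigma)$ is a zigzag language (heredity of the whole sequence then comes for free from the recursive definition). The base case $n=0$ is immediate. Unwinding the definition, the inductive step amounts to proving two things, assuming the statement for $n-1$: \textbf{(a)} $p(S_n(\sigma))=S_{n-1}(\sigma)$, and \textbf{(b)} for every $\pi'\in S_{n-1}(\sigma)$ both $c_1(\pi')$ and $c_n(\pi')$ avoid $\sigma$. I expect (b) and the easy half of (a) to be quick, and the hard half of (a) to be the real work.

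First I would dispatch (b), which is exactly what hypothesis~(i) is for. If an occurrence of $\sigma$ in $c_1(\pi')=n\pi'$ used the leftmost entry, that entry (being the global maximum) would have to realize the pattern value $k$ and thus sit at pattern position $i$; since $i\neq 1$ this forces at least one selected position further to the left, which is impossible. So the occurrence misses the first entry, and then the same selected positions and values give an occurrence of $\sigma$ inside $\pi'$ — contradiction. The argument for $c_n(\pi')=\pi'n$ is the mirror image, using $i\neq k$. The inclusion $p(S_n(\sigma))\supseteq S_{n-1}(\sigma)$ now follows at once, since $p(c_1(\pi'))=\pi'$.

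The core is $p(S_n(\sigma))\subseteq S_{n-1}(\sigma)$: if $\pi\in S_n(\sigma)$ then deleting its entry $n$ (say at position $q$) leaves a permutation still avoiding $\sigma$. Suppose not, and take an occurrence $O$ of $\sigma$ in $p(\pi)$. Reinserting the column at position $q$ lifts $O$ to a candidate occurrence in $\pi$; the entry $n$ falls strictly between two consecutive selected positions, in some ``column'' $a^\ast$, and — being larger than every selected value — it lies exactly in the top strip of that column, i.e.\ in the region $R_{a^\ast,k}$. One checks that the lifted $O$ is a genuine occurrence of $\sigma$ in $\pi$ iff $(a^\ast,k)\notin C$, so by hypothesis~(ii) the only obstruction is $a^\ast\in\{i-1,i\}$ together with $(a^\ast,k)\in C$. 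In that case I would perform surgery: drop the selected position $\nu_i$ carrying the pattern value $k$ and use $q$ in its stead. Since $n$ is the global maximum, the relative order $\tau$ is still matched, the selected positions stay increasing, and every cell in the top row $b=k$ of the modified occurrence is automatically empty.

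The hard part — and the step I expect to be the main obstacle — is verifying that every shaded cell $(a,b)\in C$ with $b\le k-1$ also has an empty region for the modified occurrence; this is precisely what hypotheses~(iii) and~(iv) are calibrated to ensure. When $a^\ast=i-1$ (so $(i-1,k)\in C$), the first clause of~(iii) rules out any shaded cell in row $k-1$ off column $i-1$, which controls the regions whose vertical extent expanded when the top selected value jumped from $\pi(\nu_i)$ to $n$ (and also takes care of the demoted value $\pi(\nu_i)$, which now sits in row $k-1$); the second clause, $(i,b)\in C\Rightarrow(i-1,b)\in C$ for $b\le k-2$, controls the horizontal overhang $(q,\nu_i)$ newly swept into column $i$. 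Combining these with the emptiness of the regions of $O$ already inherited from $p(\pi)$ shows the modified occurrence is valid in $\pi$, contradicting $\pi\in S_n(\sigma)$; the case $a^\ast=i$ is symmetric via~(iv). The bookkeeping — tracking exactly how each $R_{a,b}$ changes under the swap $\nu_i\rightsquigarrow q$ and the raising of the top value — is the only delicate point, and conditions~(iii)–(iv) are exactly the minimal hypotheses that make it go through. This closes the induction.
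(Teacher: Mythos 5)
Your proposal is correct, but note that this paper does not prove the lemma at all: it is imported verbatim from \cite[Thm.~15]{MR4391718}, so there is no in-paper proof to compare against. Your reconstruction follows the same standard route as the cited source — closure of $S_n(\sigma)$ under $c_1$ and $c_n$ via hypothesis~(i), and heredity via lifting an occurrence from $p(\pi)$ and, when the inserted maximum lands in a shaded top cell (necessarily in column $i-1$ or $i$ by~(ii)), exchanging the selected maximum for the entry $n$, with (iii)/(iv) exactly covering the demoted value in row $k-1$, the vertically expanded row-$(k-1)$ region (whose emptiness also uses the shaded cell $(a^\ast,k)$ of the original occurrence), and the horizontal overhang — and the bookkeeping you sketch does go through.
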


The following crucial lemma connects friendliness of tree patterns to tameness of mesh patterns.

\begin{lemma}
\label{lem:friendly-tame}
Let $(P,e)$, $P\in \cT_k$, be a friendly tree pattern.
Consider the mesh pattern $\sigma(P,e)=(\tau(P),C)$ defined in~\eqref{eq:sigmaPe}, let $i$ be the position of the largest value~$k$ in~$\tau(P)$, and define the mesh pattern~$\sigma^-(P,e):=(\tau(P),C\setminus (i-1,k))$.
Then the mesh pattern~$\sigma^-(P,e)$ is tame.
\end{lemma}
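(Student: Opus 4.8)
The plan is to verify that the mesh pattern $\sigma^-(P,e)=(\tau(P),C')$ with $C':=C\setminus\{(i-1,k)\}$ meets the four hypotheses of Lemma~\ref{lem:tame-mesh} (whose blanket assumption $\tau(P)\in S_k$ with $k\ge 3$ I will also check). Everything hinges on a single structural fact about where the largest vertex sits.

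\textbf{Structural claim.} Since $k$ is the largest label and, by friendliness condition~(i), $k\neq r(P)$, the path in $P$ from $r(P)$ down to $k$ consists entirely of right-child steps: if at some vertex $u$ on this path one continued to $c_L(u)$, then $k$ would be a left descendant of $u$, so $k<u$, contradicting maximality. Hence this path is the maximal right branch at the root, its last vertex is $k$ (which has no right child, again by maximality), and therefore the proper ancestors of $k$ are exactly the elements of $B_R^-(r(P))$. In particular $P$ contains the three distinct vertices $p(k),k,c_L(k)$, so $k=|P|\ge 3$ and Lemma~\ref{lem:tame-mesh} applies; moreover the position of $k$ in $\tau(P)$ is neither $1$ (held by $r(P)\neq k$) nor $k$ (the last vertex in a preorder traversal is always a leaf of $P$, whereas $k$ is not a leaf by condition~(i)). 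This is precisely condition~(i) of Lemma~\ref{lem:tame-mesh}.

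Next I would pin down the shaded cells of $C$ in row $k$, i.e.\ those of the form $(a,k)$, directly from the definition~\eqref{eq:sigmaPe}. No cell of any $C_j$ lies in row $k$: such a cell is $(\rho(j)-1,m)$ with $m\in B_R^-(j)$, but if $k$ lies on the right branch at $j$ it is its largest, hence last, vertex, so $k\notin B_R^-(j)$. For $C_j'$ with $e(j)=1$: the cell $(\rho(j)-1,\min P(j)-1)$ sits in row $\min P(j)-1<k$, while $(\rho(j)-1,\max P(j))$ sits in row $k$ exactly when $k\in P(j)$, i.e.\ when $j=k$ or $j$ is a proper ancestor of $k$. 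By the structural claim a proper ancestor $j\neq r(P)$ of $k$ lies in $B_R^-(r(P))\setminus r(P)$, so by friendliness condition~(ii) it has $e(j)=0$ and contributes no $C_j'$; and $r(P)$ is excluded from the union $\bigcup_{j\in[k]\setminus r(P)}C_j'$. Thus the only cell of $C$ in row $k$ is the cell $(\rho(k)-1,\max P(k))=(i-1,k)$ produced by $C_k'$, and it is present iff $e(k)=1$.

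Consequently $C'=C\setminus\{(i-1,k)\}$ has no shaded cell in row $k$ at all, which trivialises the remaining three conditions: condition~(ii) holds since there is no $(a,k)\in C'$; condition~(iii) is vacuous because $(i-1,k)\notin C'$; and condition~(iv) is vacuous because $(i,k)\notin C'$ — indeed $(i,k)\notin C$ already, as $i\neq i-1$ and the only possible row-$k$ cell of $C$ is $(i-1,k)$. Hence $\sigma^-(P,e)$ is tame. The one step that needs real thought is the structural claim together with the ensuing case check of which cells fall into row $k$; once that bookkeeping is done nothing else is required, and in particular friendliness condition~(iii) is not used in this lemma. Degenerate cases ($\beta_R(r(P))=2$, empty left subtrees along the branch, or $e(k)=0$ so that $C'=C$) do not affect the argument.
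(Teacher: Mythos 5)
Your proof is correct and follows essentially the same route as the paper's: verify the four conditions of Lemma~\ref{lem:tame-mesh}, using friendliness~(i) for condition~(i) and friendliness~(ii) together with the definitions~\eqref{eq:Ci}--\eqref{eq:Cip} to show that the only possible shaded cell of $C$ in row~$k$ is $(i-1,k)$ (present iff $e(k)=1$), whose removal makes conditions~(ii)--(iv) hold trivially. Your explicit structural claim about $k$ sitting at the end of the maximal right branch from the root, and your observation that friendliness~(iii) is not needed here (it is used only in Lemma~\ref{lem:exchange}), simply spell out what the paper's terser argument leaves implicit.
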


\begin{proof}
By condition~(i) of friendly mesh patterns, we have $p(k)\neq \varepsilon$ and $c_L(k)\neq\varepsilon$ and therefore $i>1$ and $i<k$, respectively, which implies that condition~(i) of Lemma~\ref{lem:tame-mesh} is satisfied for both~$\sigma(P,e)$ and~$\sigma^-(P,e)$.

By condition~(ii) of friendly mesh patterns and the definition~\eqref{eq:Cip}, we have $(a,k)\notin C$ for all $a\in\{0,\ldots,k\}\setminus i-1$.
Furthermore, we have $(i-1,k)\in C$ if and only if~$e(k)=1$.
It follows that conditions~(ii) and~(iv) of Lemma~\ref{lem:tame-mesh} are satisfied for both~$\sigma(P,e)$ and~$\sigma^-(P,e)$.
Furthermore, if $e(k)=0$ then condition~(iii) is also satisfied for both~$\sigma(P,e)$ and~$\sigma^-(P,e)$.
Lastly, if $e(k)=1$ then we have $(i-1,k)\in C$, so $\sigma(P,e)$ does not satisfy condition~(iii), but $\sigma^-(P,e)$ does.
\end{proof}

\begin{lemma}
\label{lem:exchange}
The mesh patterns $\sigma(P,e)$ and~$\sigma^-(P,e)$ defined in Lemma~\ref{lem:friendly-tame} satisfy $S_n(231,\sigma(P,e))=S_n(231,\sigma^-(P,e))$.
\end{lemma}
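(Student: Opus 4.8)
The plan is to show that the extra shaded cell $(i-1,k)$ removed in passing from $\sigma(P,e)$ to $\sigma^-(P,e)$ is redundant \emph{within the class of $231$-avoiding permutations}, so that no $231$-avoiding permutation can distinguish the two mesh patterns. Since $C\setminus(i-1,k)\seq C$, every occurrence of $\sigma(P,e)$ is automatically an occurrence of $\sigma^-(P,e)$, hence $S_n(231,\sigma^-(P,e))\seq S_n(231,\sigma(P,e))$ trivially. The substance is the reverse inclusion: I would take $\pi\in S_n(231)$ and an occurrence of $\sigma^-(P,e)$ in $\pi$ at indices $\nu_1<\cdots<\nu_k$ with sorted values $\lambda_1<\cdots<\lambda_k$, and argue that this same occurrence already satisfies the one additional emptiness constraint imposed by the cell $(i-1,k)$, namely $G(\pi)\cap R_{i-1,k}=\emptyset$ where $R_{i-1,k}=(\nu_{i-1},\nu_i)\times(\lambda_k,n+1)$.

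The key point is that the cell $(i-1,k)$ is present in $C$ exactly when $e(k)=1$ (this is recorded in the proof of Lemma~\ref{lem:friendly-tame}), and in that case the two cells of $C_k'$ from~\eqref{eq:Cip}, namely $(\rho(k)-1,\min P(k)-1)$ and $(\rho(k)-1,\max P(k))$, are both in $C$; since $k$ is a leaf one has $\min P(k)=\max P(k)=k$, so these are the cells $(i-1,k-1)$ and $(i-1,k)$ (using $\rho(k)=i$). Thus in $\sigma^-(P,e)$ the cell $(i-1,k-1)$ is still shaded. I would use this, together with the structural fact about $231$-avoiding permutations: in $\pi=\tau(T)$ for $T\in\cT_n$, the value $\lambda_k$ (the image of $\nu_k$, being the largest selected value) sits somewhere, and any point of $G(\pi)$ lying in $R_{i-1,k}$ — i.e.\ strictly between columns $\nu_{i-1}$ and $\nu_i$ and strictly above $\lambda_k$ — would, together with $\lambda_k$ and a suitably chosen later selected point, create a $231$ pattern in $\pi$, OR it would be forced into the region $R_{i-1,k-1}$ forbidden by the cell $(i-1,k-1)\in C\setminus(i-1,k)$, contradicting that we have an occurrence of $\sigma^-(P,e)$. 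Working out which of these two contradictions applies is the heart of the argument: a point $(x,y)$ with $\nu_{i-1}<x<\nu_i$ and $y>\lambda_k$ lies above \emph{all} selected values, so the only way it avoids being in $R_{i-1,k-1}$ is $y=\lambda_k$, impossible since $y>\lambda_k$; hence every such point is in $R_{i-1,k-1}$, already forbidden. Therefore $R_{i-1,k}=\emptyset$ automatically, and the occurrence of $\sigma^-(P,e)$ is also an occurrence of $\sigma(P,e)$.

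So the cleanest route is: first handle the trivial inclusion; then, in the case $e(k)=0$ observe $(i-1,k)\notin C$ so $\sigma(P,e)=\sigma^-(P,e)$ and there is nothing to prove; then in the case $e(k)=1$ invoke the fact that $k$ is a leaf (friendliness condition~(i)) to identify the cell $C_k'=\{(i-1,k-1),(i-1,k)\}$, note $(i-1,k-1)$ survives in $\sigma^-(P,e)$, and conclude that the region $R_{i-1,k}$ is contained in — in fact is a horizontal sub-strip lying inside — the union of $R_{i-1,k-1}$ with the top-right sentinel strip, so that emptiness of $R_{i-1,k-1}$ forces emptiness of $R_{i-1,k}$, because no selected point and (by $231$-avoidance plus the position of $\lambda_k$) no other point of $G(\pi)$ can appear there. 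The main obstacle I anticipate is making precise the claim that $R_{i-1,k}$ adds nothing beyond $R_{i-1,k-1}$: one must be careful that $\lambda_k$ need not be the global maximum $n$ of $\pi$, so there could in principle be points of $G(\pi)$ above $\lambda_k$; ruling these out in the column-range $(\nu_{i-1},\nu_i)$ is exactly where the emptiness of $R_{i-1,k-1}$ in the $\sigma^-$-occurrence, combined with the definition $R_{i-1,k-1}=(\nu_{i-1},\nu_i)\times(\lambda_{k-1},\lambda_k)$ and the sentinel $R_{i-1,k}=(\nu_{i-1},\nu_i)\times(\lambda_k,n+1)$ stacked directly on top of it, does the work — any point of $G(\pi)$ in that whole column-strip above $\lambda_{k-1}$ lands in one of the two, and a point in $R_{i-1,k}$ together with the selected entries realizing values $\lambda_{k}$ at column $\nu_k$ (to its right) and some earlier entry gives a forbidden $231$ or violates (i). I would write this last step out carefully as the only non-routine part.
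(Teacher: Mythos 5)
Your framing (trivial inclusion; nothing to prove when $e(k)=0$; the real work is showing that a $231$-avoiding $\pi$ containing $\sigma^-(P,e)$ also contains $\sigma(P,e)$) is right, but the execution of the case $e(k)=1$ rests on three incorrect claims. First, friendliness condition~(i) states that $k$ is \emph{not} a leaf: it requires $c_L(k)\neq\varepsilon$. Hence $\min P(k)\leq k-1$, and the two cells of $C_k'$ are $(i-1,\min P(k)-1)$ and $(i-1,k)$; the cell $(i-1,k-1)$ you want to lean on is shaded only when $k$ is a leaf, so it is \emph{not} in $C$ here (cells in column $i-1$ can only come from $C_k\cup C_k'$, and $C_k=\emptyset$ since $k$ has no right child). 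Second, even if $(i-1,k-1)$ were shaded, the geometric step fails: $R_{i-1,k-1}=(\nu_{i-1},\nu_i)\times(\lambda_{k-1},\lambda_k)$ and $R_{i-1,k}=(\nu_{i-1},\nu_i)\times(\lambda_k,n+1)$ are disjoint stacked regions, so a point with value above $\lambda_k$ never lies in $R_{i-1,k-1}$, and emptiness of the lower region says nothing about the upper one. Third, and decisively, the conclusion you are aiming for --- that the \emph{same} occurrence of $\sigma^-(P,e)$ automatically has $R_{i-1,k}$ empty --- is false. Take $P$ with $\tau(P)=132$ and $e=(1,0)$ (a friendly pattern), so that $\sigma(P,e)=\bigl(132,\{(0,1),(1,1),(1,3)\}\bigr)$, $i=2$, $k=3$, and the removed cell is $(1,3)$. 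The permutation $\pi=15423$ is $231$-avoiding, and the positions $\nu_1=1$, $\nu_2=3$, $\nu_3=4$ (values $1,4,2$) form an occurrence of $\sigma^-(P,e)$, yet $\pi(2)=5$ lies in $R_{1,3}=(1,3)\times(4,6)$, so this occurrence is not an occurrence of $\sigma(P,e)$; no $231$ is created by that point either, so there is no contradiction to extract from the same occurrence.

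The missing idea is that the occurrence must be \emph{repaired}, not shown redundant: this is the exchange argument of the paper's proof. If $G(\pi)\cap R_{i-1,k}\neq\emptyset$, replace the point $q(k)$ playing the role of $k$ by the leftmost point $q'$ of $G(\pi)\cap R_{i-1,k}$ (in the example above: replace the value $4$ at position $3$ by the value $5$ at position $2$, giving the occurrence $1,5,2$ at positions $1,2,4$, which does satisfy all of $C$). One then checks that all shaded cells remain empty for the modified occurrence, and this is exactly where the friendliness conditions and $231$-avoidance enter: $(i,k-1)\notin C$ by condition~(iii); shaded cells $(a-1,k-1)$ with $a\in L(k)\setminus c_L(k)$ stay empty because a point there would form a $231$ with $q(k)$ and $q(k-1)$; and shaded cells $(i,a)$ with $a\in B_R^-(c_L(k))$ stay empty because a point there would form a $231$ with $q(k)$ and $q(a)$. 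The paper's remark that in general $S_n(\sigma(P,e))\neq S_n(\sigma^-(P,e))$ already signals that no per-occurrence redundancy argument can work; the counterexample above shows it fails even inside $S_n(231)$, so your proof has a genuine gap at its central step.
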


Lemma~\ref{lem:exchange} is an instance of a so-called \defi{coincidence} among mesh patterns (see~\cite{MR4555705}), that is, situations where two sets of patterns have identical sets of avoiders.
To our knowledge, so far this phenomenon has primarily been studied with respect to single patterns, whereas our lemma involves two pairs of patterns.
In particular, in general we have $S_n(\sigma(P,e))\neq S_n(\sigma^-(P,e))$ and even $|S_n(\sigma(P,e))|\neq |S_n(\sigma^-(P,e))|$, and one such example is $(P,e)=(15234,1011)$.
However, by adding the pattern 231 on both sides, these inequalities become equalities.

\begin{proof}
It suffices to show that a 231-avoiding permutation~$\pi\in S_n$ that contains an occurrence of~$\sigma^-(P,e)$ also contains an occurrence of~$\sigma(P,e)$.
This proof uses an exchange argument; see Figure~\ref{fig:exchange}.
Let $i$ be the position of the largest value~$k$ in~$\tau(P)$.
Furthermore, for any~$j\in[k]$ we let $q(j)$ denote the point in~$G(\pi)$ corresponding to the value~$j$ in the occurrence of~$\sigma^-(P,e)$ in~$\pi$.
If $R_{i-1,k}$ contains no points from~$G(\pi)$, then this is also an occurrence of~$\sigma(P,e)$, and we are done.
Otherwise, we let~$q'$ be the leftmost point in~$G(\pi)\cap R_{i-1,k}$, and we claim that replacing~$q(k)$ by~$q'$ creates an occurrence of~$\sigma(P,e)$ in~$\pi$.
To verify this we first observe that~$(i,k-1)\notin C$ by condition~(iii) of friendly mesh patterns.
Secondly, if~$(a-1,k-1)\in C$ for some $a\in L(k)\setminus c_L(k)$, then by~\eqref{eq:Cip} we have~$(a,k-1)\in G(\tau(P))$.
This implies that~$R_{a-1,k}\cap G(\pi)=\emptyset$, otherwise a point in this region would form an occurrence of~231 together with the points~$q(k)$ and~$q(k-1)$.
Thirdly, if~$(i,a)\in C$ for some $a\in L(k)$, then by~\eqref{eq:Ci} we have~$a\in B_R^-(c_L(k))$.
This implies that~$R_{i-1,a}\cap G(\pi)=\emptyset$, otherwise a point in this region would form an occurrence of~231 together with the points~$q(k)$ and~$q(a)$.
This completes the proof.
\end{proof}

\begin{figure}[h!]
\includegraphics[page=2]{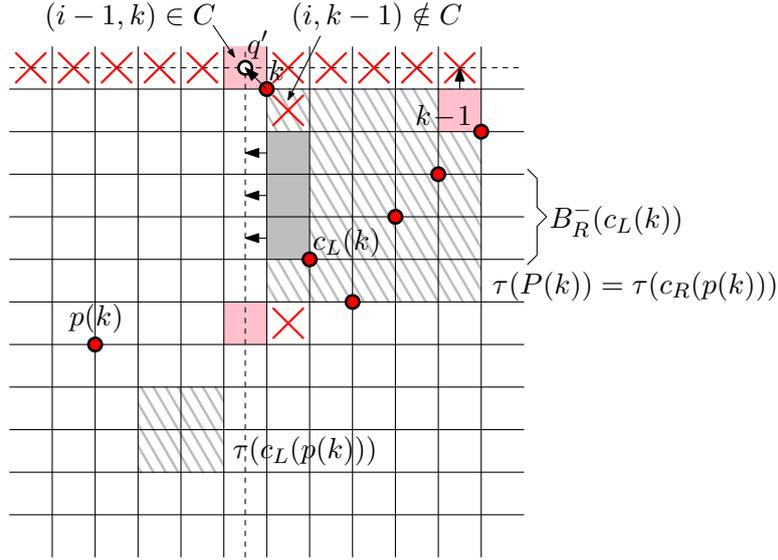}
\caption{Exchange argument in the proof of Lemma~\ref{lem:exchange}.}
\label{fig:exchange}
\end{figure}

\subsection{Proof of Theorem~\ref{thm:algoS}}

We are now in position to prove Theorem~\ref{thm:algoS}.

\begin{proof}[Proof of Theorem~\ref{thm:algoS}]
Lemma~\ref{lem:tame-classical} shows that the classical pattern~231 is tame.
As $(P_i,e_i)$, $i\in [\ell]$, are friendly tree patterns, the mesh patterns~$\sigma^-(P_i,e_i)$, $i\in [\ell]$, are tame by Lemma~\ref{lem:friendly-tame}.
Combining Lemma~\ref{lem:intersection} and Lemma~\ref{lem:exchange} yields that
\[L_n:=S_n\big(231,\sigma(P_1,e_1),\ldots,\sigma(P_\ell,e_\ell)\big)=S_n\big(231,\sigma^-(P_1,e_1),\ldots,\sigma^-(P_\ell,e_\ell)\big)\]
for $n\geq 0$ is a hereditary sequence of zigzag languages.
Theorem~\ref{thm:jump} thus guarantees that Algorithm~J initialized with the identity permutation~$\ide_n$ visits every permutation of~$L_n$ exactly once.
We now show that Algorithm~S is the preimage of Algorithm~J under the mapping~$\tau$, which is a bijection between~$\cL_n:=\cT_n((P_1,e_1),\ldots,(P_\ell,e_\ell))$ and~$L_n$ by Theorem~\ref{thm:bijection}.
It was shown in~\cite[Sec.~3.3]{MR4391718} that minimal jumps in~$S_n(231)$ are in one-to-one correspondence with tree rotations in~$\cT_n$.
Specifically, a minimal left jump of a value~$j$ in the permutation corresponds to an up-rotation of~$j$ in the binary tree, and a minimum right jump of~$j$ corresponds to a down-rotation of~$j$.
Consequently, minimal jumps in~$L_n$ are in one-to-one correspondence with minimal slides in~$\cL_n$.
This completes the proof of the theorem.
\end{proof}

\subsection{Efficient implementation}
\label{sec:algo-eff}

We now describe an efficient implementation of Algorithm~S.
In particular, this implementation is \emph{history-free}, i.e., it does not require to store all previously visited binary trees, but only maintains the current tree in memory.
Algorithm~H below is a straightforward translation of the history-free Algorithm~M for zigzag languages presented in~\cite{MR4598046} from permutations to binary trees.

\begin{algo}{Algorithm~H}{History-free minimal slides}
For friendly tree patterns~$(P_1,e_1),\ldots,(P_\ell,e_\ell)$, this algorithm generates all binary trees from~$\cT_n$ that avoid~$(P_1,e_1),\ldots,(P_\ell,e_\ell)$, i.e., the set $\cL_n:=\cT_n((P_1,e_1),\ldots,(P_\ell,e_\ell))\seq\cT_n$ by minimal slides in the same order as Algorithm~S.
It maintains the current tree in the variable~$T$, and auxiliary arrays $o=(o_1,\ldots,o_n)$ and $s=(s_1,\ldots,s_n)$.
\begin{enumerate}[label={\bfseries H\arabic*.}, leftmargin=8mm, noitemsep, topsep=3pt plus 3pt]
\item{} [Initialize] Set $T\gets \tau^{-1}(\ide_n)$, and $o_j\gets \diru$, $s_j\gets j$ for $j=1,\ldots,n$.
\item{} [Visit] Visit the current binary tree~$T$.
\item{} [Select vertex] Set $j\gets s_n$, and terminate if $j=1$.
\item{} [Slide] In the current binary tree~$T$, perform a slide of the vertex~$j$ that is minimal w.r.t.~$\cL_n$, where the slide direction is up if $o_j=\diru$ and down if~$o_j=\dird$.
\item{} [Update $o$ and~$s$] Set $s_n\gets n$.
If $o_j=\diru$ and $j$ is either the root or its parent is larger than~$j$ set $o_j=\dird$, or if $o_j=\dird$ and $j$ has no left child set $o_j=\diru$, and in both cases set $s_j\gets s_{j-1}$ and $s_{j-1}= j-1$. Go back to H2.
\end{enumerate}
\end{algo}

The two auxiliary arrays used by Algorithm~H store the following information.
The direction in which vertex~$j$ slides in the next step is maintained in the variable~$o_j$.
Furthermore, the array~$s$ is used to determine the vertex that slides in the next step.
Specifically, the vertex~$j$ that slides in the next steps is retrieved from the last entry of the array~$s$ in step~H3, by the instruction~$j\gets s_n$.

The running time per iteration of the algorithm is governed by the time it takes to compute a minimal slide in step~H4.
This boils down to testing containment of the tree patterns~$(P_i,e_i)$, $i\in[\ell]$, in~$T$.

\begin{theorem}
\label{thm:algoH}
Let $(P_1,e_1),\ldots,(P_\ell,e_\ell)$ be friendly tree patterns with $P_i\in \cT_{k_i}$ for $i\in[\ell]$.
Then Algorithm~H visits every binary tree from~$\cT_n((P_1,e_1),\ldots,(P_\ell,e_\ell))$ exactly once, in the same order as Algorithm~S, in time~$\cO(n^2\sum_{i=1}^\ell k_i^2)$ per binary tree.
\end{theorem}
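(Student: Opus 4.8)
The plan is to split the statement into two parts --- that Algorithm~H visits every tree of $\cT_n((P_1,e_1),\ldots,(P_\ell,e_\ell))$ exactly once in the same order as Algorithm~S, and that it runs in time $\cO(n^2\sum_{i=1}^\ell k_i^2)$ per tree --- and to prove them in turn. The first part is obtained essentially for free by transporting a known result through the bijection~$\tau$, and the second is a direct cost analysis of step~H4.

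For the first part, the idea is that Algorithm~H is precisely the image, under the bijection~$\tau$ of Theorem~\ref{thm:bijection}, of the history-free Algorithm~M of~\cite{MR4598046} run on the language $L_n:=S_n(231,\sigma(P_1,e_1),\ldots,\sigma(P_\ell,e_\ell))$ started from~$\ide_n$. The proof of Theorem~\ref{thm:algoS} already provides that $L_n$, $n\geq 0$, is a hereditary sequence of zigzag languages, that $\tau$ is a bijection between $\cL_n:=\cT_n((P_1,e_1),\ldots,(P_\ell,e_\ell))$ and~$L_n$ carrying the right path $\tau^{-1}(\ide_n)$ to~$\ide_n$, that up-rotations and down-rotations of a vertex~$j$ correspond (via~\cite[Sec.~3.3]{MR4391718}) to left jumps and right jumps of the value~$j$, hence that minimal slides in~$\cL_n$ correspond to minimal jumps in~$L_n$, and therefore that Algorithm~S is the $\tau$-preimage of Algorithm~J. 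Since~\cite{MR4598046} shows that Algorithm~M run on a zigzag language from~$\ide_n$ visits exactly the permutations visited by Algorithm~J, in the same order, it remains only to check that the array updates in steps~H3--H5 mirror those of Algorithm~M. The array~$s$ is updated by identical instructions, so the only point of substance is the direction-reversal test in step~H5: Algorithm~M reverses the jump direction of a value~$j$ exactly when no further jump of~$j$ in the current direction is possible, i.e., when~$j$ occupies the extreme position of the current permutation in that direction. Writing $\pi=\tau(T)$ and using that the last entry of a preorder traversal of a subtree is its largest vertex, one sees that the entry immediately preceding a non-root vertex~$j$ in~$\pi$ is smaller than~$j$ if and only if $j=c_R(p(j))$; hence the up-rotation $j\diru$ (the left jump of~$j$) is undefined exactly when $j=r(T)$ or $p(j)>j$, and the down-rotation $j\dird$ (the right jump of~$j$) is undefined exactly when $c_L(j)=\varepsilon$, which are precisely the two alternatives in step~H5. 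This proves the first part.

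For the running time, steps~H2, H3 and~H5 take $\cO(n)$ time each (in fact $\cO(1)$ for all bookkeeping, plus the cost of emitting the tree), so the per-iteration cost is dominated by step~H4. There, starting from the current tree $T\in\cL_n$, we realize a minimal slide of the selected vertex~$j$ by applying rotations one at a time in direction~$o_j$ and, after each rotation, testing whether the resulting tree still avoids each of $(P_1,e_1),\ldots,(P_\ell,e_\ell)$, stopping at the first tree that does. By the first part a valid tree is reached before~$j$ would hit a wall, so at most $\cO(n)$ rotations are performed, each in $\cO(1)$ time. A single containment test of a tree pattern $(P_i,e_i)$, $P_i\in\cT_{k_i}$, in~$T$ can be carried out in $\cO(n\,k_i^2)$ time by a dynamic program over pairs of vertices: for every vertex~$v$ of~$T$ and every vertex~$u$ of~$P_i$ one computes whether $P_i(u)$ embeds into $T(v)$ with $u\mapsto v$, using that by the search-tree property any such embedding is automatically order-preserving, so that the left and right subpatterns at~$u$ must embed into $L(v)$ and $R(v)$ as a child (if the corresponding edge of~$P_i$ is contiguous) or as an arbitrary descendant (if it is non-contiguous), the latter resolved with an auxiliary ``embeds somewhere inside $T(v)$'' table; the pattern is contained in~$T$ iff the root of~$P_i$ embeds at some vertex of~$T$. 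Summing over~$i$, one membership test costs $\cO(n\sum_{i=1}^\ell k_i^2)$, and multiplying by the $\cO(n)$ rotations of step~H4 yields the claimed bound of $\cO(n^2\sum_{i=1}^\ell k_i^2)$ per generated tree.

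I expect the obstacle here to be bookkeeping rather than depth: carefully matching the tree-theoretic reversal conditions of step~H5 with the positional conditions of Algorithm~M, and setting up and analyzing the mixed-pattern containment subroutine used in step~H4 --- the latter being exactly where the factor $n\sum_{i=1}^\ell k_i^2$ in the final bound comes from.
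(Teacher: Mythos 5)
Your proposal is correct and follows essentially the same route as the paper: correctness is deferred to the history-free Algorithm~M result of~\cite{MR4598046} (transported through the bijection~$\tau$ and the zigzag-language structure already established for Theorem~\ref{thm:algoS}), and the running time comes from at most~$n$ rotations per slide combined with an $\cO(nk_i^2)$ dynamic-programming containment test whose two tables (embedding rooted at a vertex, and embedding somewhere in its subtree) match the paper's ``containment at the root'' and ``containment'' tables. Your explicit verification that the direction-reversal conditions in step~H5 correspond to the jump conditions in the permutation picture is a detail the paper leaves implicit, but it is consistent with its argument.
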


\begin{proof}
The correctness of Algorithm~H follows from~\cite[Thm.~29]{MR4598046}.

For the running time, note that any slide consists of at most~$n$ rotations, and that testing whether~$T\in \cT_n$ contains the tree pattern~$(P_i,e_i)$, $P_i\in \cT_{k_i}$, can be done in time~$\cO(nk_i^2)$ by dynamic programming as follows.
We store for each vertex of~$T$ a table of size~$k_i$ that gives information for each vertex of~$P_i$ whether the corresponding subtree of~$T$ contains the corresponding subtree of~$P_i$ as a pattern.
In fact, we need two such tables, one for tracking `containment' and the other for tracking the stronger property `containment at the root'.
This information can be computed bottom-up in time~$\cO(k_i^2)$ for each of the $n$ vertices of~$T$ (cf.~\cite{MR662611}).
\end{proof}

For details, see our C++ implementation~\cite{cos_btree}.

\section{Equality of tree patterns}
\label{sec:equality}

It turns out that for some edges~$(i,p(i))$ in a tree pattern~$(P,e)$, it is irrelevant whether the edge is considered contiguous ($e(i)=1$) or non-contiguous ($e(i)=0$).
The following theorem captures these conditions formally, and it establishes that~$\cT_n(P,e)=\cT_n(P,e')$ for tree patterns~$(P,e)$ and~$(P,e')$ where $e$ and~$e'$ differ only in a single value.
Theorem~\ref{thm:e-equality} will be used heavily in the tables in the next section, where those `don't care' values of~$e$ are denoted by the hyphen~$\hyph$.
The statement and proof of this theorem is admittedly slightly technical, and we recommend to skip it on first reading.

\begin{figure}[t!]
\includegraphics[page=1]{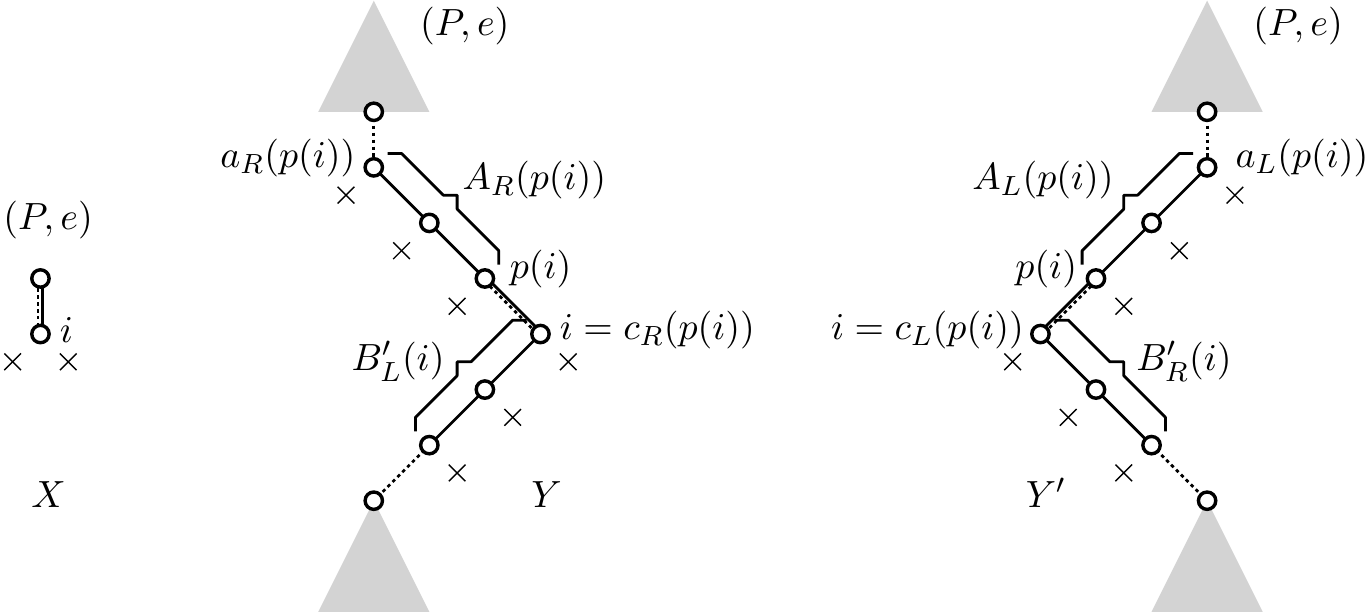}
\caption{Illustration of the definitions in~\eqref{eq:ABXY}.
The gray crosses indicate non-existing subtrees, i.e., subtrees that are empty~$\varepsilon$.
The edge~$(i,p(i))$ that can be both contiguous or non-contiguous by Theorem~\ref{thm:e-equality} is drawn as a double line that is half solid and half dotted.
The same convention is used in later figures.
Vertically drawn edges indicate that this can be a left edge or a right edge.}
\label{fig:XYY'}
\end{figure}

Let $(P,e)$, $P\in\cT_k$, be a tree pattern.
For any vertex~$i\in[k]$, we define
\begin{subequations}
\label{eq:ABXY}
\begin{equation}
\begin{split}
B_L'(i) &:= \big\{c_L^\ell(i)\mid \ell\geq 0\text{ and } e(c_L^j(i))=1 \text{ for all } j=1,\ldots,\ell\big\}, \\
A_R(i) &:= \big\{a\in[k]\mid i=c_R^\ell(a) \text{ for some } \ell\geq 0 \text{ and } e(c_R^j(a))=1 \text{ for all } j=1,\ldots,\ell\big\}.
\end{split}
\end{equation}

In words, $B_L'(i)$ are the descendants of~$i$ in~$P$ that are reachable from~$i$ along a path of contiguous left edges.
Furthermore, $A_R(i)$ are the predecessors of~$i$ in~$P$ that reach~$i$ along a path of contiguous right edges.
Both sets include the vertex~$i$ itself.
We also write $a_R(i)$ for the top vertex from~$A_R(i)$ in the tree~$P$.

We define analogous sets~$B_R'(i)$ and~$A_L(i)$ and vertices~$a_L(i)$ that are obtained by interchanging left and right in the definitions before.
Specifically, these sets are defined as
\begin{equation}
\begin{split}
B_R'(i) &:= \big\{c_R^\ell(i)\mid \ell\geq 0\text{ and } e(c_R^j(i))=1 \text{ for all } j=1,\ldots,\ell\big\}, \\
A_L(i) &:= \big\{a\in[k]\mid i=c_L^\ell(a) \text{ for some } \ell\geq 0 \text{ and } e(c_L^j(a))=1 \text{ for all } j=1,\ldots,\ell\big\},
\end{split}
\end{equation}
and $a_L(i)$ is defined as the top vertex from~$A_L(i)$ in the tree~$P$.
Using these definitions, we consider the following subsets~$X,Y,Y'\seq[k]$ of vertices of~$P$; see Figure~\ref{fig:XYY'}.
\begin{equation}
\label{eq:XY}
\begin{split}
X &:= \big\{i\in[k] \mid c_L(i)=\varepsilon \,\wedge\, c_R(i)=\varepsilon\big\}, \\
Y &:= \Big\{i\in[k] \mid c_L(i)\neq \varepsilon \,\wedge\, i=c_R(p(i)) \,\wedge\, \big(A_R(p(i))=\{p(i)\} \vee B_L'(i)=\{i\}\big) \,\wedge {} \\
  & \hspace{3cm} \big(c_L(j)=\varepsilon \text{ for all } j\in A_R(p(i))\big) \,\wedge\, \big(c_R(j)=\varepsilon \text{ for all } j\in B_L'(i)\big) \,\wedge {} \\
  & \hspace{3cm} \big(a_R(p(i))=r(P)\vee e(a_R(p(i)))=0\big)\Big\}, \\
Y' &:= \Big\{i\in[k] \mid c_R(i)\neq \varepsilon \,\wedge\, i=c_L(p(i)) \,\wedge\, \big(A_L(p(i))=\{p(i)\} \vee B_R'(i)=\{i\}\big) \,\wedge {} \\
  & \hspace{3cm} \big(c_R(j)=\varepsilon \text{ for all } j\in A_L(p(i))\big) \,\wedge\, \big(c_L(j)=\varepsilon \text{ for all } j\in B_R'(i)\big) \,\wedge {} \\
  & \hspace{3cm} \big(a_L(p(i))=r(P)\vee e(a_L(p(i)))=0\big)\Big\}.
\end{split}
\end{equation}
\end{subequations}

Note that $X$ is simply the set of all leaves of~$P$.
The six conditions in the conjunction that defines~$Y$ express the following facts: (1) $i$ has a left child; (2) $i$ is a right child of its parent~$p(i)$; (3) one of the sets~$A_R(p(i))$ or~$B_L'(i)$ is trivial; (4) no vertex in~$A_R(p(i))$ has a left child; (5) no vertex in~$B_L'(i)$ (including~$i$ itself) has a right child; (6) the top vertex~$a_R(p(i))$ in~$A_R(p(i))$ is either the root or the edge to its parent is non-contiguous.
The definition of~$Y'$ is analogous, but with left and right interchanged.

\begin{theorem}
\label{thm:e-equality}
Let $(P,e)$, $P\in\cT_k$, be a tree pattern, and let $X,Y,Y'$ be the sets of vertices in~$P$ defined in~\eqref{eq:ABXY} w.r.t.~$(P,e)$.
Furthermore, let $i\in X\cup Y\cup Y'$ be a vertex of~$P$ with $e(i)=0$, and define $e'\colon [k]\setminus r(P)\rightarrow \{0,1\}$ by $e'(i):=1$ and $e'(j):=e(j)$ for all $j\in[k]\setminus i$.
Then we have $\cT_n(P,e)=\cT_n(P,e')$.
\end{theorem}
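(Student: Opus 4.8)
The plan is to prove the equivalence $\cT_n(P,e)=\cT_n(P,e')$ by showing that, for the single vertex $i$ under consideration, any tree $T\in\cT_n$ that contains the pattern $(P,e)$ via some injective map $f$ can be modified to contain the (formally more restrictive) pattern $(P,e')$; the reverse inclusion $\cT_n(P,e')\subseteq\cT_n(P,e)$ is immediate from the monotonicity remark in Section~\ref{sec:pat-tree}. So the real content is: given an occurrence $f$ of $(P,e)$ in $T$ in which $f(i)$ is merely a (left or right, as appropriate) descendant of $f(p(i))$, produce an occurrence of $(P,e')$, i.e.\ one in which $f(i)$ is literally the (left or right) child of $f(p(i))$. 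I would split into the three cases $i\in X$, $i\in Y$, $i\in Y'$, and treat $Y$ in detail, with $Y'$ following by the mirror symmetry $\mu$ (recall $\cT_n(\mu(P,e))=\mu(\cT_n(P,e))$), and $X$ as the genuinely easy warm-up.

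For $i\in X$ (a leaf of $P$): since $i$ has no children in $P$, the subtree constraints below $i$ are vacuous, so I can simply replace $f(i)$ by the appropriate child of $f(p(i))$. More precisely, suppose $i=c_R(p(i))$ (the left case is symmetric); then $f(i)\in R(f(p(i)))$, and I replace $f(i)$ by $c_R(f(p(i)))$, which is nonempty because $f(p(i))$ has a right descendant. I must check injectivity is preserved and that no other contiguity/descendant relation of the pattern is broken: the only edges of $P$ incident to $i$ are the edge to $p(i)$, which now becomes contiguous as required, so this is clear. Actually one subtlety: $c_R(f(p(i)))$ might already be in the image of $f$; but $c_R(f(p(i)))$ is an ancestor (in $T$) of the old $f(i)$, and the only pattern vertices mapped into $R(f(p(i)))$ are those in the subtree $P(i)=\{i\}$, so no conflict arises — $c_R(f(p(i)))$ is not used by any other pattern vertex. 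This is where conditions (4)–(5) in the definition of $Y$ will earn their keep in the harder case.

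For $i\in Y$: here $i$ has a nonempty left subtree $L(i)$ in $P$, and $i=c_R(p(i))$. The six defining conditions are engineered precisely so that the "block" of pattern vertices $A_R(p(i))\cup B_L'(i)$ has a rigid shape — a right branch of contiguous edges reaching down to $p(i)$, then the contiguous-to-become edge to $i$, then a left branch of contiguous left edges below $i$ — and, crucially, none of these vertices carries any side subtree (conditions (4) and (5)), while condition (6) ensures the right branch $A_R(p(i))$ is not forced to be contiguously attached to anything above it, so its top vertex $a_R(p(i))$ maps to a vertex with "room" above it in $T$, and condition (3) ensures at least one of the two branches is trivial so there is no two-dimensional obstruction. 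Given an occurrence $f$ of $(P,e)$, the image $f(p(i))$ sits in $T$ with $f(i)$ somewhere in its right subtree $R(f(p(i)))$; I want to slide $f$ along so that $f(i)$ becomes exactly $c_R(f(p(i)))$. The idea is to reassign $f$ on the whole block $A_R(p(i))\cup\{i\}\cup B_L'(i)$ simultaneously to the corresponding vertices of the maximal right-then-left branch in $T$ that starts at the old $f(a_R(p(i)))$ — using the search-tree property of $T$ to locate these vertices — and to argue that the descendants of $B_L'(i)$'s bottom vertex (which had to be mapped somewhere below) still have a valid home because no side subtrees hang off the block. The necessary non-conflict of labels again follows because the only pattern vertices mapped into the relevant region of $T$ are exactly those in the block together with $P(\text{bottom of }B_L'(i))$, by the search-tree property and the structure of $\tau(P)$.

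The main obstacle I anticipate is the bookkeeping in the $Y$-case: verifying that the reassigned $f$ is still injective and still satisfies \emph{every} contiguous and non-contiguous edge condition of $(P,e')$ — in particular that pushing the block "up" in $T$ to make the edge $(i,p(i))$ contiguous does not inadvertently violate the descendant requirement for the edge from $a_R(p(i))$ to its parent in $P$ (this is exactly what condition (6) rules out), nor collide with vertices used elsewhere in the occurrence (ruled out by conditions (4)–(5) together with the preorder structure of $\tau(P)$ and the $231$-avoidance implicit in $T\in\cT_n$). I would organize this as an explicit construction of $f'$ from $f$, followed by a checklist over the edge types, leaning on Figure~\ref{fig:XYY'} to keep the case analysis honest. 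A clean alternative worth considering — and possibly the cleaner writeup — is to translate everything through Theorem~\ref{thm:bijection} and argue instead that the mesh patterns $\sigma(P,e)$ and $\sigma(P,e')$ have the same $231$-avoiders, i.e.\ prove a coincidence statement in the spirit of Lemma~\ref{lem:exchange}; the conditions defining $X,Y,Y'$ should translate into the statement that the extra shaded cells $C_i'$ added when passing from $e(i)=0$ to $e(i)=1$ are redundant in the presence of $231$ and the other shaded cells, and the exchange argument there may be less fiddly than the direct tree surgery.
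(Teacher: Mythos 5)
Your overall architecture coincides with the paper's: one inclusion comes from monotonicity (note that in your opening sentence you have the two inclusions swapped --- monotonicity gives $\cT_n(P,e)\subseteq\cT_n(P,e')$ for free, and the remapping argument must supply $\cT_n(P,e')\subseteq\cT_n(P,e)$ by turning an occurrence of $(P,e)$ into one of $(P,e')$), the case $i\in X$ is handled exactly as in the paper (including the injectivity check, which is sound), and $Y'$ follows from $Y$ by mirroring. The genuine gap is the placement rule in the case $i\in Y$. You propose to reassign the block $A_R(p(i))\cup\{i\}\cup B_L'(i)$ onto ``the maximal right-then-left branch in $T$ that starts at the old $f(a_R(p(i)))$''. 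Anchoring at the old image of the top of the block fails: take $(P,e)=(132,01)$ with $i=3$, so that $p(i)=r(P)$, $A_R(p(i))=\{p(i)\}$ and $B_L'(i)=\{3,2\}$, and the host tree $T$ with $\tau(T)=1243$, i.e.\ $c_R(1)=2$, $c_R(2)=4$, $c_L(4)=3$. It contains $(132,01)$ via $f(1)=1$, $f(3)=4$, $f(2)=3$, but keeping $f(1)=1$ and setting $f(3)=c_R(1)=2$ forces $f(2)=c_L(2)$, which does not exist; the contiguous occurrence requires moving $f(p(i))$ \emph{down} to vertex~$2$. The more charitable reading --- slide along the maximal right branch from $f(p(i))$ and then down the left branch of its last vertex --- also fails, because that right branch can overshoot the point where $f(i)$'s left branch actually hangs off (e.g.\ $(1432,011)$ in the host with preorder $1,5,4,3,2,6$, where the right branch from the root is $1,5,6$ and ends in a leaf while the occurrence sits under $c_L(5)$).

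The missing idea is that the new positions must be anchored at the \emph{bottom}, i.e.\ at where $f(i)$ and its contiguous left branch actually sit in $T$, and that condition~(3) in the definition of $Y$ is precisely what makes a one-move relocation possible; the paper accordingly splits into the two subcases it provides. If $A_R(p(i))=\{p(i)\}$, let $b$ be the top vertex of the maximal left branch of $T$ containing $f(i)$; since $f(i)\in R(f(p(i)))$ one has $b=c_R(p(b))$, and one remaps $p(i)\mapsto p(b)$ and $B_L'(i)$ onto $b$ and its successive left children --- conditions (4)--(6) guarantee that $p(i)$ has no left subtree to relocate, that no right subtrees hang off $B_L'(i)$, and that lowering $f(p(i))$ does not violate the (necessarily non-contiguous or absent) edge above $p(i)$. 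If instead $B_L'(i)=\{i\}$, let $b$ be the vertex of $B_R(f(p(i)))$ whose left subtree contains $f(i)$, remap $i\mapsto b$ and $A_R(p(i))$ onto $p(b)$ and its successive ancestors. Your checklist of what conditions (3)--(6) must certify is essentially the right one, but without the correct anchoring there is no construction for them to certify. (Your alternative suggestion of proving a mesh-pattern coincidence instead is not the paper's route and is not developed enough to assess.)
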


Note that $(P,e')$ differs from $(P,e)$ in that the edge~$(i,p(i))$ from $i$ to its parent~$p(i)$ is contiguous instead of non-contiguous.

\begin{proof}
It suffices to show that if~$T\in\cT_n$ contains~$(P,e)$, then $T$ contains~$(P,e')$.
Let $T\in\cT_n$ and consider an occurrence of~$(P,e)$ in~$T$, witnessed by an injection~$f\colon [k]\rightarrow[n]$ that satisfies the conditions stated in Section~\ref{sec:pat-tree}.

We consider the cases~$i\in X$, $i\in Y$, or~$i\in Y'$ separately.
The last two are symmetric, so it suffices to consider whether~$i\in X$ or~$i\in Y$.

\textbf{Case~(a):} $i\in X$.
As $e(i)=0$, $f(i)$ is a descendant of~$f(p(i))$ in~$T$.
Instead of mapping~$i$ to~$f(i)$ in~$T$, we remap it to the corresponding direct child of~$f(p(i))$ in~$T$.
Specifically, if $i=c_L(p(i))$ in~$P$, then we remap~$i$ to~$c_L(f(p(i)))$ in~$T$, and if $i=c_R(p(i))$ in~$P$, then we remap~$i$ to~$c_R(f(p(i)))$ in~$T$.
This is possible as $i$ is a leaf in~$P$.
This shows that $T$ contains~$(P,e')$, as claimed.

\textbf{Case~(b):} $i\in Y$.
We distinguish the subcases~$A_R(p(i))=\{p(i)\}$ and~$B_L'(i)=\{i\}$, at least one of which must hold by the definition of~$Y$ in~\eqref{eq:XY}.
The arguments in these two cases are illustrated in Figure~\ref{fig:e-equality}.

\begin{figure}
\centerline{
\includegraphics[page=2]{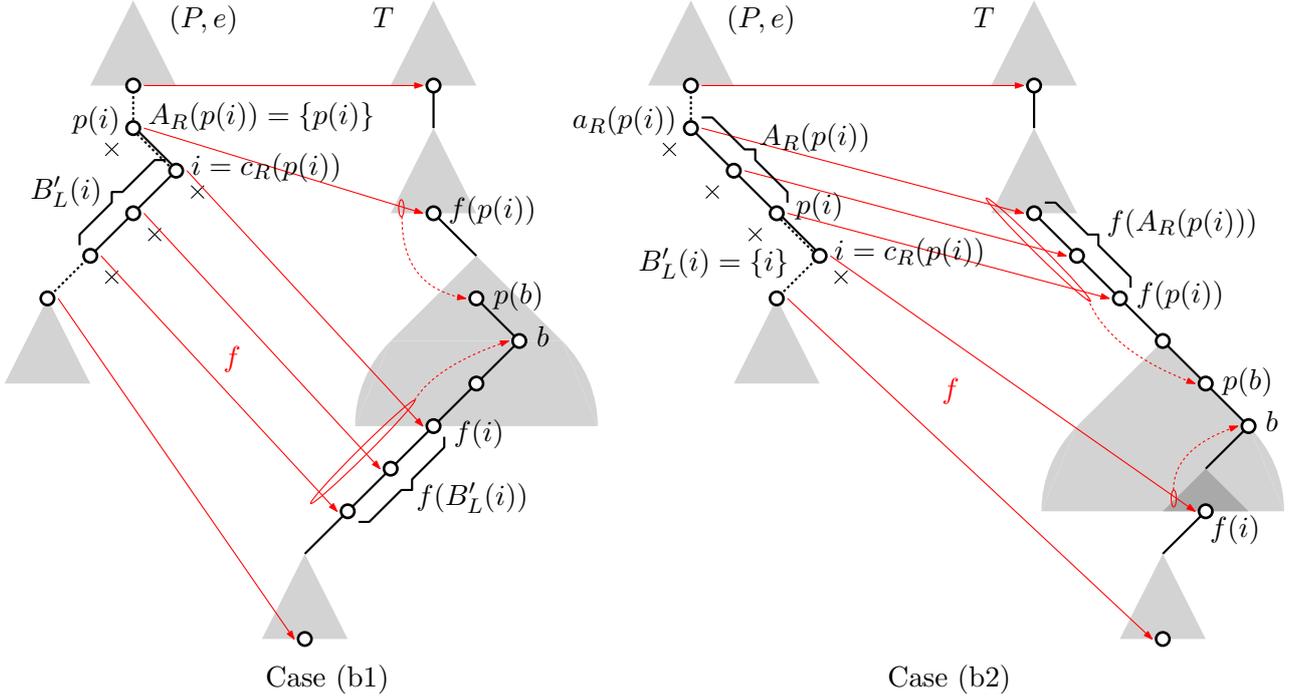}
}
\caption{Illustration of the proof of Theorem~\ref{thm:e-equality}.
The dashed arrows indicate the remapping argument.}
\label{fig:e-equality}
\end{figure}

\textbf{Case~(b1):} $A_R(p(i))=\{p(i)\}$, i.e., $p(i)$ is the root or the edge from~$p(i)$ to its parent is non-contiguous.
We consider the vertex~$f(i)$ in~$T$, and we let~$b$ be the top vertex in~$T$ such that~$f(i)\in B_L(b)$.
We remap~$p(i)$ to~$p(b)$ in~$T$, and we remap $B_L'(i)$ (including $i$ itself) to~$b$ and its direct left descendants.
By the definition~\eqref{eq:XY}, we know that in~$P$ we have $c_L(p(i))=\varepsilon$, $c_R(j)=\varepsilon$ for all $j\in B_L'(i)$, and $p(i)=r(P)$ or $e(p(i))=0$, which ensures that the remapping indeed witnesses an occurrence of~$(P,e')$.

\textbf{Case~(b2):} $B_L'(i)=\{i\}$, i.e., the edge from~$i$ to its left child is non-contiguous.
We consider the vertex~$f(p(i))$ in~$T$, and we let~$b$ be the vertex in~$T$ such that~$b\in B_R(f(p(i)))$ and~$f(i)\in L(b)$.
We remap~$A_R(p(i))$ (including~$p(i)$ itself) to~$p(b)$ and its direct ancestors, and we remap $f(i)$ to~$b$.
By the definition~\eqref{eq:XY}, we know that in~$P$ we have $c_L(j)=\varepsilon$ for all $j\in A_R(p(i))$, $c_R(i)=\varepsilon$, and $a_R(p(i))=r(P)$ or $e(a_R(p(i)))=0$, which ensures that the remapping indeed witnesses an occurrence of~$(P,e')$.
\end{proof}

\section{Tree patterns on at most 5 vertices}
\label{sec:count}

In order to mine interesting conjectures about tree pattern avoidance\footnote{and to pretend that we are doing `big data'}, we conducted systematic computer experiments with all tree patterns~$(P,e)$ on $k=3,4,5$ vertices; see Tables~\ref{tab:size-3}, \ref{tab:size-4} and~\ref{tab:size-5}, respectively.
Specifically, we computed the corresponding counting sequences~$|\cT_n(P,e)|$ for $n=1,\ldots,12$, and searched for matches within the OEIS~\cite{oeis}.
Those counts were computed using Algorithm~H for friendly tree patterns, and via brute-force methods for non-friendly tree patterns.
As mirrored tree patterns are Wilf-equivalent, our tables only contain the lexicographically smaller of any such pair of mirrored trees, using the compact encoding described in Section~\ref{sec:pat-tree}.
Some of the $e$-sequences contain `don't care' entries~$\hyph$, which means that both possible $e$-values~0 or~1 yield the same sets of pattern-avoiding trees by Theorem~\ref{thm:e-equality}.

The last column contains a reference to a proof that the counting sequence is indeed the listed OEIS entry.
The Wilf-equivalence column contains pointers to a Wilf-equivalent tree pattern in Figures~\ref{fig:wilf3}--\ref{fig:wilf5} that has an established OEIS entry.
The question mark at the pattern~$(31245,1\hyph0\hyph)$ in Table~\ref{tab:size-5} means that the match has not been proved formally for all~$n$, but was only verified experimentally for~$n\leq 12$ (so this is an open problem).
For every other pattern we have a reference or a pointer to a Wilf-equivalent pattern that has a reference, unless it is one of the three new counting sequences denoted by~\newa{}, \newb{}, and~\newc{}, which we added to the OEIS using the sequence numbers \href{https://oeis.org/A365508}{A365508}, \href{https://oeis.org/A365509}{A365509}, and~\href{https://oeis.org/A365510}{A365510}, respectively.

\begin{table}[h!]
\caption{Tree patterns with 3 vertices.
See Section~\ref{sec:count} for explanations.}
\label{tab:size-3}
\tiny
\setlength\tabcolsep{1pt}
\centerline{
\begin{tabular}{l|l|l|rrrrrrrrrrrrr|l|l|l}
$P$ & $e$ & Friendly & \multicolumn{13}{c|}{Counts $|\cT_n(P, e)|$ for $n=1,\ldots,12$} & OEIS & Wilf-equivalence & References \\\hline
123 & 0\hyph & & 1&2&4&8&16&32&64&128&256&512&1024&2048& $\dots$ & \textcolor{red} {\href{https://oeis.org/A000079}{A000079}} & & \cite[Thm.~1]{MR2967227}; Sec.~\ref{sec:123-bin} \\
 & 1\hyph & & 1&2&4&9&21&51&127&323&835&2188&5798&15511& $\dots$ & \textcolor{blue}{\href{https://oeis.org/A001006}{A001006}} & & \cite[Class~4.1]{MR2645188}; Sec.~\ref{sec:123-motzkin}; Tab.~\ref{tab:staggered} \\
\hline
132 & \hyph\hyph & 0\hyph, \hyph0 & 1&2&4&8&16&32&64&128&256&512&1024&2048 &$\dots$ & \textcolor{red}{\href{https://oeis.org/A000079}{A000079}} & $(123,0\hyph)$ Lem.~\ref{lem:subtree} & \cite[Class~4.2]{MR2645188}; \cite[Thm.~1]{MR2967227}; Sec.~\ref{sec:132-bin}; Tab.~\ref{tab:staggered} \\
\hline
213 & \hyph\hyph & & 1&2&4&8&16&32&64&128&256&512&1024&2048&$\dots$ & \textcolor{red}{\href{https://oeis.org/A000079}{A000079}} & $(132,\hyph\hyph)$ Lem.~\ref{lem:pathmove} & \cite[Class~4.2]{MR2645188}; \cite[Thm.~1]{MR2967227}; Sec.~\ref{sec:213-bin}; Tab.~\ref{tab:staggered} \\
\hline
\end{tabular}
}
\end{table}

\begin{table}[h!]
\caption{Tree patterns with 4 vertices.}
\label{tab:size-4}
\tiny
\setlength\tabcolsep{1pt}
\centerline{
\begin{tabular}{l|l|l|rrrrrrrrrrrrr|l|l|l}
$P$ & $e$ & Friendly & \multicolumn{13}{c|}{Counts $|\cT_n(P, e)|$ for $n=1,\ldots,12$} & OEIS & Wilf-equivalence & References \\\hline
1234 & 00\hyph & & 1&2&5&13&34&89&233&610&1597&4181&10946&28657&$\dots$ & \textcolor{red}{\href{https://oeis.org/A001519}{A001519}} & & \cite[Thm.~1]{MR2967227} \\
 & 01\hyph & & 1&2&5&13&35&96&267&750&2123&6046&17303&49721&$\dots$ & \textcolor{green}{\href{https://oeis.org/A005773}{A005773}} & $(1432,\hyph1\hyph)$ Lem.~\ref{lem:subtree} & \\
 & 10\hyph & & 1&2&5&13&35&97&275&794&2327&6905&20705&62642&$\dots$ & \textcolor{blue}{\href{https://oeis.org/A025242}{A025242}} & $(1243,1\hyph\hyph)$ Lem.~\ref{lem:subtree} & \\
 & 11\hyph & & 1&2&5&13&36&104&309&939&2905&9118&28964&92940&$\dots$ & \href{https://oeis.org/A036765}{A036765} & & \cite[Class~5.1]{MR2645188}; Tab.~\ref{tab:staggered} \\
\hline
1243 & 0\hyph\hyph & 00\hyph, 0\hyph0 & 1&2&5&13&34&89&233&610&1597&4181&10946&28657&$\dots$ & \textcolor{red}{\href{https://oeis.org/A001519}{A001519}} & $(1234,00\hyph)$ Lem.~\ref{lem:subtree} & \cite[Thm.~1]{MR2967227} \\
 & 1\hyph\hyph & & 1&2&5&13&35&97&275&794&2327&6905&20705&62642&$\dots$ & \textcolor{blue}{\href{https://oeis.org/A025242}{A025242}} & & \cite[Class~5.2]{MR2645188}; Thm.~\ref{thm:motz24}; Tab.~\ref{tab:staggered} \\
\hline
1324 & 0\hyph\hyph & & 1&2&5&13&34&89&233&610&1597&4181&10946&28657&$\dots$ & \textcolor{red}{\href{https://oeis.org/A001519}{A001519}} & $(1234,00\hyph)$ Lem.~\ref{lem:subtree} & \cite[Thm.~1]{MR2967227} \\
 & 1\hyph\hyph & & 1&2&5&13&35&97&275&794&2327&6905&20705&62642&$\dots$ & \textcolor{blue}{\href{https://oeis.org/A025242}{A025242}} & $(1243,1\hyph\hyph)$ Lem.~\ref{lem:pathmove} & \cite[Class~5.2]{MR2645188}; Thm.~\ref{thm:motz24}; Tab.~\ref{tab:staggered} \\
\hline
1423 & 0\hyph\hyph, \hyph0\hyph & 0\hyph\hyph, \hyph0\hyph & 1&2&5&13&34&89&233&610&1597&4181&10946&28657&$\dots$ & \textcolor{red}{\href{https://oeis.org/A001519}{A001519}} & $(1234,00\hyph)$ Lem.~\ref{lem:subtree} & \cite[Thm.~1]{MR2967227}; Tab.~\ref{tab:staggered} \\
 & 11\hyph & & 1&2&5&13&35&97&275&794&2327&6905&20705&62642&$\dots$ & \textcolor{blue}{\href{https://oeis.org/A025242}{A025242}} & $(1324,1\hyph\hyph)$  Lem.~\ref{lem:Lmove3} & \cite[Class~5.2]{MR2645188} \\
\hline
1432 & \hyph0\hyph & \hyph0\hyph & 1&2&5&13&34&89&233&610&1597&4181&10946&28657&$\dots$ & \textcolor{red}{\href{https://oeis.org/A001519}{A001519}} & $(1234,00\hyph)$ Lem.~\ref{lem:subtree} & \cite[Thm.~1]{MR2967227} \\
 & \hyph1\hyph & 01\hyph & 1&2&5&13&35&96&267&750&2123&6046&17303&49721&$\dots$ & \textcolor{green}{\href{https://oeis.org/A005773}{A005773}} & & \cite[Class~5.3]{MR2645188}; Sec.~\ref{sec:1432-motzkin}; Tab.~\ref{tab:staggered} \\
\hline
2134 & \hyph0\hyph & & 1&2&5&13&34&89&233&610&1597&4181&10946&28657&$\dots$ & \textcolor{red}{\href{https://oeis.org/A001519}{A001519}} & $(1432,\hyph0\hyph)$ Lem.~\ref{lem:pathmove} & \cite[Thm.~1]{MR2967227} \\
 & \hyph1\hyph & & 1&2&5&13&35&97&275&794&2327&6905&20705&62642&$\dots$ & \textcolor{blue}{\href{https://oeis.org/A025242}{A025242}} & $(1243,1\hyph\hyph)$ Lem.~\ref{lem:pathmove} & \cite[Class~5.2]{MR2645188}; Thm.~\ref{thm:motz24} \\
\hline
2143 & \hyph0\hyph & \hyph0\hyph & 1&2&5&13&34&89&233&610&1597&4181&10946&28657&$\dots$ & \textcolor{red}{\href{https://oeis.org/A001519}{A001519}} & $(1423,\hyph0\hyph)$ Lem.~\ref{lem:pathmove} & \cite[Thm.~1]{MR2967227}; Tab.~\ref{tab:staggered} \\
 & \hyph1\hyph & \hyph10 & 1&2&5&13&35&97&275&794&2327&6905&20705&62642&$\dots$ & \textcolor{blue}{\href{https://oeis.org/A025242}{A025242}} &  &\cite[Class~5.2]{MR2645188} \\
\end{tabular}
}
\end{table}

\begin{table}
\caption{Tree patterns with 5 vertices.}
\label{tab:size-5}
\tiny
\setlength\tabcolsep{1pt}
\centerline{
\begin{tabular}{l|l|l|rrrrrrrrrrrrr|l|l|l}
$P$ & $e$ & Friendly & \multicolumn{13}{c|}{Counts $|\cT_n(P, e)|$ for $n=1,\ldots,12$} & OEIS & Wilf-equivalence & References \\\hline
12345 & 000\hyph & & 1&2&5&14&41&122&365&1094&3281&9842&29525&88574&$\dots$ & \textcolor{red} {\href{https://oeis.org/A007051}{A007051}} & & \cite[Thm.~1]{MR2967227} \\
 & 001\hyph & & 1&2&5&14&41&123&374&1147&3538&10958&34042&105997&$\dots$ & \textcolor{blue}{\href{https://oeis.org/A054391}{A054391}} & $(12543,0\hyph1\hyph)$ Lem.~\ref{lem:subtree} & \\
 & 010\hyph & & 1&2&5&14&41&123&375&1157&3603&11304&35683&113219&$\dots$ & \newa{}\textrightarrow\href{https://oeis.org/A365508}{A365508} & & \\
 & 011\hyph & & 1&2&5&14&41&124&384&1210&3865&12482&40677&133572&$\dots$ & \textcolor{olive}{\href{https://oeis.org/A159772}{A159772}} & $(15432,\hyph11\hyph)$ Lem.~\ref{lem:subtree} &  \\
 & 100\hyph & & 1&2&5&14&41&123&375&1158&3615&11393&36209&115940&$\dots$ & \textcolor{green}{\href{https://oeis.org/A176677}{A176677}} & $(12354,10\hyph\hyph)$ Lem.~\ref{lem:subtree} & \\
 & 101\hyph & & 1&2&5&14&41&124&383&1202&3819&12255&39651&129190&$\dots$ & \newb{}\textrightarrow\href{https://oeis.org/A365509}{A365509} & & \\
 & 110\hyph & & 1&2&5&14&41&124&385&1221&3939&12886&42648&142544&$\dots$ & \textcolor{teal}{\href{https://oeis.org/A159768}{A159768}} & $(12354,11\hyph\hyph)$ Lem.~\ref{lem:subtree} & \\
 & 111\hyph & & 1&2&5&14&41&125&393&1265&4147&13798&46476&158170&$\dots$ & \href{https://oeis.org/A036766}{A036766} & & \cite[Class~6.1]{MR2645188}; Tab.~\ref{tab:staggered} \\
\hline
12354 & 00\hyph\hyph & 000\hyph, 00\hyph0 & 1&2&5&14&41&122&365&1094&3281&9842&29525&88574&$\dots$ & \textcolor{red}{\href{https://oeis.org/A007051}{A007051}} & & \cite[Thm.~1]{MR2967227} \\
 & 01\hyph\hyph & & 1&2&5&14&41&123&375&1157&3603&11304&35683&113219&$\dots$ & \newa{}\textrightarrow\href{https://oeis.org/A365508}{A365508} & $(12345,010\hyph)$ Lem.~\ref{lem:subtree} & \\
 & 10\hyph\hyph & & 1&2&5&14&41&123&375&1158&3615&11393&36209&115940&$\dots$ & \textcolor{green}{\href{https://oeis.org/A176677}{A176677}} & $(12435,10\hyph\hyph)$ Lem.~\ref{lem:subtree} & \\
 & 11\hyph\hyph & & 1&2&5&14&41&124&385&1221&3939&12886&42648&142544&$\dots$ & \textcolor{teal}{\href{https://oeis.org/A159768}{A159768}} & & \cite[Class~6.2]{MR2645188}; Tab.~\ref{tab:staggered} \\
\hline
12435 & 00\hyph\hyph & & 1&2&5&14&41&122&365&1094&3281&9842&29525&88574&$\dots$ & \textcolor{red}{\href{https://oeis.org/A007051}{A007051}} & & \cite[Thm.~1]{MR2967227} \\
 & 01\hyph\hyph & & 1&2&5&14&41&123&375&1157&3603&11304&35683&113219&$\dots$ & \newa{}\textrightarrow\href{https://oeis.org/A365508}{A365508} & $(12345,010\hyph)$ Lem.~\ref{lem:subtree} & \\
 & 10\hyph\hyph & & 1&2&5&14&41&123&375&1158&3615&11393&36209&115940&$\dots$ & \textcolor{green}{\href{https://oeis.org/A176677}{A176677}} & & Lem.~\ref{lem:12435} \\
 & 11\hyph\hyph & & 1&2&5&14&41&124&385&1221&3939&12886&42648&142544&$\dots$ & \textcolor{teal}{\href{https://oeis.org/A159768}{A159768}} & $(12354,11\hyph\hyph)$ Lem.~\ref{lem:pathmove} & \cite[Class~6.2]{MR2645188}; Tab.~\ref{tab:staggered} \\
\hline
12534 & 00\hyph\hyph, 0\hyph0\hyph & 00\hyph\hyph, 0\hyph0\hyph & 1&2&5&14&41&122&365&1094&3281&9842&29525&88574&$\dots$ & \textcolor{red}{\href{https://oeis.org/A007051}{A007051}} & & \cite[Thm.~1]{MR2967227} \\
 & 011\hyph & & 1&2&5&14&41&123&375&1157&3603&11304&35683&113219&$\dots$ & \newa{}\textrightarrow\href{https://oeis.org/A365508}{A365508} & $(12345,010\hyph)$ Lem.~\ref{lem:subtree} & \\
 & 10\hyph\hyph, 1\hyph0\hyph & & 1&2&5&14&41&123&375&1158&3615&11393&36209&115940&$\dots$ & \textcolor{green}{\href{https://oeis.org/A176677}{A176677}} & $(12435,10\hyph\hyph)$ Lem.~\ref{lem:subtree} & Tab.~\ref{tab:staggered} \\
 & 111\hyph & & 1&2&5&14&41&124&384&1212&3885&12614&41400&137132&$\dots$ & \textcolor{violet}{\href{https://oeis.org/A159769}{A159769}} & & \cite[Class~6.3]{MR2645188} \\
\hline
12543 & 0\hyph0\hyph & 0\hyph0\hyph & 1&2&5&14&41&122&365&1094&3281&9842&29525&88574&$\dots$ & \textcolor{red}{\href{https://oeis.org/A007051}{A007051}} & & \cite[Thm.~1]{MR2967227} \\
 & 0\hyph1\hyph & 001\hyph & 1&2&5&14&41&123&374&1147&3538&10958&34042&105997&$\dots$ & \textcolor{blue}{\href{https://oeis.org/A054391}{A054391}} & $(15234,0\hyph1\hyph)$ Lem.~\ref{lem:subtree} & \\
 & 1\hyph0\hyph & & 1&2&5&14&41&123&375&1158&3615&11393&36209&115940&$\dots$ & \textcolor{green}{\href{https://oeis.org/A176677}{A176677}} & $(12534,1\hyph0\hyph)$ Lem.~\ref{lem:subtree} & \\
 & 101\hyph & & 1&2&5&14&41&124&383&1202&3819&12255&39651&129190&$\dots$ & \newb{}\textrightarrow\href{https://oeis.org/A365509}{A365509} & $(12345,101\hyph)$ Lem.~\ref{lem:subtree} & \\
 & 111\hyph & & 1&2&5&14&41&124&384&1211&3875&12548&41040&135370&$\dots$ & \textcolor{magenta}{\href{https://oeis.org/A159770}{A159770}} & & \cite[Class~6.4]{MR2645188} \\
\hline
13245 & 0\hyph0\hyph & & 1&2&5&14&41&122&365&1094&3281&9842&29525&88574&$\dots$ & \textcolor{red}{\href{https://oeis.org/A007051}{A007051}} & & \cite[Thm.~1]{MR2967227} \\
 & 0\hyph1\hyph & & 1&2&5&14&41&123&375&1157&3603&11304&35683&113219&$\dots$ & \newa{}\textrightarrow\href{https://oeis.org/A365508}{A365508} & $(12345,010\hyph)$ Lem.~\ref{lem:subtree} & \\
 & 1\hyph0\hyph & & 1&2&5&14&41&123&376&1168&3678&11716&37688&122261&$\dots$ & \newc{}\textrightarrow\href{https://oeis.org/A365510}{A365510} & & \\
 & 1\hyph1\hyph & & 1&2&5&14&41&124&385&1221&3939&12886&42648&142544&$\dots$ & \textcolor{teal}{\href{https://oeis.org/A159768}{A159768}} & $(12435,11\hyph\hyph)$ Lem.~\ref{lem:pathmove} & \cite[Class~6.2]{MR2645188}; Tab.~\ref{tab:staggered} \\
\hline
13254 & 0\hyph0\hyph & 0\hyph0\hyph & 1&2&5&14&41&122&365&1094&3281&9842&29525&88574&$\dots$ & \textcolor{red}{\href{https://oeis.org/A007051}{A007051}} & & \cite[Thm.~1]{MR2967227} \\
 & 0\hyph1\hyph & 0\hyph10 & 1&2&5&14&41&123&375&1157&3603&11304&35683&113219&$\dots$ & \newa{}\textrightarrow\href{https://oeis.org/A365508}{A365508} & $(12345,010\hyph)$ Lem.~\ref{lem:subtree} & \\
 & 1\hyph0\hyph & & 1&2&5&14&41&123&376&1168&3678&11716&37688&122261&$\dots$ & \newc{}\textrightarrow\href{https://oeis.org/A365510}{A365510} & $(13245,1\hyph0\hyph)$ Lem.~\ref{lem:subtree} & \\
 & 1\hyph1\hyph & & 1&2&5&14&41&124&385&1220&3929&12822&42309&140922&$\dots$ & \textcolor{brown}{\href{https://oeis.org/A159771}{A159771}} & $(21435,\hyph1\hyph\hyph)$ Lem.~\ref{lem:21435-13254} & \cite[Class~6.5]{MR2645188}; Thm.~\ref{thm:motz25} \\
\hline
14235 & 00\hyph\hyph & & 1&2&5&14&41&122&365&1094&3281&9842&29525&88574&$\dots$ & \textcolor{red}{\href{https://oeis.org/A007051}{A007051}} & & \cite[Thm.~1]{MR2967227} \\
 & 01\hyph\hyph & & 1&2&5&14&41&123&375&1157&3603&11304&35683&113219&$\dots$ & \newa{}\textrightarrow\href{https://oeis.org/A365508}{A365508} & $(12345,010\hyph)$ Lem.~\ref{lem:subtree} & \\
 & 10\hyph\hyph & & 1&2&5&14&41&123&375&1158&3615&11393&36209&115940&$\dots$ & \textcolor{green}{\href{https://oeis.org/A176677}{A176677}} & $(14325,10\hyph\hyph)$ Lem.~\ref{lem:subtree} & Tab.~\ref{tab:staggered} \\
 & 11\hyph\hyph & & 1&2&5&14&41&124&384&1212&3885&12614&41400&137132&$\dots$ & \textcolor{violet}{\href{https://oeis.org/A159769}{A159769}} & & \cite[Class~6.3]{MR2645188} \\
\hline
14325 & 00\hyph\hyph & & 1&2&5&14&41&122&365&1094&3281&9842&29525&88574&$\dots$ & \textcolor{red}{\href{https://oeis.org/A007051}{A007051}} & & \cite[Thm.~1]{MR2967227}\\
 & 01\hyph\hyph & & 1&2&5&14&41&123&375&1157&3603&11304&35683&113219&$\dots$ & \newa{}\textrightarrow\href{https://oeis.org/A365508}{A365508} & $(12345,010\hyph)$ Lem.~\ref{lem:subtree} & \\
 & 10\hyph\hyph & & 1&2&5&14&41&123&375&1158&3615&11393&36209&115940&$\dots$ & \textcolor{green}{\href{https://oeis.org/A176677}{A176677}} & $(12543,1\hyph0\hyph)$ Lem.~\ref{lem:pathmove} & \\
 & 11\hyph\hyph & & 1&2&5&14&41&124&384&1211&3875&12548&41040&135370&$\dots$ & \textcolor{magenta}{\href{https://oeis.org/A159770}{A159770}} & & \cite[Class~6.4]{MR2645188} \\
\hline
15234 & 0\hyph0\hyph, \hyph00\hyph & 0\hyph0\hyph, \hyph00\hyph & 1&2&5&14&41&122&365&1094&3281&9842&29525&88574&$\dots$ & \textcolor{red}{\href{https://oeis.org/A007051}{A007051}} & & \cite[Thm.~1]{MR2967227} \\
 & 0\hyph1\hyph, \hyph01\hyph & 0\hyph1\hyph, \hyph01\hyph & 1&2&5&14&41&123&374&1147&3538&10958&34042&105997&$\dots$ & \textcolor{blue}{\href{https://oeis.org/A054391}{A054391}} & $(15432,\hyph01\hyph)$ Lem.~\ref{lem:subtree} & Tab.~\ref{tab:staggered} \\
 & 110\hyph & & 1&2&5&14&41&123&376&1168&3678&11716&37688&122261&$\dots$ & \newc{}\textrightarrow\href{https://oeis.org/A365510}{A365510} & $(13245,1\hyph0\hyph)$ Lem.~\ref{lem:Lmove3} & \\
 & 111\hyph & & 1&2&5&14&41&124&384&1212&3885&12614&41400&137132&$\dots$ & \textcolor{violet}{\href{https://oeis.org/A159769}{A159769}} & & \cite[Class~6.3]{MR2645188} \\
\hline
15243 & \hyph0\hyph\hyph, 0\hyph0\hyph & \hyph0\hyph\hyph, 0\hyph0\hyph & 1&2&5&14&41&122&365&1094&3281&9842&29525&88574&$\dots$ & \textcolor{red}{\href{https://oeis.org/A007051}{A007051}} & & \cite[Thm.~1]{MR2967227}; Tab.~\ref{tab:staggered} \\
 & 011\hyph & 011\hyph & 1&2&5&14&41&123&375&1157&3603&11304&35683&113219&$\dots$ & \newa{}\textrightarrow\href{https://oeis.org/A365508}{A365508} & $(12345,010\hyph)$ Lem.~\ref{lem:subtree} & \\
 & 110\hyph & & 1&2&5&14&41&123&376&1168&3678&11716&37688&122261&$\dots$ & \newc{}\textrightarrow\href{https://oeis.org/A365510}{A365510} & $(15234,110\hyph)$ Lem.~\ref{lem:subtree} & \\
 & 111\hyph & & 1&2&5&14&41&124&385&1220&3929&12822&42309&140922&$\dots$ & \textcolor{brown}{\href{https://oeis.org/A159771}{A159771}} & & \cite[Class~6.5]{MR2645188} \\
\hline
 15324 & \hyph0\hyph\hyph & \hyph0\hyph\hyph & 1&2&5&14&41&122&365&1094&3281&9842&29525&88574&$\dots$ & \textcolor{red}{\href{https://oeis.org/A007051}{A007051}} & & \cite[Thm.~1]{MR2967227} \\
 & 01\hyph\hyph & 01\hyph\hyph & 1&2&5&14&41&123&375&1157&3603&11304&35683&113219&$\dots$ & \newa{}\textrightarrow\href{https://oeis.org/A365508}{A365508} & $(12345,010\hyph)$ Lem.~\ref{lem:subtree} & Tab.~\ref{tab:staggered} \\
 & 11\hyph\hyph & & 1&2&5&14&41&124&384&1211&3875&12548&41040&135370&$\dots$ & \textcolor{magenta}{\href{https://oeis.org/A159770}{A159770}} & & \cite[Class~6.4]{MR2645188} \\
\hline
15423 & \hyph0\hyph\hyph & \hyph0\hyph\hyph & 1&2&5&14&41&122&365&1094&3281&9842&29525&88574&$\dots$ & \textcolor{red}{\href{https://oeis.org/A007051}{A007051}} & & \cite[Thm.~1]{MR2967227} \\
 & 01\hyph\hyph & 01\hyph\hyph & 1&2&5&14&41&123&375&1157&3603&11304&35683&113219&$\dots$ & \newa{}\textrightarrow\href{https://oeis.org/A365508}{A365508} & $(12345,010\hyph)$ Lem.~\ref{lem:subtree} & Tab.~\ref{tab:staggered} \\
 & \hyph10\hyph & 010\hyph & 1&2&5&14&41&123&375&1157&3603&11304&35683&113219&$\dots$ & \newa{}\textrightarrow\href{https://oeis.org/A365508}{A365508} & $(12345,010\hyph)$ Lem.~\ref{lem:subtree} & \\
 & 111\hyph & & 1&2&5&14&41&124&384&1211&3875&12548&41040&135370&$\dots$ & \textcolor{magenta}{\href{https://oeis.org/A159770}{A159770}} & & \cite[Class~6.4]{MR2645188} \\
\hline
15432 & \hyph00\hyph & \hyph00\hyph & 1&2&5&14&41&122&365&1094&3281&9842&29525&88574&$\dots$ & \textcolor{red}{\href{https://oeis.org/A007051}{A007051}} & & \cite[Thm.~1]{MR2967227} \\
 & \hyph01\hyph & \hyph01\hyph & 1&2&5&14&41&123&374&1147&3538&10958&34042&105997&$\dots$ & \textcolor{blue}{\href{https://oeis.org/A054391}{A054391}} & $(21345,\hyph01\hyph)$ Lem.~\ref{lem:pathmove} & \\
 & \hyph10\hyph & 010\hyph & 1&2&5&14&41&123&375&1157&3603&11304&35683&113219&$\dots$ & \newa{}\textrightarrow\href{https://oeis.org/A365508}{A365508} & $(12345,010\hyph)$ Lem.~\ref{lem:subtree} & \\
 & \hyph11\hyph & 011\hyph & 1&2&5&14&41&124&384&1210&3865&12482&40677&133572&$\dots$ & \textcolor{olive}{\href{https://oeis.org/A159772}{A159772}} & & \cite[Class~6.6]{MR2645188}; Tab.~\ref{tab:staggered} \\
\hline
21345 & \hyph00\hyph & & 1&2&5&14&41&122&365&1094&3281&9842&29525&88574&$\dots$ & \textcolor{red}{\href{https://oeis.org/A007051}{A007051}} & & \cite[Thm.~1]{MR2967227} \\
 & \hyph01\hyph & & 1&2&5&14&41&123&374&1147&3538&10958&34042&105997&$\dots$ & \textcolor{blue}{\href{https://oeis.org/A054391}{A054391}} & $(21543,\hyph01\hyph)$ Lem.~\ref{lem:subtree} & \\
 & \hyph10\hyph & & 1&2&5&14&41&123&376&1168&3678&11716&37688&122261&$\dots$ & \newc{}\textrightarrow\href{https://oeis.org/A365510}{A365510} & $(15243,110\hyph) $ Lem.~\ref{lem:Lmove2} &  \\
 & \hyph11\hyph & & 1&2&5&14&41&124&385&1221&3939&12886&42648&142544&$\dots$ & \textcolor{teal}{\href{https://oeis.org/A159768}{A159768}} & $(13245,1\hyph1\hyph)$ Lem.~\ref{lem:pathmove} & \cite[Class~6.2]{MR2645188} \\
\hline
21354 & \hyph0\hyph\hyph & \hyph00\hyph, \hyph0\hyph0 & 1&2&5&14&41&122&365&1094&3281&9842&29525&88574&$\dots$ & \textcolor{red}{\href{https://oeis.org/A007051}{A007051}} & & \cite[Thm.~1]{MR2967227} \\
 & \hyph10\hyph & & 1&2&5&14&41&123&376&1168&3678&11716&37688&122261&$\dots$ & \newc{}\textrightarrow\href{https://oeis.org/A365510}{A365510} & $(21345,\hyph10\hyph)$ Lem.~\ref{lem:subtree} & \\
 & \hyph11\hyph & & 1&2&5&14&41&124&384&1212&3885&12613&41389&137055&$\dots$ & \href{https://oeis.org/A159773}{A159773} & & \cite[Class~6.7]{MR2645188} \\
\hline
21435 & \hyph0\hyph\hyph & & 1&2&5&14&41&122&365&1094&3281&9842&29525&88574&$\dots$ & \textcolor{red}{\href{https://oeis.org/A007051}{A007051}} & & \cite[Thm.~1]{MR2967227} \\
 & \hyph1\hyph\hyph & & 1&2&5&14&41&124&385&1220&3929&12822&42309&140922&$\dots$ & \textcolor{brown}{\href{https://oeis.org/A159771}{A159771}} &  $(13254,1\hyph1\hyph)$ Lem.~\ref{lem:21435-13254} & \cite[Class~6.5]{MR2645188}; Thm.~\ref{thm:motz25} \\
\hline
\end{tabular}
}
\end{table}

\begin{table}
\tiny
\setlength\tabcolsep{1pt}
\centerline{
\begin{tabular}{l|l|l|rrrrrrrrrrrrr|l|l|l}
$P$ & $e$ & Friendly & \multicolumn{13}{c|}{Counts $|\cT_n(P, e)|$ for $n=1,\ldots,12$} &  OEIS & Wilf-equivalence & References \\\hline
21534 & \hyph0\hyph\hyph & \hyph0\hyph\hyph & 1&2&5&14&41&122&365&1094&3281&9842&29525&88574&$\dots$ & \textcolor{red}{\href{https://oeis.org/A007051}{A007051}} & & \cite[Thm.~1]{MR2967227}; Tab.~\ref{tab:staggered} \\
 & \hyph10\hyph & \hyph10\hyph & 1&2&5&14&41&123&375&1158&3615&11393&36209&115940&$\dots$ & \textcolor{green}{\href{https://oeis.org/A176677}{A176677}} & $(31254,0\hyph1\hyph)$ Lem.~\ref{lem:Lmove1} & \\
 & \hyph11\hyph & & 1&2&5&14&41&124&384&1212&3885&12614&41400&137132&$\dots$ & \textcolor{violet}{\href{https://oeis.org/A159769}{A159769}} & & \cite[Class~6.3]{MR2645188} \\
\hline
21543 & \hyph00\hyph & \hyph00\hyph & 1&2&5&14&41&122&365&1094&3281&9842&29525&88574&$\dots$ & \textcolor{red}{\href{https://oeis.org/A007051}{A007051}} & & \cite[Thm.~1]{MR2967227} \\
 & \hyph01\hyph & \hyph01\hyph & 1&2&5&14&41&123&374&1147&3538&10958&34042&105997&$\dots$ & \textcolor{blue}{\href{https://oeis.org/A054391}{A054391}} & & Sec.~\ref{sec:21543-motzkin}; Tab.~\ref{tab:staggered} \\
 & \hyph10\hyph & \hyph10\hyph & 1&2&5&14&41&123&375&1158&3615&11393&36209&115940&$\dots$ & \textcolor{green}{\href{https://oeis.org/A176677}{A176677}} & $(21534,\hyph10\hyph)$ Lem.~\ref{lem:subtree} & \\
 & \hyph11\hyph & & 1&2&5&14&41&124&384&1212&3885&12614&41400&137132&$\dots$ & \textcolor{violet}{\href{https://oeis.org/A159769}{A159769}} & & \cite[Class~6.3]{MR2645188} \\
\hline
31245 & 0\hyph0\hyph & & 1&2&5&14&41&122&365&1094&3281&9842&29525&88574&$\dots$ & \textcolor{red}{\href{https://oeis.org/A007051}{A007051}} & & \cite[Thm.~1]{MR2967227} \\
 & 0\hyph1\hyph & & 1&2&5&14&41&123&375&1158&3615&11393&36209&115940&$\dots$ & \textcolor{green}{\href{https://oeis.org/A176677}{A176677}} & $(12534,1\hyph0\hyph)$ Lem.~\ref{lem:pathmove} &  \\
 & 1\hyph0\hyph & & 1&2&5&14&41&123&375&1158&3615&11393&36209&115940&$\dots$ & \textcolor{green}{\href{https://oeis.org/A176677}{A176677}} & & ? \\
 & 1\hyph1\hyph & & 1&2&5&14&41&124&384&1212&3885&12614&41400&137132&$\dots$ & \textcolor{violet}{\href{https://oeis.org/A159769}{A159769}} & & \cite[Class~6.3]{MR2645188} \\
\hline
31254 & 0\hyph0\hyph & 0\hyph0\hyph & 1&2&5&14&41&122&365&1094&3281&9842&29525&88574&$\dots$ & \textcolor{red}{\href{https://oeis.org/A007051}{A007051}} & & \cite[Thm.~1]{MR2967227} \\
 & 0\hyph1\hyph, 1\hyph0\hyph & 0\hyph10, 1\hyph0\hyph & 1&2&5&14&41&123&375&1158&3615&11393&36209&115940&$\dots$ & \textcolor{green}{\href{https://oeis.org/A176677}{A176677}} & $(31245,1\hyph0\hyph)$ Lem.~\ref{lem:subtree} & \\
 & 1\hyph1\hyph & 1\hyph10 & 1&2&5&14&41&124&384&1212&3885&12614&41400&137132&$\dots$ &\textcolor{violet}{\href{https://oeis.org/A159769}{A159769}} & & \cite[Class~6.3]{MR2645188} \\
 \hline
32145 & 0\hyph0\hyph & & 1&2&5&14&41&122&365&1094&3281&9842&29525&88574&$\dots$ & \textcolor{red}{\href{https://oeis.org/A007051}{A007051}} & & \cite[Thm.~1]{MR2967227} \\
 & 0\hyph1\hyph, 1\hyph0\hyph & & 1&2&5&14&41&123&375&1158&3615&11393&36209&115940&$\dots$ & \textcolor{green}{\href{https://oeis.org/A176677}{A176677}} & $(31245,0\hyph1\hyph)$ Lem.~\ref{lem:subtree} & \\
 & 1\hyph1\hyph & & 1&2&5&14&41&124&384&1212&3885&12614&41400&137132&$\dots$ & \textcolor{violet}{\href{https://oeis.org/A159769}{A159769}} & & \cite[Class~6.3]{MR2645188} \\
\end{tabular}
}
\end{table}

\section{Bijections with other combinatorial objects}
\label{sec:bijection}

In this section we establish bijections between pattern-avoiding binary trees and other combinatorial objects, specifically binary strings, pattern-avoiding Motzkin paths, and pattern-avoiding set partitions.

\subsection{Binary trees and bitstrings}

\subsubsection{Bijection between $\cT_n(132,\hyph\hyph)$ and bitstrings~$\{0,1\}^{n-1}$}
\label{sec:132-bin}

This bijection is illustrated in Figure~\ref{fig:bij-bin}~(a).
Consider a tree~$T \in \cT_n(P,e)$ where $(P,e):=(132, \hyph\hyph)$.
We define $b:=\beta_L(r(T))$ and~$\ell_i:=c_L^{i-1}(r(T))$ for~$i=1,\ldots,b$, i.e., we consider the left branch~$(\ell_1,\ldots,\ell_b)$ starting from the root of~$T$.
Due to the forbidden tree pattern~$(P,e)$, the tree~$T$ has exactly one right branch with~$\beta_R(\ell_i)$ many vertices starting at~$\ell_i$, for all~$i=1,\ldots,b$.
We map~$T$ to a bitstring of length~$n-1$ by concatenating sequences of~1s and~0s alternatingly, of lengths $\beta_R(\ell_1)-1,\beta_R(\ell_2),\beta_R(\ell_3),\ldots,\beta_R(\ell_b)$.
This is clearly a bijection between~$\cT_n(P,e)$ and~$\{0,1\}^{n-1}$.

\subsubsection{Bijection between $\cT_n(123, 0\hyph)$ and bitstrings~$\{0,1\}^{n-1}$}
\label{sec:123-bin}

This bijection is illustrated in Figure~\ref{fig:bij-bin}~(b).
Consider a tree~$T \in \cT_n(P,e)$ where $(P,e):=(123, 0 \hyph)$.
We define $b:=\beta_L(r(T))$ and~$\ell_i:=c_L^{i-1}(r(T))$ for~$i=1,\ldots,b$, i.e., we consider the left branch~$(\ell_1,\ldots,\ell_b)$ starting from the root of~$T$.
Due to the forbidden tree pattern~$(P,e)$, the right subtree of~$\ell_i$ is an all left-branch with~$\beta_L(c_R(\ell_i))$ many vertices, for all~$i=1,\ldots,b$.
We map~$T$ to a bitstring of length~$n-1$ by concatenating sequences of~1s and~0s alternatingly, of lengths~$\beta_L(c_R(\ell_1)),\beta_L(c_R(\ell_2))+1,\beta_L(c_R(\ell_3))+1,\ldots,\beta_L(c_R(\ell_b))+1$.
This is clearly a bijection between~$\cT_n(P,e)$ and~$\{0,1\}^{n-1}$.

\subsubsection{Bijection between $\cT_n(213, \hyph\hyph)$ and bitstrings~$\{0,1\}^{n-1}$}
\label{sec:213-bin}

This bijection is illustrated in Figure~\ref{fig:bij-bin}~(c).
Consider a tree~$T\in \cT_n(P,e)$ where $(P,e):=(213,\hyph\hyph)$.
Due to the forbidden tree pattern~$(P,e)$, no vertex of~$T$ has two children, i.e., $T$ is a path.
We map~$T$ to a bitstring of length~$n-1$ by going down the path starting at the root and recording a 1-bit for every edge going to the right, and a 0-bit for every edge going to the left.
This is clearly a bijection between~$\cT_n(P,e)$ and~$\{0,1\}^{n-1}$.

\begin{figure}
\includegraphics{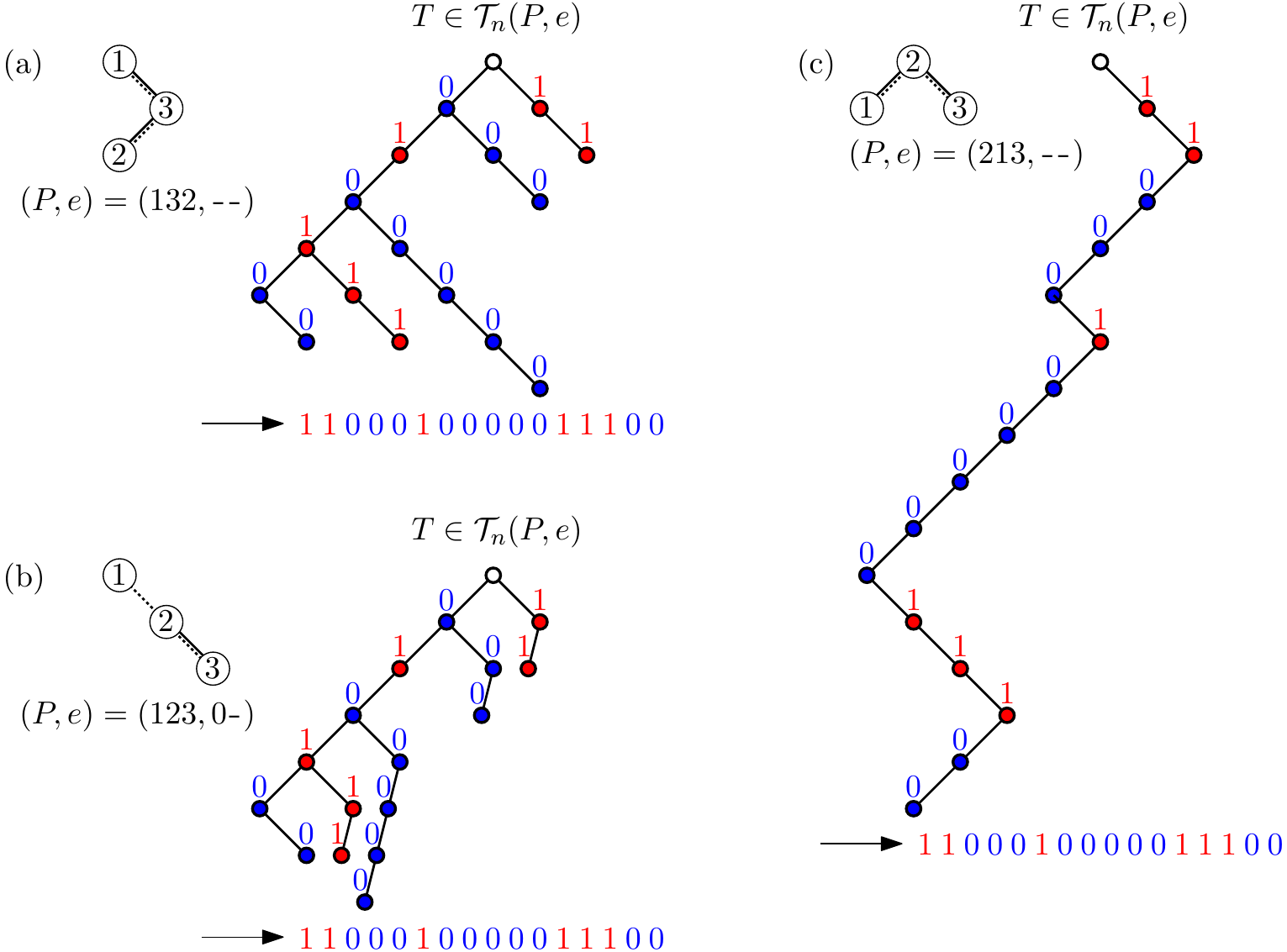}
\caption{Bijections between binary trees and bitstrings.}
\label{fig:bij-bin}
\end{figure}

\subsection{Binary trees and Motzkin paths}
\label{sec:motzkin}

\begin{figure}
\includegraphics[page=1]{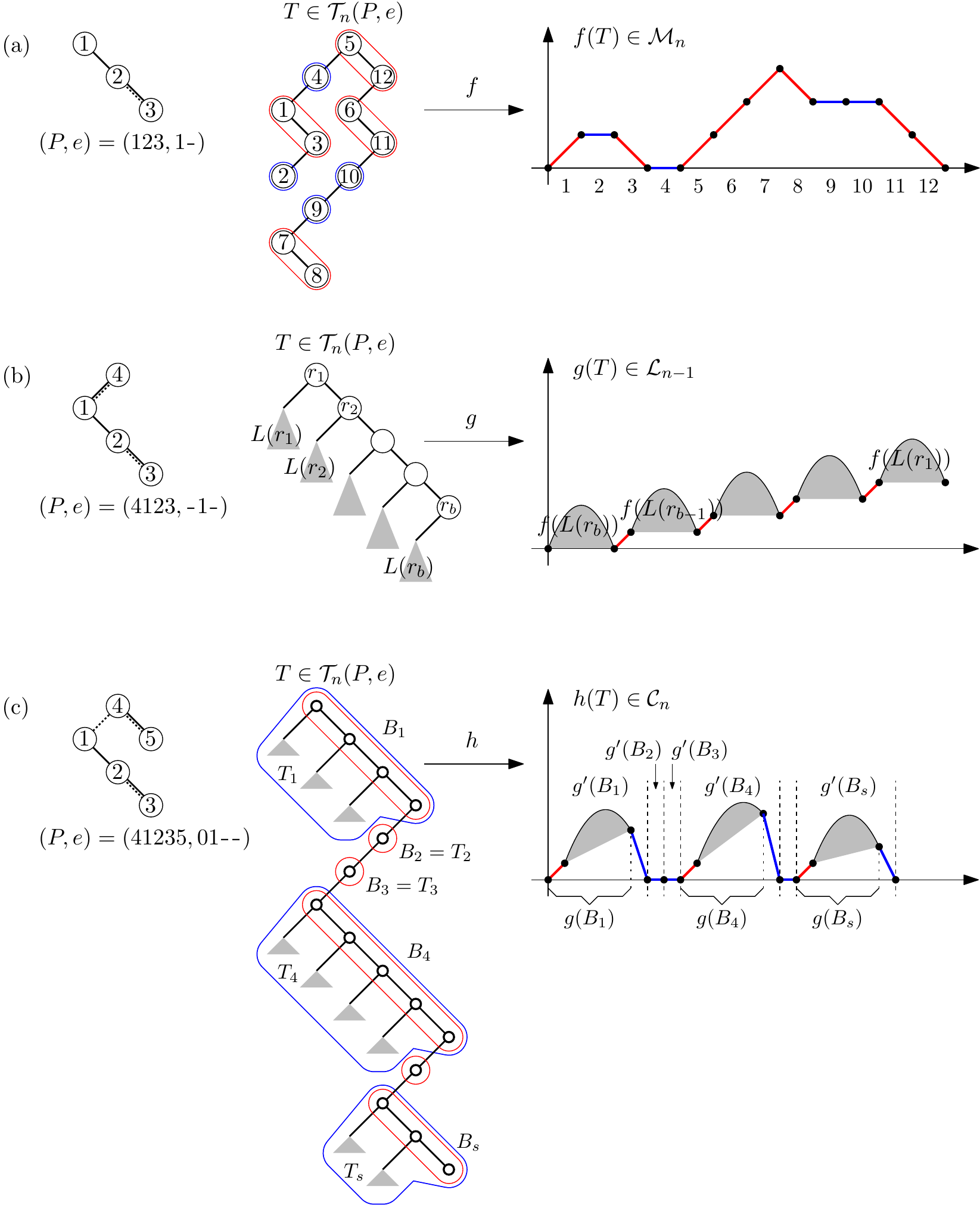}
\caption{Bijections between pattern-avoiding binary trees and different types of Motzkin paths.}
\label{fig:bij1}
\end{figure}

In this section, we present bijections between pattern-avoiding binary trees and different types of Motzkin paths.

Specifically, we consider lattice paths with steps~$\tU:=(1,1)$, $\tD:=(1,-1)$, $\tF:=(1,0)$, and~$\tD_h:=(1,-h)$ for~$h\geq 2$.
An \defi{$n$-step Motzkin path} starts at~$(0,0)$, ends at~$(n,0)$, uses only steps~$\tU$, $\tD$ or~$\tF$, and it never goes below the $x$-axis.
We write~$\cM_n$ for the set of all $n$-step Motzkin paths (OEIS~A001006).
An \defi{$n$-step Motzkin left factor} starts at~$(0,0)$, uses $n$ many steps~$\tU$, $\tD$ or~$\tF$, and it never goes below the $x$-axis.
We write~$\cL_n$ for the set of all $n$-step Motzkin left factors (OEIS~A005773).
An \defi{$n$-step Motzkin path with catastrophes}~\cite{MR4256410} starts at~$(0,0)$, ends at~$(n,0)$, uses only steps~$\tU$, $\tD$, $\tF$, or $\tD_h$ for $h\geq 2$, such that all $\tD_h$-steps end on the $x$-axis, and it never goes below the $x$-axis (OEIS~A054391).
We write~$\cC_n$ for the set of all $n$-step Motzkin paths with catastrophes.

\subsubsection{Bijection between~$\cT_n(123,1\hyph)$ and Motzkin paths~$\cM_n$}
\label{sec:123-motzkin}

This bijection is illustrated in Figure~\ref{fig:bij1}~(a).
Consider a tree~$T\in \cT_n(P,e)$ where $(P,e):=(123,1\hyph)$.
Due to the forbidden pattern~$(P,e)$, every maximal right branch in~$T$ consists of one or two vertices, but not more.
We map~$T$ to an $n$-step Motzkin path~$f(T)$ as follows.
Every maximal right branch in~$T$ consisting of one vertex~$i$ creates an $\tF$-step at position~$i$ in~$f(T)$.
Every maximal right branch in~$T$ consisting of two vertices~$i$ and~$j$, where $j=c_R(i)$, creates a pair of $\tU$-step and $\tD$-step at the same height at positions~$i$ and~$j$ in~$f(T)$, respectively.
It is easy to verify that~$f$ is indeed a bijection between~$\cT_n(P,e)$ and~$\cM_n$.

We remark that Rowland~\cite{MR2645188} described a bijection between~$\cT_n(123,1\hyph)$ and~$\cM_n$ that is different from~$f$.

\subsubsection{Bijection between~$\cT_n(1432,\hyph 1\hyph)$ and Motzkin left factors~$\cL_{n-1}$}
\label{sec:1432-motzkin}

This bijection is illustrated in Figure~\ref{fig:bij1}~(b), and it uses as a building block the bijection~$f$ defined in the previous section.
Instead of~$(1432,\hyph 1\hyph)$, we consider the mirrored tree pattern~$(P,e):=\mu(1432,\hyph 1\hyph)=(4123,\hyph 1\hyph)$ for convenience.
Consider a tree~$T \in \cT_n(P,e)$.
We define $b:=\beta_R(r(T))$ and~$r_i:=c_R^{i-1}(r(T))$ for~$i=1,\ldots,b$, i.e., we consider the right branch~$(r_1,\ldots,r_b)$ starting from the root of~$T$.
Due to the forbidden tree pattern~$(P,e)$, each subtree~$L(r_i)$ for $i=1,\ldots,b$ is $(123,1\hyph)$-avoiding.
Using the bijection~$f$ described in the previous section, we can thus map each subtree~$L(r_i)$ to a Motzkin path~$f(L(r_i))$.
Therefore, we map~$T$ to an $(n-1)$-step Motzkin left factor~$g(T)$ by combining the subpaths~$f(L(r_i))$, separating them by in total $b-1$ many $\tU$-steps, one between every two consecutive subpaths~$f(L(r_i))$ and~$f(L(r_{i+1}))$.
To make the proof work, the subpaths~$f(L(r_i))$ can be combined in increasing order from left to right on~$g(T)$, i.e., for $i=1,\ldots,b$, or in decreasing order, i.e., for $i=b,b-1,\ldots,1$, and for reasons that will become clear in the next section we combine them in decreasing order, i.e.,
\begin{equation}
\label{eq:gT}
g(T):=f(L(r_b)),\tU,f(L(r_{b-1})),\ldots,\tU,f(L(r_1)).
\end{equation}
The mapping~$g$ is clearly a bijection between~$\cT_n(P,e)$ and~$\cL_{n-1}$.

\subsubsection{Bijection between~$\cT_n(21543,\hyph 01\hyph)$ and Motzkin paths with catastrophes~$\cC_n$}
\label{sec:21543-motzkin}

This bijection is illustrated in Figure~\ref{fig:bij1}~(c), and it uses as a building block the bijection~$g$ defined in the previous section.
Instead of~$(21543,\hyph 01\hyph)$, we consider the mirrored tree pattern~$(P,e):=\mu(21543,\hyph 01\hyph)=(41235,01\hyph\hyph)$ for convenience.
Consider a tree~$T \in \cT_n(P,e)$ and the rightmost leaf in~$T$, and partition the path from the root of~$T$ to that leaf into a sequence of maximal right branches~$B_1,\ldots,B_\ell$.
For $i=1,\ldots,\ell$, we let~$T_i$ be the subtree of~$T$ that consists of~$B_i$ plus the left subtrees of all vertices on~$B_i$ except the last one.
Note that~$T_1,\ldots,T_\ell$ form a partition of~$T$.
Furthermore, $T$ avoiding~$(P,e)$ is equivalent to each of the~$T_i$, $i=1,\ldots,\ell$, avoiding~$(4123,01\hyph)$.
Using the bijection~$g$ described in the previous section, we can thus map each subtree~$T_i$ to a Motzkin left factor~$g(T_i)$, and by appending one additional appropriate step~$\tF$, $\tD$ or~$\tD_h$ for~$h\geq 2$ we obtain a Motzkin path~$g'(T_i)$.
Note that the rightmost leaf of~$T_i$ has no left child, and thus the definition~\eqref{eq:gT} yields that~$g'(T_i)$ touches the $x$-axis only at the first point and last point, but at no intermediate (integer) points.
Therefore, we map~$T$ to an $n$-step Motzkin path with catastrophes~$h(T)$ by concatenating the Motzkin subpaths~$g'(T_i)$ for $i=1,\ldots,\ell$, i.e., $h(T):=g'(T_1),g'(T_2),\ldots,g'(T_\ell)$.
It can be readily checked that $h$ is a bijection between~$\cT_n(P,e)$ and~$\cC_n$.

\subsubsection{Binary trees and Motzkin paths with 2-colored $\tF$-steps}

We now consider Motzkin paths whose $\tF$-steps come in two possible colors, which we denote by $\tFleft$ and~$\tFright$, respectively.
We write $\cM_n'$ for the set of $n$-step Motzkin paths with 2-colored $\tF$-steps.
For strings~$\tau_1,\ldots,\tau_\ell$, each using symbols from~$\{\tU,\tD,\tFleft,\tFright\}$, we write~$\cM_n'(\tau_1,\ldots,\tau_\ell)$ for the Motzkin paths from~$\cM_n'$ that avoid each of~$\tau_1,\ldots,\tau_\ell$ as (consecutive) substrings.

\begin{figure}[b!]
\includegraphics[page=3]{bij}
\caption{Bijection between binary trees and Motzkin paths with 2-colored $\tF$-steps.}
\label{fig:motz2col}
\end{figure}

There is a natural bijection $f\colon \cT_n\rightarrow \cM_n'$, illustrated in Figure~\ref{fig:motz2col}.
In particular, Motzkin paths with 2-colored $\tF$-steps are a Catalan family.
Given a tree~$T\in\cT_n$, we define $f(T)$ by considering four cases:
If $r(T)$ has no children, then $f(T):=\varepsilon$.
If $r(T)$ has only a left child, then $f(T):=\tFleft,f(L(T))$.
If $r(T)$ has only a right child, then $f(T):=\tFright,f(R(T))$.
If $r(T)$ has two children, then $f(T):=\tU,f(R(T)),\tD,f(L(T))$.

In the following, we consider the restriction of~$f$ to various set of binary trees given by pattern avoidance.

\begin{theorem}
\label{thm:motz24}
The following four mappings $f\colon \cT_n(P,e)\rightarrow \cM_n'(X)$ are bijections:
\begin{enumerate}[label=(\roman*),leftmargin=8mm, noitemsep, topsep=1pt plus 1pt]
\item $(P,e)=(2134,111)=(2134,\hyph1\hyph)$ with $X=\{\tU\}\times\{\tU,\tFright\}=\{\tU\tU,\tU\tFright\}$;
\item $(P,e)=(3214,111)=(3214,1\hyph\hyph)$ with $X=\{\tD\}\times\{\tU,\tFleft\}=\{\tD\tU,\tD\tFleft\}$;
\item $(P,e)=(1324,111)=(1324,1\hyph\hyph)$ with $X=\{\tU,\tFright\}\times\{\tU\}=\{\tU\tU,\tFright\tU\}$;
\item $(P,e)=(1243,111)=(1243,1\hyph\hyph)$ with $X=\{\tU,\tFright\}\times\{\tU,\tFright\}\times\{\tU,\tFleft\}=\{\tU\tU\tU,\tU\tU\tFleft,\tU\tFright\tU,\tU\tFright\tFleft,\tFright\tU\tU,\allowbreak\tFright\tU\tFleft,\tFright\tFright\tU,\tFright\tFright\tFleft\}$.
\end{enumerate}
Furthermore, all four sets of Motzkin paths are Wilf-equivalent and counted by OEIS~A025242.
\end{theorem}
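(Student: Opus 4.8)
The plan is to use that $f\colon\cT_n\to\cM_n'$ is already a bijection (established just before the theorem), so that for each of the four cases it suffices to prove the factor equivalence
\[
T\in\cT_n\text{ contains }(P,e)\quad\Longleftrightarrow\quad f(T)\text{ contains some element of }X\text{ as a consecutive substring};
\]
the restriction of $f$ is then automatically a bijection $\cT_n(P,e)\to\cM_n'(X)$. First I would use Theorem~\ref{thm:e-equality} to replace each pattern by its all-contiguous representative $(2134,111)$, $(3214,111)$, $(1324,111)$, $(1243,111)$ (these are exactly the equalities asserted in the theorem statement), since containment of an all-contiguous pattern has a purely local description in $T$: $T$ contains $(2134,111)$ iff some vertex $u$ has $c_L(u),c_R(u),c_R(c_R(u))\neq\varepsilon$; $T$ contains $(3214,111)$ iff some $u$ has $c_R(u),c_L(u),c_L(c_L(u))\neq\varepsilon$; $T$ contains $(1324,111)$ iff some right child of $T$ has two children; and $T$ contains $(1243,111)$ iff some right child $c_1$ has a right child $c_2=c_R(c_1)$ with $c_L(c_2)\neq\varepsilon$.

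The core of the argument is a dictionary between short factors of $f(T)$ and local structure of $T$. Each non-root vertex $c$ contributes exactly one step to $f(T)$, the one recording the edge from $c$ to $p(c)$, and unwinding the recursion for $f$ shows: this step lies in $\{\tU,\tFright\}$ iff $c$ is a right child (and equals $\tU$ exactly when $p(c)$ additionally has two children), and in $\{\tD,\tFleft\}$ iff $c$ is a left child; furthermore, if $c$ is a right child then the step immediately following $c$'s step in $f(T)$ is the step of $c_R(c)$ when $c$ has a right child, equals $\tFleft$ when $c$ has only a left child, and equals $\tD$ (or does not exist) when $c$ is a leaf; and the step following the $\tD$ produced by a two-child vertex $u$ is the first step of $f(T(c_L(u)))$, analyzed the same way. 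Plugging these facts into the local descriptions above gives the four equivalences: $f(T)$ contains $\tU\tU$ or $\tU\tFright$ iff some two-child vertex $u$ has $c_R(c_R(u))\neq\varepsilon$; $f(T)$ contains $\tD\tU$ or $\tD\tFleft$ iff some two-child vertex $u$ has $c_L(c_L(u))\neq\varepsilon$; $f(T)$ contains $\tU\tU$ or $\tFright\tU$ iff some right child has two children; and $f(T)$ contains an element of $\{\tU,\tFright\}\times\{\tU,\tFright\}\times\{\tU,\tFleft\}$ iff some right child $c_1$ has a right child $c_2$ which in turn has a left child. In the last case the point is that a step from $\{\tU,\tFright\}$ in position $j+1$ forces it to be the step of $c_2=c_R(c_1)$, after which a step from $\{\tU,\tFleft\}$ in position $j+2$ is precisely the condition $c_L(c_2)\neq\varepsilon$, whether or not $c_2$ also has a right child. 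This proves all four maps are bijections.

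For the final assertion, the four bijections yield $|\cM_n'(X_i)|=|\cT_n(P_i,e_i)|$, so it remains to check that the four tree classes are equinumerous and counted by OEIS~A025242. Here $(2134,\hyph1\hyph)$ and $(3214,1\hyph\hyph)$ are mirror images of one another and hence Wilf-equivalent, the remaining equivalences with $(1243,1\hyph\hyph)$ and $(1324,1\hyph\hyph)$ follow from mirroring together with the results of Section~\ref{sec:wilf} (these are the Wilf-equivalences recorded in Table~\ref{tab:size-4}), and $|\cT_n(1243,1\hyph\hyph)|$ equals the sequence A025242 by Rowland~\cite[Class~5.2]{MR2645188}. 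I expect the only delicate step to be the bookkeeping in the dictionary, specifically the boundary cases in which $c$'s step is the last step of some enclosed $f(T(c'))$: one must verify that the following step of $f(T)$, if any, is then a $\tD$, so that no forbidden factor from $X$ can be created spuriously from material belonging to an enclosing subtree. Everything else is a routine induction on $|T|$ unwinding the definition of $f$.
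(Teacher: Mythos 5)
Your proposal is correct and follows essentially the same route as the paper: the paper's proof simply states that the factor-avoidance correspondence ``follows directly from the definition of~$f$'' (your step-by-step dictionary, including the boundary observation that the step after any enclosed subpath $f(T(c'))$ is a $\tD$ or nothing, is just a careful spelling-out of this), and it likewise handles the last assertion via the mirror symmetry of (i) and (ii) and Lemma~\ref{lem:pathmove} for (i), (iii), (iv), with the count A025242 coming from Rowland's Class~5.2 as recorded in Table~\ref{tab:size-4}.
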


\begin{proof}
The first part of the lemma follows directly from the definition of~$f$.
Furthermore, the tree patterns in~(i) and~(ii) are Wilf-equivalent as they are mirror images of each other.
Lastly, the tree patterns in~(i), (iii) and (iv) are Wilf-equivalent by Lemma~\ref{lem:pathmove}.
\end{proof}

\begin{theorem}
\label{thm:motz25}
The following three mappings $f\colon \cT_n(P,e)\rightarrow \cM_n'(X)$ are bijections:
\begin{enumerate}[label=(\roman*),leftmargin=8mm, noitemsep, topsep=1pt plus 1pt]
\item $(P,e)=(21435,1111)=(21435,\hyph1\hyph\hyph)$ with $X=\{\tU\tU\}$;
\item $(P,e)=(42135,1111)=(42135,1\hyph\hyph\hyph)$ with $X=\{\tD\tU\}$;
\item $(P,e)=(13254,1111)=(13254,1\hyph1\hyph)$ with $X=\{\tU,\tFright\}\times\{\tU\}\times \{\tU,\tFleft\}=\{\tU\tU\tU,\tU\tU\tFleft,\tFright\tU\tU,\tFright\tU\tFleft\}$.
\end{enumerate}
Furthermore, all three sets of Motzkin paths are Wilf-equivalent and counted by OEIS~A159771.
\end{theorem}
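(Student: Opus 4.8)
The plan is to follow the proof of Theorem~\ref{thm:motz24}: I would first establish the three bijections by reading off the recursive definition of~$f$, then deduce the Wilf-equivalences from mirroring and from Lemma~\ref{lem:21435-13254}, and finally identify the counting sequence via the known count of~$\cT_n(13254,1111)$ recorded in Table~\ref{tab:size-5} (Rowland's Class~6.5, OEIS~A159771). As a preliminary step, I would use Theorem~\ref{thm:e-equality} to replace each `don't care' symbol by~$1$ (the relevant vertices are leaves of~$P$), so that all edges of~$P$ become contiguous and the pattern matching becomes transparent.

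The key device is a short local dictionary between~$T$ and~$f(T)$, read directly from the four cases defining~$f$. Every $\tU$-step of~$f(T)$ is owned by a unique vertex~$v$ with two children, namely the one for which this $\tU$ opens the block~$f(T(v))=\tU,f(R(v)),\tD,f(L(v))$. I would record that the step immediately following this $\tU$ is the first step of~$f(T(c_R(v)))$, hence equals $\tD$, $\tFleft$, $\tFright$, or $\tU$ according as $c_R(v)$ is a leaf, has only a left child, has only a right child, or has two children; and that the step immediately preceding this $\tU$ exists precisely when $v\neq r(T)$, in which case it equals $\tFleft$, $\tD$, $\tFright$, or $\tU$ according as $v$ is the left child of a vertex with only a left child, the left child of a vertex with two children, the right child of a vertex with only a right child, or the right child of a vertex with two children. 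Reading this dictionary both ways then gives: $f(T)$ contains~$\tU\tU$ iff $T$ has a vertex with two children whose right child also has two children, i.e.\ iff $T$ contains~$(21435,1111)$; $f(T)$ contains~$\tD\tU$ iff $T$ has a vertex with two children whose left child also has two children, i.e.\ iff $T$ contains~$(42135,1111)$; and $f(T)$ contains a substring in~$\{\tU,\tFright\}\times\{\tU\}\times\{\tU,\tFleft\}$ iff $T$ has a vertex~$v$ that is a right child, has two children, and whose right child has a left child, i.e.\ iff $T$ contains~$(13254,1111)$. Since $f\colon\cT_n\to\cM_n'$ is a bijection and avoidance of the tree pattern corresponds to avoidance of the substring set~$X$, restricting~$f$ yields the three claimed bijections.

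For the `furthermore' part, I would note that mirroring the binary tree with preorder~$21435$ and relabelling~$i\mapsto 6-i$ yields the binary tree with preorder~$42135$, and that mirroring preserves the all-contiguous labelling; hence $(42135,1111)=\mu(21435,1111)$, so patterns~(i) and~(ii) are Wilf-equivalent as mirror images. Patterns~(i) and~(iii) are Wilf-equivalent by Lemma~\ref{lem:21435-13254}. Consequently $|\cT_n(21435,1111)|=|\cT_n(42135,1111)|=|\cT_n(13254,1111)|$ for all~$n$, and transporting these equalities through the three bijections shows that the three sets of Motzkin paths are Wilf-equivalent; their common cardinality is~$|\cT_n(13254,1111)|$, which equals OEIS~A159771 by Rowland~\cite{MR2645188} (Class~6.5).

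The main obstacle is the `only if' direction of the three substring characterizations: given an occurrence of one of the listed substrings, one must verify that the vertex owning its (middle) $\tU$-step is not the root, correctly recover its parent from the preceding step and its right subtree from the following step, and exclude the boundary case where~$f(T(c_R(v)))$ is empty (which would force the following step to be~$\tD$, ruled out by the substring set). All of this is routine unwinding of the recursion for~$f$, but it is the only place where care is genuinely needed.
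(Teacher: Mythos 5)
Your proposal is correct and follows essentially the same route as the paper: the paper's proof simply states that the three bijections ``follow directly from the definition of~$f$'' (your step-adjacency dictionary is a correct unpacking of exactly this), and it obtains the Wilf-equivalences from the mirror symmetry of (i) and~(ii) together with Lemma~\ref{lem:21435-13254} for (i) and~(iii), with the count OEIS~A159771 coming from Rowland's Class~6.5 as recorded in Table~\ref{tab:size-5}. No substantive difference or gap to report.
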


\begin{proof}
The first part of the lemma follows directly from the definition of~$f$.
Furthermore, the tree patterns in~(i) and~(ii) are Wilf-equivalent as they are mirror images of each other.
Lastly, the tree patterns in~(i) and~(iii) are Wilf-equivalent by Lemma~\ref{lem:21435-13254}.
\end{proof}

\subsection{Binary trees and set partitions}

In this section, we present bijections between pattern-avoiding binary trees and different pattern-avoiding set partitions.

A \defi{set partition of~$[n]$} is a collection of non-empty disjoint subsets~$B_1,\ldots,B_\ell$, called \defi{blocks}, whose union is~$[n]$.
A \defi{crossing} in a set partition is a quadruple of elements~$a<b<c<d$ such that $a,c\in B_i$ and~$b,d\in B_j$ with $i\neq j$.
A partition is called \defi{non-crossing} if it has no crossings.
Set partitions are counted by the Bell numbers, and non-crossing set partitions are a well-known Catalan family.

A set partition can be identified uniquely by its \defi{restricted growth string (RGS)}, which is the string~$x_1\cdots x_n$ given by sorting the blocks by their smallest element, and such that $x_i=j$ if the element~$i$ is contained in the $j$th block in this ordering.
For example, the RGS for the partition~$\{\{1,3\}, \{2,4,7\}, \{5\}, \{6,8\}\}$ of~$[9]$ is~$12123424$.
Restricted growth strings are characterized by the conditions $x_1=1$, and $x_{i+1}\leq \max\{x_1,x_2,\ldots,x_i\}+1$ for all $i=1,\ldots,n-1$.
We write~$\cP_n$ for the set of all restricted growth strings of set partitions of~$[n]$.
The notion of pattern containment in permutations extends straightforwardly to pattern containment in strings, in particular in restricted growth strings.
Such a pattern string~$\tau$ may contain repeated entries, which means that the corresponding entries in the string in an occurrence of the pattern have the same value.
We write~$\cP_n(\tau_1,\ldots,\tau_\ell)$ for restricted growth strings from~$\cP_n$ that avoid each the patterns~$\tau_1,\ldots,\tau_\ell$.
Observe that~$\cP_n(1212)$ are precisely non-crossing set partitions.
The study of pattern avoidance in set partitions was initiated by Kreweras~\cite{MR309747} and Klazar~\cite{MR1370819,MR1750890,MR1769065}; see also~\cite{MR2419765,MR2398831,MR2599721,MR2824451,MR2863229,MR2924748,MR3090560,MR2954657,MR3003349,MR3245891,MR3548808}.

\subsubsection{Bijection between~$\cT_n$ and non-crossing set partitions~$\cP(1212)$}

We define two bijections~$\varphi_L,\varphi_R\colon \cT_n\rightarrow \cP_n(1212)$ that will be used in the remainder of this section; see Figure~\ref{fig:bij-TnPn}.
For a given tree~$T\in\cT_n$, the blocks of the set partition~$\varphi_L(T)$ are defined by the sets of vertices in the maximal left branches of~$T$.
Formally, we write $i\sim_L j$ if $i$ and~$j$ are in the same left branch of~$T$, which is an equivalence relation.
Then the set partition~$\varphi_L(T)$ is given by the equivalence classes of~$\sim_L$, i.e.,
\begin{subequations}
\label{eq:phiLR}
\begin{equation}
\label{eq:phiL}
\varphi_L(T):=[n]/{\sim_L}.
\end{equation}
We also define
\begin{equation}
\label{eq:phiR}
\varphi_R(T):=[n]/{\sim_R},
\end{equation}
\end{subequations}
where $i\sim_R j$ if $i$ and~$j$ are in the same right branch of~$T$.

\begin{figure}
\includegraphics[page=1]{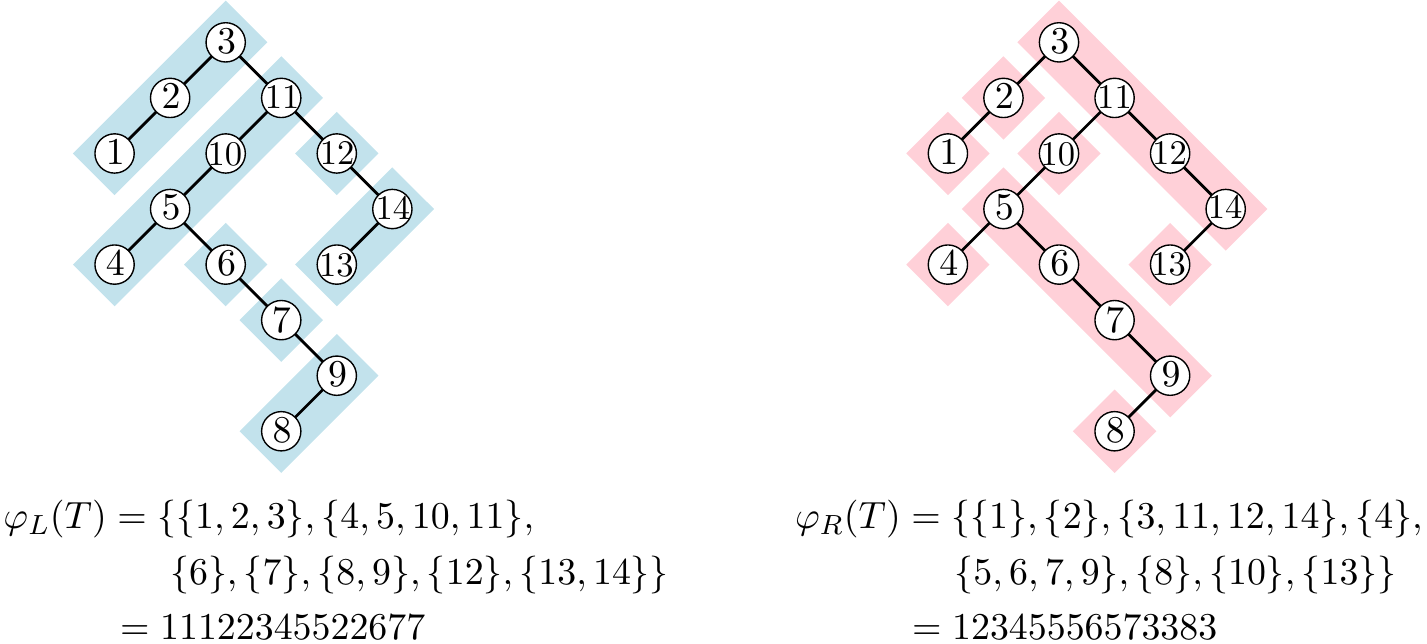}
\caption{Two bijections between binary trees and non-crossing set partitions.}
\label{fig:bij-TnPn}
\end{figure}

\begin{lemma}
\label{lem:1212-tree}
The mappings~$\varphi_L,\varphi_R\colon \cT_n \rightarrow \cP_n(1212)$ defined in~\eqref{eq:phiLR} are bijections.
\end{lemma}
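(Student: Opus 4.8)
The plan is to show that $\varphi_L$ is a bijection; the statement for $\varphi_R$ then follows by the mirroring operation from Section~\ref{sec:pat-tree}, since mirroring a tree interchanges left and right branches and composes with the value-reversal $i\mapsto n+1-i$ on the underlying set, which carries non-crossing set partitions to non-crossing set partitions (a crossing $a<b<c<d$ with $a,c$ in one block and $b,d$ in another is sent to a crossing of the reversed partition). So it suffices to treat $\varphi_L$.

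First I would check that $\varphi_L$ is well-defined, i.e.\ that $\varphi_L(T)$ is genuinely non-crossing. The relation $\sim_L$ partitions $[n]$ into the maximal left branches of $T$; each block is of the form $\{c_L^0(v),c_L^1(v),\dots,c_L^{m-1}(v)\}$ for the top vertex $v$ of that branch, and by the search-tree property these labels form a strictly \emph{decreasing} sequence along the branch (each left child is smaller than its parent). Suppose for contradiction there is a crossing $a<b<c<d$ with $a,c$ in one left branch $L_1$ and $b,d$ in another left branch $L_2$. Using that within a left branch the ancestor is the larger vertex, $c$ is a strict ancestor of $a$ and $d$ is a strict ancestor of $b$ in $T$. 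Now $b$ lies strictly between $a$ and $c$ in value, so by the search-tree property $b$ lies in the subtree $T(c)$ — in fact in the right subtree $R(a')$ hanging off some vertex $a'$ on the segment of $L_1$ from $a$ up to $c$; and then its ancestor $d$, being $>b$, either lies on that same branch segment or again in $T(c)$, forcing $d<c$ or $d$ comparable to $c$ in a way that contradicts $c<d$ together with $c,d$ lying in different branches. I would formalize this by a short case analysis on where $b$ and $d$ sit relative to the branch $L_1$; the upshot is that a crossing would force $L_1$ and $L_2$ to intersect or to be nested in a way incompatible with both being maximal left branches. Hence $\varphi_L(T)\in\cP_n(1212)$.

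For bijectivity I would argue by induction on $n$ (the case $n=0$ being trivial), or equivalently exhibit the inverse explicitly. The cleanest route is an explicit inverse: given a non-crossing partition $Q$ of $[n]$, sort each block decreasingly to get a chain, and then reconstruct $T$ recursively. Write $Q$ in RGS form; the element $1$ is the smallest, the block containing $1$ becomes the left branch from the root, and one shows — using the non-crossing condition and the characterization $x_{i+1}\le\max\{x_1,\dots,x_i\}+1$ of RGS — that the remaining elements split uniquely into "right subtrees" hanging off the vertices of this branch, each of which is again a non-crossing partition of a contiguous-in-a-suitable-sense index set closed under the ancestor relation, so induction applies. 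Alternatively, and perhaps more simply, I would just count: by Lemma~\ref{lem:231-tree} we have $|\cT_n|=C_n=|\cP_n(1212)|$ (the latter is the standard fact that non-crossing partitions are a Catalan family), so it suffices to prove that $\varphi_L$ is injective. Injectivity follows because from $\varphi_L(T)$ one can recover $T$: the blocks together with the decreasing order inside each block give all maximal left branches with their vertex labels, and the attachment of one branch's top vertex as a right child of a specific vertex of another branch is forced by the search-tree property (the parent of the top vertex $v$ of a non-root left branch is the unique vertex $w$ with $w<v$, $w$ an ancestor, minimal such, which is determined by the label data).

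The main obstacle I anticipate is the well-definedness step — proving that $\sim_L$-classes never cross — and, in the inverse/injectivity step, cleanly arguing that the non-crossing condition is exactly what makes the recursive "hang the remaining branches off the root branch" decomposition unique and valid. Both amount to carefully exploiting the interplay between the search-tree property (which converts the ancestor relation on a left branch into the numerical order) and the combinatorial definition of a crossing; once that dictionary is set up, the rest is bookkeeping, and the Catalan count lets us avoid constructing the inverse in full detail if we prefer.
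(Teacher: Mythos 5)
Your proposal is correct and follows essentially the same route as the paper: non-crossingness is proved by the same search-tree containment argument (the paper phrases it slightly more crisply: $a\in L(c)$ and $a<b<c$ give $b\in L(c)$, and since a left branch cannot climb out of $L(c)$ without passing through $c$, also $d\in L(c)$, contradicting $c<d$), and your recursive inverse built from the block containing $1$ is the paper's surjectivity argument, while the Catalan-count shortcut is a legitimate alternative. One small caution on your injectivity sketch: describing the attachment point of a non-root branch as ``the minimal ancestor $w<v$'' is circular, since the ancestor relation belongs to the tree you are trying to recover (and in any case the parent is the \emph{closest}, i.e.\ largest-labelled, such ancestor); the purely label-determined rule is that the top vertex of a non-root block $B$ is the right child of the vertex $\min B-1$, after which your recovery argument goes through.
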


\begin{proof}
It suffices to prove the statement for~$\varphi_L$, as $\varphi_R$ is defined symmetrically.

Given $T\in\cT_n$, we first show that the RGS~$\varphi_L(T)$ avoids~$1212$.
Suppose for the sake of contradiction that~$\varphi_L(T)$ contains the pattern~$1212$, and let $a<b<c<d$ be the positions of the occurrence of this pattern.
This means that in~$T$ the vertices~$a$ and~$c$ are in the same left branch, and the vertices~$b$ and~$d$ are in the same left branch, different from the first one.
As $a$ and~$c$ are in the same branch and $a<c$ we conclude that $a\in L(c)$.
Furthermore, as $a<b<c$ we have~$b\in L(c)$.
As $b$ and~$d$ are in the same branch, it follows that $d\in L(c)$.
However, this is a contradiction to $c<d$.

It is easy to see that the mapping~$\varphi_L$ is injective.
To see that it is surjective, consider an RGS~$x\in\cP_n(1212)$.
The first entry of~$x$ is~1.
Let $j$ be the position of the last~1 in~$x$, and let $x_L$ and~$x_R$ be the substrings of~$x$ strictly to the left and right of position~$j$, respectively.
As $x$ avoids~$1212$, no symbol in~$x$ appears both to the left and right of position~$j$.
Furthermore, the condition $x_{i+1}\leq \max\{x_1,x_2,\ldots,x_i\}+1$ for all $i=1,\ldots,n-1$ implies that all entries in~$x$ to the left of position~$j$ are strictly smaller than all entries to the right of position~$j$.
Consequently, the binary tree~$\varphi_L^{-1}(x)$ has the root~$j$ with the left subtree $\varphi^{-1}(x_L)$ and the right subtree $\varphi_L^{-1}(x_R)$.
\end{proof}

\subsubsection{Staggered tree patterns}

In the following, we consider the restriction of~$\varphi_L$ and~$\varphi_R$ to various sets of binary trees given by pattern avoidance, and we derive conditions to ensure that avoiding a tree pattern~$P$ corresponds to avoiding the RGS pattern~$\varphi_L(P)$ or~$\varphi_R(P)$, respectively (in addition to~$1212$).

A tree pattern~$(P,e)$ is called \defi{staggered}, if one of the following two recursive conditions is satisfied; see Figure~\ref{fig:staggered}:
\begin{itemize}[leftmargin=8mm, noitemsep, topsep=1pt plus 1pt]
\item $(P,e)$ is a contiguous left path, i.e., $e(i)=1$ for all edges $(i,p(i))$ on this path.
\item The root of~$(P,e)$ is contained in a contiguous left branch, exactly one vertex~$i$ on the branch has a right child~$j:=c_R(i)$ with $e(j)=0$, and $(P(j),e_{P(j)})$ is staggered.
\end{itemize}

\begin{figure}[h!]
\includegraphics[page=2]{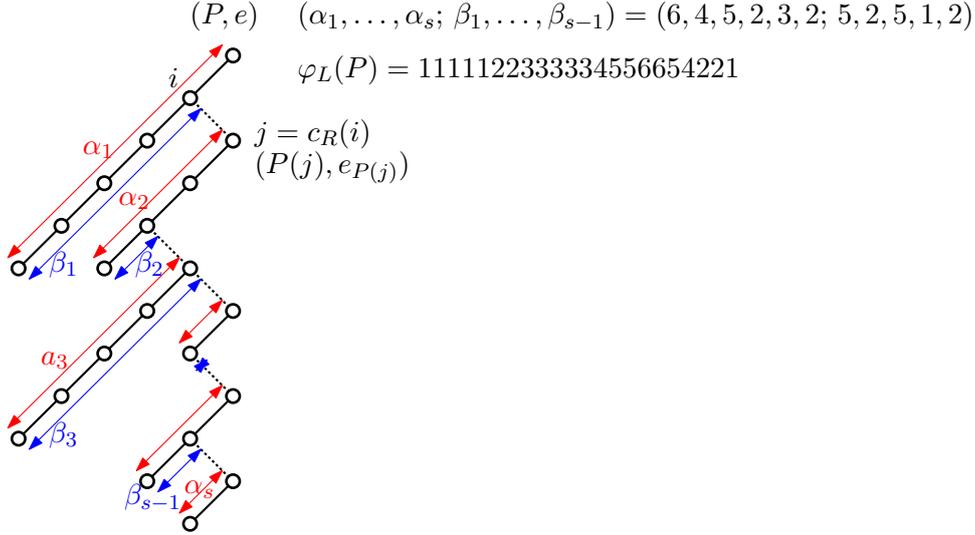}
\caption{Definition of staggered tree patterns.}
\label{fig:staggered}
\end{figure}

For the following discussion, for any string~$x$ and any integer~$k\geq 0$, we write $x^k$ for the concatenation of $k$ copies of~$x$.
A staggered tree pattern~$(P,e)$ is described uniquely by an integer sequence
\begin{equation*}
(\alpha_1,\ldots,\alpha_{s-1},\alpha_\ell;\, \beta_1,\ldots,\beta_{\ell-1}),
\end{equation*}
referred to as its \defi{signature}, where $\alpha_i$ is the number of vertices on the $i$th contiguous left branch, counted from the root, and the $\beta_i$th vertex counted from bottom to top on the $i$th contiguous branch is the unique vertex on that branch having a right child.
Clearly, we have $1\leq \beta_i\leq \alpha_i$, and furthermore
\begin{equation}
\label{eq:phiLP}
\varphi_L(P)=1^{\beta_1}2^{\beta_2}3^{\beta_3}\cdots (\ell-1)^{\beta_{\ell-1}} \ell^{\,\alpha_\ell} (\ell-1)^{\alpha_{\ell-1}-\beta_{\ell-1}}\cdots 3^{\alpha_3-\beta_3}2^{\alpha_2-\beta_2}1^{\alpha_1-\beta_1};
\end{equation}
see Figure~\ref{fig:staggered}.

\begin{theorem}
\label{thm:staggered}
Let $(P,e)$, $P\in\cT_k$, be a staggered tree pattern with signature~$(\alpha_1,\ldots,\alpha_\ell;\,\beta_1,\ldots,\beta_{\ell-1})$, satisfying the following two conditions: (i) if $\ell>1$ and $\beta_1=\alpha_1$, then $\alpha_1\in\{1,2\}$; (ii) $\beta_i<\alpha_i$ for all $i=2,\ldots,\ell-1$.
Then the mapping~$\varphi_L\colon \cT_n(P,e) \rightarrow \cP_n(1212, \varphi_L(P))$ is a bijection.
\end{theorem}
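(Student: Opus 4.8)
Since $\varphi_L\colon\cT_n\to\cP_n(1212)$ is a bijection by Lemma~\ref{lem:1212-tree}, and $\cP_n(1212,\varphi_L(P))\subseteq\cP_n(1212)$, the plan is to prove the single equivalence
\[
T\text{ contains }(P,e)\quad\Longleftrightarrow\quad \varphi_L(T)\text{ contains the RGS pattern }\varphi_L(P)\text{ from \eqref{eq:phiLP}}
\]
for every $T\in\cT_n$; the asserted bijection is then obtained by restricting the bijection $\varphi_L$ to the complements of the two ``contains'' sets. Throughout, write $B_1,\dots,B_\ell$ for the $\ell$ maximal contiguous left branches of $(P,e)$ (with $B_1$ containing the root), $\alpha_i:=|B_i|$, and, for $i<\ell$, let $v_i\in B_i$ be the unique vertex carrying a right child, so that $e(c_R(v_i))=0$, $c_R(v_i)=r(B_{i+1})$, and $v_i$ is the $\beta_i$-th vertex of $B_i$ from the bottom. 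Note that $\big(P(c_R(v_1)),e_{P(c_R(v_1))}\big)$ is again staggered, with signature $(\alpha_2,\dots,\alpha_\ell;\beta_2,\dots,\beta_{\ell-1})$, and that hypotheses~(i)--(ii) for $(P,e)$ pass to it.

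\emph{Forward direction} (the easy one, using neither (i) nor (ii)). Given an occurrence of $(P,e)$ in $T$ via an injection $f\colon[k]\to[n]$, I would first observe that $f$ is order-preserving, since the total order of the vertex labels is determined by the ancestor/left-descendant/right-descendant relations, all of which $f$ respects. Next, since the edges inside each $B_i$ are contiguous, $f(B_i)$ lies on a single maximal left branch $L_i$ of $T$. Finally, from the non-contiguous right edge at $v_i$ and the nesting $B_j\subseteq R(v_i)$ in $P$ for $i<j$, one gets $f(B_j)\subseteq R(f(v_i))$, and hence $L_j\subseteq R(f(v_i))$ (a maximal left branch meeting the subtree $R(f(v_i))$ is contained in it, as it cannot pass upward through the right child $c_R(f(v_i))$). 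Therefore $\min L_i\le f(\min B_i)\le f(v_i)<\min L_j$ for all $i<j$, so $L_1,\dots,L_\ell$ are distinct with strictly increasing minima, i.e.\ strictly increasing block indices in $\varphi_L(T)$. Reading off the positions $f(1)<\dots<f(k)$ in $\varphi_L(T)$, the value at $f(m)$ is the block index of $L_i$ where $m\in B_i$; this depends on $m$ only through $i$ and is strictly increasing in $i$, hence order-isomorphic to the value $\varphi_L(P)_m=i$ prescribed by \eqref{eq:phiLP}. So $\varphi_L(T)$ contains $\varphi_L(P)$.

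\emph{Backward direction} (the hard one), by induction on $\ell$. For $\ell=1$, $(P,e)$ is a contiguous left path and $\varphi_L(P)=1^{\alpha_1}$, and a symbol occurring $\alpha_1$ times in $\varphi_L(T)$ yields a maximal left branch of $T$ on $\ge\alpha_1$ vertices, whose bottom $\alpha_1$ vertices form an occurrence of $(P,e)$. For $\ell\ge2$, put $P':=P(c_R(v_1))$ with its inherited pattern. By \eqref{eq:phiLP}, $\varphi_L(P)=1^{\beta_1}\,\sigma\,1^{\alpha_1-\beta_1}$ where $\sigma$ is $\varphi_L(P')$ with every symbol raised by $1$. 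Fix an occurrence of $\varphi_L(P)$ in $\varphi_L(T)$: its $1$-positions lie on one maximal left branch $L_1$ of $T$, split into $\beta_1$ positions whose labels are below all the ``inner'' positions and $\alpha_1-\beta_1$ positions whose labels are above them, while the inner positions (realizing $\sigma$) lie on maximal left branches $L_2,\dots,L_\ell$ of strictly larger block index, hence with $\min L_m>\min L_1$. The crucial step is to pin down a vertex $w\in L_1$ with $R(w)$ containing the entire inner occurrence. When $\beta_1<\alpha_1$, the inner positions lie strictly between two vertices of $L_1$, hence inside the union of the right subtrees of the $L_1$-vertices in that range; since each $L_m$ is connected and the $L_m$ are label-nested ($L_2$ bracketing $L_3$, and so on), they all lie in one such $R(w)$. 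When $\beta_1=\alpha_1$, hypothesis~(i) forces $\alpha_1\in\{1,2\}$, and one argues by hand: either the inner occurrence already lies in some $R(w)$, $w\in L_1$, or it lies entirely above the subtree rooted at the top of $L_1$, in which case the top of $L_2$ is a right child $c_R(x)$ (it cannot be the root of $T$, as $\min L_2>\min L_1\ge1$), the whole inner occurrence lies in $T(c_R(x))$, and one takes $w:=x$ --- the bound $\alpha_1\le2$ being exactly what guarantees that the at most one further $L_1$-vertex needed below $w$ is available. Once $w$ is fixed, the inner occurrence is, after relabelling, an occurrence of $\varphi_L(P')$ in $\varphi_L(T(c_R(w)))$ (maximal left branches of $T$ inside $R(w)$ are maximal left branches of $T(c_R(w))$, with the same relative order of minima), so by the induction hypothesis $T(c_R(w))$ contains $(P',e_{P'})$, say via $f'$. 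I would then extend $f'$ to an occurrence of $(P,e)$ in $T$ by setting $f(v_1):=w$, mapping the remaining $\beta_1-1$ and $\alpha_1-\beta_1$ vertices of $B_1$ onto the consecutive $L_1$-vertices immediately below and above $w$ (available since $L_1$ contains $\ge\beta_1$ selected vertices $\le w$ and $\ge\alpha_1-\beta_1$ selected vertices $>w$), and keeping $f:=f'$ on $P'$; the contiguous left edges of $B_1$ become parent--child left edges, the edge $(c_R(v_1),v_1)$ holds because $f(c_R(v_1))=f'(r(P'))\in R(w)=R(f(v_1))$, and all remaining edge conditions are inherited from $f'$.

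\emph{Main obstacle.} The whole difficulty sits in the backward direction, and concretely in the ``crucial step'': confining the inner occurrence of $\sigma$ to a single right subtree $R(w)$ with $w\in L_1$, so that the induction can be applied. Hypotheses~(i) and~(ii) are precisely what make this work --- (ii) keeps every inner branch generic (its distinguished vertex strictly below its top), so that only the outermost branch can enter the delicate regime $\beta_1=\alpha_1$, and (i) then restricts that regime to $\alpha_1\le2$. Small explicit patterns (for instance $\ell=2$ with $\alpha_1=\beta_1=3$) show that the statement genuinely fails once these hypotheses are dropped, so this boundary analysis cannot be bypassed.
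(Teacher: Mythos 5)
Your overall strategy coincides with the paper's: restrict the bijection $\varphi_L$ of Lemma~\ref{lem:1212-tree}, prove the easy forward implication, and prove the converse by induction on the number $\ell$ of contiguous left branches, splitting the induction step into $\beta_1<\alpha_1$ and $\beta_1=\alpha_1$. The forward direction, the base case, the case $\beta_1<\alpha_1$, and the confinement argument (all inner selected positions lie in a single right subtree, using connectivity of left branches and the bracketing coming from condition~(ii)) are all sound and match the paper in substance.

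However, there is a genuine gap in your boundary case $\beta_1=\alpha_1$, precisely where condition~(i) must be used. In your second alternative you anchor the extension at $w:=x$ where $c_R(x)$ is the top of the branch $L_2$, and you claim that for $\alpha_1=2$ ``the at most one further $L_1$-vertex needed below $w$ is available''. Nothing guarantees this: $x$ need not lie on $L_1$, need not have any $L_1$-vertex among its descendants, and need not have a left child at all, so the two-vertex contiguous path $B_1$ cannot be placed at $x$. Concretely, take $(P,e)=(213,10)$ (signature $(2,1;\,2)$, $\varphi_L(P)=112$) and the host tree $T\in\cT_7$ with root $5$, $c_L(5)=1$, $c_R(1)=3$, $c_L(3)=2$, $c_R(3)=4$, $c_R(5)=6$, $c_R(6)=7$, so that $\varphi_L(T)=1223145$. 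Fix the occurrence of $112$ at positions $(2,3,7)$: here $L_1=\{2,3\}$ with top $t=3$, the inner vertex $7$ lies in no $R(u)$ with $u\in L_1$, the branch $L_2=\{7\}$ has top $7=c_R(6)$, so your recipe sets $w=6$; but $c_L(6)=\varepsilon$ and no $L_1$-vertex is a descendant of $6$, so the extension fails. The correct anchor is the one the paper uses: the lowest common ancestor $\hat{\imath}$ of $i_{\alpha_1}$ and the inner occurrence (here $\hat{\imath}=5$). Since the inner labels exceed $i_{\alpha_1}$ and we are not in your first alternative, $i_{\alpha_1}$ lies strictly inside $L(\hat{\imath})$ and the inner occurrence inside $R(\hat{\imath})$; hence $\hat{\imath}$ automatically has a left child (in the example, vertex $1$), which is exactly what $\alpha_1\le 2$ requires — note that this left child is in general not an $L_1$-vertex. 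Replacing your choice of $w$ in this subcase by the LCA repairs the proof and brings it in line with the paper's argument.
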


\begin{proof}
By Lemma~\ref{lem:1212-tree}, $\varphi_L\colon \cT_n \rightarrow \cP_n(1212)$ is a bijection.
Consequently, if suffices to show that $T\in\cT_n$ contains the tree pattern~$(P,e)$ if and only if $\varphi_L(T)\in\cP_n(1212)$ contains the (RGS) pattern~$\varphi_L(P)$.

Using that all left edges~$(i,p(i))$ of~$(P,e)$ are contiguous, i.e., they satisfy $e(i)=1$, and the definition of the mapping~$\varphi_L$, we see easily that if~$T\in\cT_n$ contains the tree pattern~$(P,e)$, then $\varphi_L(T)$ contains the pattern~$\varphi_L(P)$.

It remains to show that if $x:=\varphi_L(T)$ for $T\in\cT_n$ contains the pattern~$\varphi_L(P)$, then $T$ contains the tree pattern~$(P,e)$.
For this we argue by induction on $\ell$, i.e., on the number of contiguous left branches of the staggered pattern.
For this part of the argument, conditions~(i) and~(ii) in the theorem will become relevant.
To settle the induction basis, suppose that $\ell=1$, i.e., $(P,e)$ is a contiguous left path.
In this case we have $\varphi_L(P)=1^k=1\cdots 1$, i.e., $x$ contains $k$ occurrences of the same symbol.
The definition of~$x=\varphi_L(T)$ shows that consequently, $T$ contains a left branch on at least $k$ vertices, i.e., $T$ contains~$(P,e)$, as claimed.
For the induction step suppose that $\ell>1$, i.e., the staggered tree pattern~$(P,e)$ has at least two contiguous left branches.
Consider the contiguous left branch in~$P$ starting at the root, which has $\alpha_1$ vertices, and let $i$ be the $\beta_1$th vertex on this branch counted from bottom to top, which has a right child~$j:=c_R(i)$.
As $(P,e)$ satisfies conditions~(i) and~(ii), these conditions are also satisfied for the smaller staggered pattern~$(P(j),e_{P(j)})$ (in fact, condition~(i) is satisfied trivially).
By~\eqref{eq:phiLP}, the pattern~$\varphi_L(P)$ has $\alpha_1$ many 1s, split into two groups of size~$\beta_1$ and~$\alpha_1-\beta_1$ that surround all larger symbols.
Let $i_1<\cdots<i_{\alpha_1}$ be the positions of the symbols to which those 1s are matched in the occurrence of the pattern~$\varphi_L(P)$ in~$x$.
By the definition of~$\varphi_L(T)$, the tree~$T$ contains a left branch that includes the vertices~$i_1,\ldots,i_{\alpha_1}$ successively from bottom to top.
Let $x'$ be the substring of~$x$ strictly between positions~$i_{\beta_1}$ and~$i_{\beta_1+1}$ if $\beta_1<\alpha_1$ and strictly after position~$i_{\alpha_1}$ if $\beta_1=\alpha_1$.
Using that $x$ is non-crossing and $\beta_i<\alpha_i$ for $i=1,\ldots,\ell-1$, we may assume w.l.o.g.\ that $x'$ does not contain any occurrences of the symbol that is present at positions~$i_1,\ldots,i_{\alpha_1}$ in~$x$.
By induction, we know that $T':=\varphi_L^{-1}(x')$ contains the staggered tree pattern~$(P(j),e_{P(j)})$.

We distinguish two cases, namely $\beta_1<\alpha_1$ and $\beta_1=\alpha_1$.

\begin{table}
\caption{All tree patterns~$(P,e)$ on at most 5 vertices for which $(P,e)$ or $\mu(P,e)$ is staggered and the corresponding RGS patterns given by Theorems~\ref{thm:staggered} or~\ref{thm:staggered-mu}, respectively.}
\label{tab:staggered}
\begin{center}
\begin{tabular}{|c|c|c|c|}
\hline
Tree Patterns & RGS patterns & Bijection & OEIS \\ \hline\hline
$(123,11)=(123,1\hyph)$ & 1212, 111 & $\varphi_R$ & \textcolor{blue}{\href{https://oeis.org/A001006}{A001006}} \\
\hline
$(132,10)=(132,\hyph\hyph)$ & 1212, 121 & $\varphi_R$ & \textcolor{red}{\href{https://oeis.org/A000079}{A000079}} \\
\hline
$(213,10)=(213,\hyph\hyph)$ & 1212, 112 & $\varphi_L$ & \textcolor{red}{\href{https://oeis.org/A000079}{A000079}} \\
$(213,01)=(213,\hyph\hyph)$ & 1212, 122 & $\varphi_R$ & \textcolor{red}{\href{https://oeis.org/A000079}{A000079}} \\
\hline\hline
$(1234,111)=(1234,11\hyph)$ & 1212, 1111 & $\varphi_R$ & \href{https://oeis.org/A036765}{A036765} \\
\hline
$(1243,110)=(1243,1\hyph\hyph)$ & 1212, 1121 & $\varphi_R$ & \textcolor{blue}{\href{https://oeis.org/A025242}{A025242}} \\
$(4312,110)=(4312,1\hyph\hyph)$ & 1212, 1211 & $\varphi_L$ & \textcolor{blue}{\href{https://oeis.org/A025242}{A025242}} \\
\hline
$(1324,101)=(1324,1\hyph\hyph)$ & 1212, 1211 & $\varphi_R$ & \textcolor{blue}{\href{https://oeis.org/A025242}{A025242}} \\
$(4213,110)=(4213,1\hyph\hyph)$ & 1212, 1121 & $\varphi_L$ & \textcolor{blue}{\href{https://oeis.org/A025242}{A025242}} \\
\hline
$(1423,010)=(1423,0\hyph\hyph)$ & 1212, 1232 & $\varphi_L$ & \textcolor{red}{\href{https://oeis.org/A001519}{A001519}} \\
$(1423,101)=(1423,\hyph0\hyph)$ & 1212, 1221 & $\varphi_R$ & \textcolor{red}{\href{https://oeis.org/A001519}{A001519}} \\
% $(4132,101)=(4132,\hyph0\hyph)$ & 1212, 1221 & $\varphi_L$ & \textcolor{red}{\href{https://oeis.org/A001519}{A001519}} \\
$(4132,010)=(4132,0\hyph\hyph)$ & 1212, 1213 & $\varphi_R$ & \textcolor{red}{\href{https://oeis.org/A001519}{A001519}} \\
\hline
$(1432,011)=(1432,\hyph1\hyph)$ & 1212, 1222 & $\varphi_L$ & \textcolor{green}{\href{https://oeis.org/A005773}{A005773}} \\
$(4123,011)=(4123,\hyph1\hyph)$ & 1212, 1112 & $\varphi_R$ & \textcolor{green}{\href{https://oeis.org/A005773}{A005773}} \\
\hline
$(2143,101)=(2143,\hyph0\hyph)$ & 1212, 1122 & $\varphi_L$ & \textcolor{red}{\href{https://oeis.org/A001519}{A001519}} \\
% $(3124,011)=(3124,0\hyph\hyph)$ & 1212, 1122 & $\varphi_R$ & \textcolor{red}{\href{https://oeis.org/A001519}{A001519}} \\
\hline\hline
$(12345,1111)=(12345,111\hyph)$ & 1212, 11111 & $\varphi_R$ & \href{https://oeis.org/A036766}{A036766} \\
\hline
$(12354,1110)=(12354,11\hyph\hyph)$ & 1212, 11121 & $\varphi_R$ & \textcolor{teal}{\href{https://oeis.org/A159768}{A159768}} \\
$(54312,1110)=(54312,11\hyph\hyph)$ & 1212, 12111 & $\varphi_L$ & \textcolor{teal}{\href{https://oeis.org/A159768}{A159768}} \\
\hline
$(12435,1101)=(12435,11\hyph\hyph)$ & 1212, 11211 & $\varphi_R$ & \textcolor{teal}{\href{https://oeis.org/A159768}{A159768}} \\
\hline
$(12534,1101)=(12534,1\hyph0\hyph)$ & 1212, 11221 & $\varphi_R$ & \textcolor{green}{\href{https://oeis.org/A176677}{A176677}} \\
$(54132,1101)=(54132,1\hyph0\hyph)$ & 1212, 12211 & $\varphi_L$ & \textcolor{green}{\href{https://oeis.org/A176677}{A176677}} \\
\hline
$(13245,1011)=(13245,1\hyph1\hyph)$ & 1212, 12111 & $\varphi_R$ & \textcolor{teal}{\href{https://oeis.org/A159768}{A159768}} \\
$(53214,1110)=(53214,11\hyph\hyph)$ & 1212, 11121 & $\varphi_L$ & \textcolor{teal}{\href{https://oeis.org/A159768}{A159768}} \\
\hline
$(14235,1011)=(14235,10\hyph\hyph)$ & 1212, 12211 & $\varphi_R$ & \textcolor{green}{\href{https://oeis.org/A176677}{A176677}} \\
$(52143,1101)=(52143,1\hyph0\hyph)$ & 1212, 11221 & $\varphi_L$ & \textcolor{green}{\href{https://oeis.org/A176677}{A176677}} \\
\hline
$(15234,1011)=(15234,\hyph01\hyph)$ & 1212, 12221 & $\varphi_R$ & \textcolor{blue}{\href{https://oeis.org/A054391}{A054391}} \\
% $(51432,1011)=(51432,\hyph01\hyph)$ & 1212, 12221 & $\varphi_L$ & \textcolor{blue}{\href{https://oeis.org/A054391}{A054391}} \\
\hline
$(15243,0101)=(15243,0\hyph0\hyph)$ & 1212, 12332 & $\varphi_L$ & \textcolor{red}{\href{https://oeis.org/A007051}{A007051}} \\
$(51423,0101)=(15243,0\hyph0\hyph)$ & 1212, 12213 & $\varphi_R$ & \textcolor{red}{\href{https://oeis.org/A007051}{A007051}} \\
$(15243,1010)=(15243,\hyph0\hyph\hyph)$ & 1212, 12321 & $\varphi_R$ & \textcolor{red}{\href{https://oeis.org/A007051}{A007051}} \\
% $(51423,1010)=(15243,\hyph0\hyph\hyph)$ & 1212, 12321 & $\varphi_L$ & \textcolor{red}{\href{https://oeis.org/A007051}{A007051}} \\
\hline
$(15324,0110)=(15324,01\hyph\hyph)$ & 1212, 12232 & $\varphi_L$ & \newa{}\textrightarrow\href{https://oeis.org/A365508}{A365508} \\
$(51324,0101)=(51324,01\hyph\hyph)$ & 1212, 12113 & $\varphi_R$ & \newa{}\textrightarrow\href{https://oeis.org/A365508}{A365508} \\
\hline
$(15423,0110)=(15423,01\hyph\hyph)$ & 1212, 12322 & $\varphi_L$ & \newa{}\textrightarrow\href{https://oeis.org/A365508}{A365508} \\
$(51243,0110)=(51243,01\hyph\hyph)$ & 1212, 11213 & $\varphi_R$ & \newa{}\textrightarrow\href{https://oeis.org/A365508}{A365508} \\
\hline
$(15432,0111)=(15432,\hyph11\hyph)$ & 1212, 12222 & $\varphi_L$ & \textcolor{olive}{\href{https://oeis.org/A159772}{A159772}} \\
$(51234,0111)=(15432,\hyph11\hyph)$ & 1212, 11112 & $\varphi_R$ & \textcolor{olive}{\href{https://oeis.org/A159772}{A159772}} \\
\hline
$(21534,1010)=(21534,\hyph0\hyph\hyph)$ & 1212, 11232 & $\varphi_L$ & \textcolor{red}{\href{https://oeis.org/A007051}{A007051}} \\
$(41325,0101)=(41325,0\hyph\hyph\hyph)$ & 1212, 12133 & $\varphi_R$ & \textcolor{red}{\href{https://oeis.org/A007051}{A007051}} \\
\hline
$(21543,1011)=(21543,\hyph01\hyph)$ & 1212, 11222 & $\varphi_L$ & \textcolor{blue}{\href{https://oeis.org/A054391}{A054391}} \\
$(41235,0111)=(41235,01\hyph\hyph)$ & 1212, 11122 & $\varphi_R$ & \textcolor{blue}{\href{https://oeis.org/A054391}{A054391}} \\
\hline
\end{tabular}
\end{center}
\end{table}

\textbf{Case~(a):} $\beta_1<\alpha_1$.
By the definition of~$x'$, all vertices in~$T'$ are sandwiched between $i_{\beta_1}$ and~$i_{\beta_1+1}$, implying that $T'=R(i_{\beta_1})$ in~$T$.
Using that $T'$ contains the tree pattern~$(P(j),e_{P(j)})$, and the property~$e(j)=0$ from the definition of staggered tree patterns, we conclude that $T$ contains the tree pattern~$(P,e)$.

\textbf{Case~(b):} $\beta_1=\alpha_1$.
By the definition of~$x'$, all vertices in~$T'$ are larger than~$i_{\alpha_1}$.
If $T'=R(i_{\alpha_1})$ in~$T$, then $T$ contains the tree pattern~$(P,e)$, as argued in the previous case.
Otherwise, consider the lowest common ancestor~$\ihat$ of $i_{\alpha_1}$ and~$T'$ in~$T$.
Specifically, we have $i_1,\ldots,i_{\alpha_1}\in L(\ihat)$ and $T'\seq R(\ihat)$ in~$T$.
Condition~(i) stated in the theorem asserts that $\alpha_1\in\{1,2\}$, and therefore $T$ contains the tree pattern~$(P,e)$.
Specifically, $\ihat$ and its left child in~$T$ make the occurrence of~$(P(j),e_{P(j)})$ to an occurrence of~$(P,e)$.

This completes the proof of the theorem.
\end{proof}

By applying the mirroring operation~$\mu$ to a tree pattern, we obtain the following immediate consequence of Theorem~\ref{thm:staggered}.

\begin{theorem}
\label{thm:staggered-mu}
Let $(P,e)$ be such that $\mu(P,e)$ is staggered and satisfies the conditions of Theorem~\ref{thm:staggered}.
Then the mapping~$\varphi_R\colon \cT_n(P,e) \rightarrow \cP_n(1212, \varphi_R(P))$ is a bijection.
\end{theorem}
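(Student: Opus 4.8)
The plan is to deduce this directly from Theorem~\ref{thm:staggered} applied to the staggered pattern $\mu(P,e)$, transporting the result back across the mirror map $\mu$. Recall from Section~\ref{sec:pat-tree} that $\cT_n(P,e)=\mu(\cT_n(\mu(P,e)))$ and that on binary trees $\mu$ is an involution that interchanges maximal left branches with maximal right branches (relabelling vertex $i$ as $n+1-i$). I would introduce the corresponding involution $\iota$ on $\cP_n$ that relabels an element $i$ of a set partition as $n+1-i$ and rewrites the result in RGS form. Two easy observations: (a) $\iota$ preserves the non-crossing property, since a crossing $a<b<c<d$ of $Q$ is precisely a crossing $n+1-d<n+1-c<n+1-b<n+1-a$ of $\iota(Q)$, so $\iota$ restricts to an involution of $\cP_n(1212)$; and (b) since $\mu$ turns right branches of $T$ into left branches of $\mu(T)$, one has $\iota\circ\varphi_R=\varphi_L\circ\mu$ on $\cT_n$, and likewise $\iota(\varphi_R(P))=\varphi_L(\mu(P))$. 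Hence $\varphi_R=\iota^{-1}\circ\varphi_L\circ\mu$, and by Theorem~\ref{thm:staggered} (applied to $\mu(P,e)$, which is staggered and satisfies~(i)--(ii)) this composition maps $\cT_n(P,e)$ bijectively onto $\iota^{-1}\big(\cP_n(1212,\varphi_L(\mu(P)))\big)$.

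It then remains to identify $\iota^{-1}\big(\cP_n(1212,\varphi_L(\mu(P)))\big)$ with $\cP_n(1212,\varphi_R(P))$, i.e.\ to show that $\iota$ carries $\varphi_R(P)$-avoiding non-crossing partitions to $\varphi_L(\mu(P))$-avoiding ones and conversely. This is the one step that needs care, because on RGS \emph{strings} the involution $\iota$ does not preserve string-pattern containment in general: it reverses the string but also permutes symbol values in a way depending on the partition. The resolution is to exploit the special shape of the RGS $\varphi_L(\mu(P))$, given explicitly by~\eqref{eq:phiLP} applied to $\mu(P)$, and correspondingly of $\varphi_R(P)=\iota(\varphi_L(\mu(P)))$: these are ``rainbow-nested'' RGS, each symbol occurring in one increasing run followed by one decreasing run. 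For a non-crossing partition $Q$, any occurrence of such a rainbow pattern in $\mathrm{RGS}(Q)$ forces the blocks of $Q$ witnessing the occurrence to be genuinely nested inside one another: if an outer block's used elements surround an inner block's used elements, then non-crossingness prevents the inner block from protruding beyond them. On a chain of genuinely nested blocks, the min-rank order is exactly the reverse of the max-rank order, which is precisely what makes the value-permutation induced by $\iota$ monotone on the relevant symbols. Consequently an occurrence of $\varphi_R(P)$ in $\mathrm{RGS}(Q)$ maps to an occurrence of $\varphi_L(\mu(P))$ in $\mathrm{RGS}(\iota(Q))$, and vice versa by the symmetric argument applied to $\iota(Q)$; this yields the desired equality of avoider sets and hence the bijection $\varphi_R\colon\cT_n(P,e)\to\cP_n(1212,\varphi_R(P))$.

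I expect this rainbow-nested/non-crossing compatibility to be the main obstacle; everything else is bookkeeping with the mirror map. A legitimate alternative that sidesteps $\iota$ entirely is to reprove Theorem~\ref{thm:staggered} with ``left'' and ``right'' interchanged throughout (using $\varphi_R$, right branches, and the mirror of~\eqref{eq:phiLP}): since ``$\mu(P,e)$ is staggered and satisfies~(i)--(ii)'' is by definition the mirror hypothesis, that proof applies verbatim and shows that $T\in\cT_n$ contains $(P,e)$ if and only if $\varphi_R(T)$ contains the RGS pattern $\varphi_R(P)$, which together with Lemma~\ref{lem:1212-tree} is exactly the claim. In the final write-up I would present whichever route is shorter; the ``mirror the proof'' route is essentially a one-line justification once the symmetry of the argument in Theorem~\ref{thm:staggered} is noted.
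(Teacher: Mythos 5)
Your second route is exactly what the paper does: Theorem~\ref{thm:staggered-mu} is stated as an immediate consequence of Theorem~\ref{thm:staggered} obtained by applying the mirroring operation~$\mu$, i.e.\ the proof of Theorem~\ref{thm:staggered} is repeated verbatim with left and right interchanged, $\varphi_L$ replaced by~$\varphi_R$, and the mirror image of~\eqref{eq:phiLP}; no transport of RGS patterns across a relabelling map is ever performed. So in that form your proposal is correct and matches the paper. Your first route, via the involution~$\iota$ on set partitions, is genuinely different and more laborious, and you have correctly located its crux: $\iota$ reverses positions but reranks blocks by decreasing maximum rather than increasing minimum, so RGS-pattern containment is not preserved in general, and one must argue that the blocks witnessing an occurrence of the specific pattern $\varphi_L(\mu(P))$ in a non-crossing partition form a nested chain on which the two rankings are reverses of each other. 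Be aware that this nesting argument is clean only for the symbols $i$ with $1\leq\beta_i<\alpha_i$ (whose used elements genuinely surround those of symbol $i+1$, so that non-crossingness forces nesting); in the degenerate case $\beta_1=\alpha_1$ permitted by condition~(i) of Theorem~\ref{thm:staggered}, the outermost block does not surround the next one and the chain argument needs a separate (short) treatment, mirroring Case~(b) of the proof of Theorem~\ref{thm:staggered}. Given that the mirrored-proof route is a one-line justification, that is the version to present.
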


Theorems~\ref{thm:staggered} and~\ref{thm:staggered-mu} are quite versatile.
Applying them to all tree patterns on at most 5 vertices, we obtain the correspondences with pattern-avoiding set partitions listed in Table~\ref{tab:staggered}.
This also establishes some interesting Wilf-equivalences between various pattern-avoiding non-crossing set partitions, for example between the three sets~$\cP_n(1212,1232)$, $\cP_n(1212,1221)$, and~$\cP_n(1212,1213)$, or between the three sets~$\cP_n(1212,12332)$, $\cP_n(1212,12213)$, and~$\cP_n(1212,12321)$ (cf.~\cite{MR2863229}).

\section{Wilf-equivalence of tree patterns}
\label{sec:wilf}

In this section we provide five general lemmas for establishing Wilf-equivalence of certain tree patterns that are obtained by replacing some subpattern~$(P,e)$ with a Wilf-equivalent subpattern~$(P',e')$, or by moving it to a different vertex in the surrounding tree pattern.
We also give results for two specific patterns on 5~vertices: a Wilf-equivalence that is not covered by the general lemmas and a counting argument based on a Catalan-like recurrence.
We apply these result to systematically study Wilf-equivalences between all tree patterns on at most 5 vertices; see Tables~\ref{tab:size-3}--\ref{tab:size-5}.

\subsection{Subpattern replacement and shifting lemmas}

The first lemma considers replacing a subpattern with a Wilf-equivalent subpattern attached by a non-contiguous edge to a contiguous tree; see Figure~\ref{fig:subtree}.

\begin{lemma}
\label{lem:subtree}
Let $(S,1\cdots1)$ be a contiguous tree pattern, and let $x$ be a vertex in $S$ that does not have a right child and for any edge~$(u,v)$ of~$S$ such that $v$ is the right child of~$u$, we have that $v$ is a leaf or $x$ is in the subtree~$S(v)$.
Let $Q$ and~$Q'$ denote the tree patterns obtained from $(S,1\cdots1)$ by attaching tree patterns~$(P,e)$ and~$(P',e')$ with $P,P'\in\cT_k$, respectively, with a non-contiguous edge to~$x$ as a right subtree.
If $(P,e)$ and $(P',e')$ are Wilf-equivalent, then $Q$ and $Q'$ are also Wilf-equivalent.
\end{lemma}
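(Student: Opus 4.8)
The plan is to establish a bijection between $\cT_n(Q)$ and $\cT_n(Q')$ by localizing the pattern-avoidance condition to the subtrees that can possibly host an occurrence of $(P,e)$ or $(P',e')$. First I would analyze where an occurrence of $Q$ in a host tree $T\in\cT_n$ can live. Since $S$ is a contiguous tree pattern and $x$ is attached to the $(P,e)$-part by a non-contiguous edge, an occurrence of $Q$ consists of a contiguous embedded copy of $S$ together with, hanging below the image $f(x)$ of $x$ as its right subtree (in the appropriate child direction), an occurrence of $(P,e)$ somewhere in $R(f(x))$. The key structural observation—this is where the two hypotheses on $x$ are used—is that the contiguous copy of $S$ must sit in a very restricted position: because $x$ has no right child in $S$ and every right edge of $S$ either leads to a leaf or lies on the path towards $x$, the image of $S$ under $f$ is essentially forced to be ``read off'' along a prescribed walk in $T$ starting from some vertex, so that the set of vertices $v$ of $T$ that can serve as $f(x)$ for \emph{some} embedding of $S$ is determined purely by $T$ and $S$, independently of $(P,e)$. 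Call this set $\cX(T)\subseteq V(T)$.

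Next I would make this precise: $T$ contains $Q$ if and only if there exists $v\in\cX(T)$ such that $R(v)$ (the right subtree of $v$ in $T$) contains the tree pattern $(P,e)$. The forward direction is immediate from the definition of containment; the reverse direction uses that we can extend a witnessed occurrence of $(P,e)$ in $R(v)$ by the forced embedding of $S$ at $v$. Consequently,
\[
\cT_n(Q)=\{T\in\cT_n \mid R(v)\in\cT(P,e)\ \text{for all } v\in\cX(T)\},
\]
and the analogous statement holds for $Q'$ with $(P',e')$. Since $(P,e)$ and $(P',e')$ are Wilf-equivalent and $P,P'\in\cT_k$ have the same number of vertices, there is a bijection $g_m\colon\cT_m(P,e)\to\cT_m(P',e')$ for every $m$.

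The bijection $\cT_n(Q)\to\cT_n(Q')$ is then built by ``surgery'': given $T\in\cT_n(Q)$, leave every vertex outside the relevant right subtrees untouched, and replace each maximal right subtree $R(v)$ with $v\in\cX(T)$ by $g_{|R(v)|}(R(v))$. I would need to check three things: (1) the set $\cX(T)$ is unchanged by this surgery (because $\cX$ depends only on the ``skeleton'' of $T$ consisting of the vertices reachable by the forced $S$-walks, not on the contents of the replaced right subtrees)—this again leans on the hypothesis that every right edge of $S$ goes to a leaf or towards $x$, so $\cX$ only probes the left-branch structure and the single designated right subtree; (2) the different right subtrees $R(v)$, $v\in\cX(T)$, that get replaced are pairwise disjoint or nested consistently, so the surgery is well-defined—here one argues that if $v,v'\in\cX(T)$ with $v'\in R(v)$ then replacing $R(v)$ already subsumes $R(v')$, and compatibility of the $g_m$'s under restriction is \emph{not} needed if we simply replace only the \emph{topmost} such subtrees; (3) the map is invertible, using $g_m^{-1}$, which is clear once (1) and (2) hold.

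The main obstacle I anticipate is step~(1) together with the precise description of $\cX(T)$: one must show rigorously that the position of a contiguous $S$-occurrence is forced enough that $\cX(T)$ is a combinatorial invariant of the ambient tree unaffected by what happens inside the right subtrees we intend to modify, and simultaneously rich enough that \emph{every} occurrence of $Q$ is captured. This is exactly the role of the somewhat technical hypothesis on the vertex $x$, and getting the induction on the structure of $S$ right—peeling off leaves and the right-branch towards $x$—will be the delicate part; the rest is bookkeeping.
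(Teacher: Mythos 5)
Your plan is essentially the paper's proof: take the bijection $g$ provided by the Wilf-equivalence of $(P,e)$ and $(P',e')$, collect the images of $x$ over all contiguous occurrences of $S$ in the host tree, and replace the right subtrees hanging at the \emph{topmost} such images by their $g$-images, using the hypothesis on the right edges of $S$ to keep control of occurrences of $S$. The one place where your write-up goes wrong --- and which you yourself flag as the anticipated obstacle without resolving it --- is step~(1): the full set of possible images of $x$ is \emph{not} invariant under the surgery, and it does depend on the contents of the replaced right subtrees. Replacing $R(v)$ by $g(R(v))$ can destroy occurrences of $S$ that lived inside $R(v)$ and create brand-new occurrences inside $g(R(v))$, so elements of your set can appear and disappear; the claim that the set ``only probes the skeleton'' of $T$ is false as stated.

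What is true, and what suffices, is the weaker invariance of the topmost images of $x$ (equivalently, of membership for vertices lying outside the replaced subtrees), and proving it is exactly where the hypothesis on $S$ is used, as in the paper: any occurrence of $S$ (old or new) that uses a vertex strictly inside a replaced subtree either lies entirely within that subtree, or it crosses the edge from $v$ to its right child; this edge is then the image of a right edge $(u,w)$ of $S$, and since $w$ is a leaf or $x\in S(w)$, either no vertex strictly inside the subtree is used after all, or the image of $x$ is forced into $g(R(v))$. Consequently all new images of $x$ lie below topmost ones, the topmost set is preserved, $f(T)$ avoids $Q'$ because every relevant right subtree of $f(T)$ is contained in some $g(R(v))$, which avoids $(P',e')$, and the map is inverted by the same surgery with $g^{-1}$. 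So the architecture of your argument matches the paper's, and your points (2) and (3) are fine, but the invariant in (1) must be weakened to the topmost/outside version; carrying out that case analysis is precisely the content of the paper's proof rather than mere bookkeeping.
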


Note that while all edges of~$(S,1\cdots1)$ are contiguous, no assumption is made about the edges of~$P$ or~$P'$, i.e., the functions~$e$ and~$e'$ are arbitrary.
However, the edge from~$x$ to the root of~$(P,e)$ or~$(P',e')$ must be non-contiguous in both~$Q$ and~$Q'$.

\begin{figure}[b!]
\includegraphics[page=1]{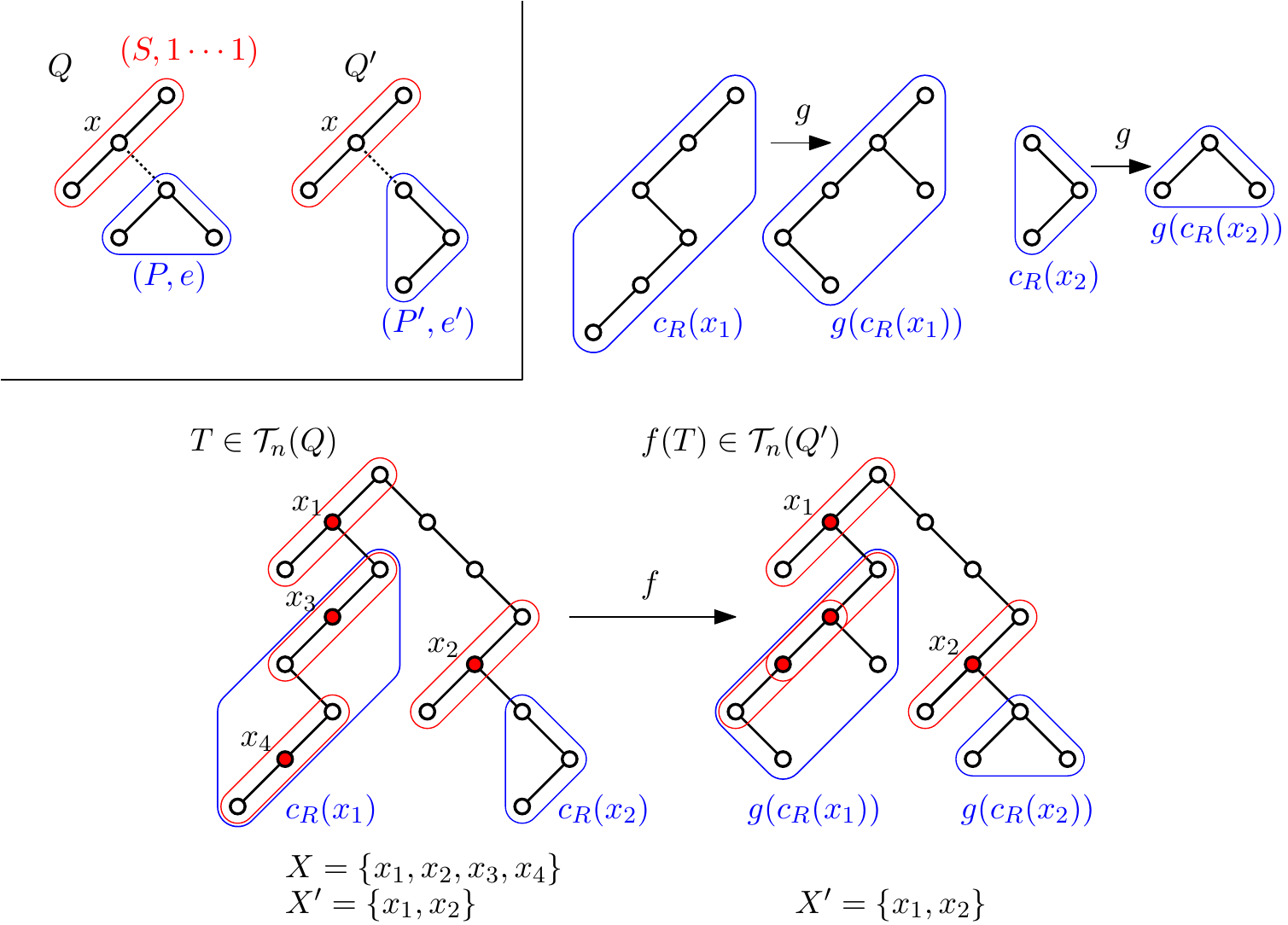}
\caption{Illustration of Lemma~\ref{lem:subtree}.
The bijection $g\colon \cT_n(213,\hyph\hyph)\rightarrow\cT_n(132,\hyph\hyph)$ is the composition of the bijections described in Sections~\ref{sec:213-bin} and \ref{sec:132-bin}.}
\label{fig:subtree}
\end{figure}

\begin{proof}
The proof is illustrated in Figure~\ref{fig:subtree}.
As $(P,e)$ and~$(P',e')$ are Wilf-equivalent, there is a bijection $g_n\colon \cT_n(P,e) \to \cT_n(P',e')$ for all~$n\geq 0$, and we let $g$ be the union of those functions over all~$n\geq 0$.
Let $T \in \cT_n(Q)$ and consider all occurrences of $(S,1\cdots1)$ in~$T$.
Let $X=\{x_1,\dots, x_\ell\}$ be the corresponding occurrences of the vertex~$x$ of~$Q$ in the host tree~$T$, i.e., $x_i\in[n]$ denotes the vertex to which $x$ is mapped in the $i$th occurrence of~$(S,1\cdots1)$, for all $i=1,\ldots,\ell$.
Furthermore, let $X'\seq X$ be minimal such that every $x_i\in X\setminus X'$ is in the right subtree of some vertex in~$X'$.
As $T$ avoids~$Q$, the subtree~$R(x_i)$ avoids~$(P,e)$ for every~$x_i\in X$.
We define $f(T)\in\cT_n$ as the tree obtained from~$T$ by replacing~$R(x_i)$ with~$g(R(x_i))$ for every~$x_i \in X'$.
Although this may introduce new occurrences of~$S$ in~$f(T)$, any new occurrence of~$S$ must include at least one vertex inside a right subtree~$g(R(x_i))$ of a vertex~$x_i \in X'$ other than the right child of~$x_i$ (since no other parts of~$T$ were modified by~$f$).
Thus, the new occurrence is either entirely confined to the right subtree~$g(R(x_i))$ of some~$x_i$, or it must contain the edge from~$x_i$ to its right child, in which case the assumption on~$S$ guarantees that the vertex~$x$ is mapped to the right subtree~$g(R(x_i))$ of~$x_i$ in this occurrence.
Thus, since $g(R(x_i))$ for each $x_i \in X'$ avoids~$(P',e')$, the tree~$f(T)$ avoids~$Q'$.
Furthermore, since the set $X'$ is invariant under~$f$, the mapping $f$ is reversible, so it is a bijection.
\end{proof}

Our second lemma considers moving a subpattern attached by a non-contiguous edge along a left branch; see Figure~\ref{fig:pathmove}.

\begin{lemma}
\label{lem:pathmove}
Let $P_d$ denote the left path on $d$ vertices and let $x,x'$ be two distinct vertices of~$P_d$.
Let $Q$ and~$Q'$ denote the tree patterns obtained from the contiguous pattern~$(P_d,1\cdots1)$ by attaching a tree pattern~$(P,e)$ with a non-contiguous edge to~$x$ and~$x'$ as a right subtree, respectively.
Then $Q$ and~$Q'$ are Wilf-equivalent.
\end{lemma}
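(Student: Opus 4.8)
The plan is to reduce to the case of adjacent vertices on $P_d$ and then build an explicit bijection $g\colon\cT_n(Q)\to\cT_n(Q')$ by induction on $n$. Since Wilf-equivalence is transitive and attaching $(P,e)$ at any vertex of $P_d$ again yields a pattern of the form considered here, it suffices to treat the case $x'=c_L(x)$; write $d$ for the number of vertices of $P_d$ and let $t$ be such that $x=c_L^{t-1}(r(P_d))$, so that $1\le t\le d-1$ and $x'$ is the $(t+1)$-th vertex of $P_d$ counted from the root. I would use the following decomposition of a binary tree $T$: set $u_1:=r(T)$ and $u_{i+1}:=c_L(u_i)$ for $1\le i<m:=\beta_L(r(T))$ (the maximal left branch from the root), and let $S_i:=R(u_i)$, so that $T$ is determined by $m$ together with the sequence $(S_1,\dots,S_m)$. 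Because taking left children keeps one on the top branch or inside the subtree one started in, every left branch of $T$ on at least two vertices lies either on $u_1,\dots,u_m$ or entirely inside a single $S_i$, and one obtains the recursive description: $T$ contains $Q$ if and only if some $S_i$ contains $Q$, or $m\ge d$ and some $S_i$ with $t\le i\le m-d+t$ contains $(P,e)$; for $Q'$ the window of relevant indices is shifted to $t+1\le i\le m-d+t+1$. Finally, since the copy of $P$ inside $Q$ carries the same edge labels, every occurrence of $Q$ (or $Q'$) in a tree restricts to an occurrence of $(P,e)$, so $\cT_n(P,e)\seq\cT_n(Q)\cap\cT_n(Q')$.

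With this in hand I would define $g$ as follows. If $T$ avoids $(P,e)$, put $g(T):=T$; this is the identity bijection of $\cT_n(P,e)$ onto itself. If $T$ avoids $Q$ but contains $(P,e)$, decompose $T$ as above (each $S_i$ then avoids $Q$), set $\tilde S_i:=g(S_i)$, and, when $m\ge d$, interchange the subtrees in positions $t$ and $m-d+t+1$ — the slot that leaves the window and the slot that enters it — leaving everything else unchanged; let $g(T)$ be the tree encoded by the resulting sequence. Assuming inductively that on smaller trees $g$ is a bijection onto the corresponding $Q'$-avoiders which preserves whether or not the tree contains $(P,e)$, one checks that $g(T)\in\cT_n(Q')$: all the new subtrees avoid $Q'$, and the subtrees sitting in the shifted window are $\tilde S_{t+1},\dots,\tilde S_{m-d+t}$ (untouched) together with $\tilde S_t$ moved into position $m-d+t+1$, all of which avoid $(P,e)$. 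The inverse of $g$ reverses the interchange and applies $g^{-1}$ recursively, so $g$ is a bijection as soon as one knows that $g$ preserves $(P,e)$-containment, i.e.\ that $g(T)$ contains $(P,e)$ exactly when $T$ does.

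Establishing this last property is the crux of the argument, and it is where the interchange (rather than some arbitrary rearrangement of the $S_i$) is essential. An occurrence of $(P,e)$ confined to a single $S_i$ survives because $g(S_i)$ contains $(P,e)$ by induction. The delicate case is an occurrence whose image of $r(P)$ lies on the branch $u_1,\dots,u_m$: such an occurrence \emph{straddles} the branch, running down a block of consecutive slots $S_j,\dots,S_{j+p-1}$, where $p$ is the length of the maximal contiguous left branch at the root of $(P,e)$, and possibly continuing into the tail $T(u_{j+p})$. Since the recursive replacement and the interchange both alter the subtrees involved, such an occurrence need not persist verbatim, and this is exactly the obstacle I expect to be the hardest. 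I would handle it by strengthening the induction hypothesis so that $g$ preserves, in a root-aware sense, containment of each of the finitely many subpatterns of $(P,e)$ that can occur in a straddling embedding — namely the patterns obtained by successively peeling contiguous left edges off the root of $(P,e)$, together with the right subtrees hanging off that branch — and then checking, by a case analysis on how the block $\{j,\dots,j+p-1\}$ and the tail interact with the two interchanged positions $t$ and $m-d+t+1$, that the interchange only relocates the relevant sub-occurrences among the slots and never creates or destroys one. This bookkeeping is routine but lengthy, and it is the main work of the proof.
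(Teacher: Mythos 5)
Your branch decomposition and the window characterization of when a tree contains $Q$ or $Q'$ match the paper's, but the bijection you build on it fails at exactly the point you defer as ``routine but lengthy'': the map $g$, as defined, does not preserve containment of $(P,e)$, and in fact it is not injective. Here is a family of counterexamples. Take $d=2$, $x=r(P_2)$, $x'=c_L(x)$ (so $t=1$), and let $(P,e)=(213,11)$ be the root with a contiguous left child and a contiguous right child; then $Q=(21435,1011)$ and $Q'=(51324,1011)$. For any $m\geq 2$ let $T\in\cT_{m+1}$ consist of the left branch $u_1,\ldots,u_m$ from the root together with a single pendant right child of $u_1$, i.e.\ $S_1$ is one vertex and $S_2=\cdots=S_m=\varepsilon$. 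Then $T$ avoids $Q$ (every vertex with a left child has a right subtree with at most one vertex), and $T$ contains $(P,e)$ via the straddling occurrence $r(P)\mapsto u_1$, $c_L(r(P))\mapsto u_2$, $c_R(r(P))\mapsto r(S_1)$. Your map fixes every slot (each $S_i$ avoids $(P,e)$) and, since $m\geq d$, swaps slots $t=1$ and $m-d+t+1=m$, producing the tree $W$ in which the pendant vertex hangs off $u_m$. No vertex of $W$ has two children, so $W$ avoids $(P,e)$; hence $g(T)=W$ while also $g(W)=W$ by your identity branch, so $g$ is not injective, and the equivalence ``$g(T)$ contains $(P,e)$ iff $T$ does'' is false. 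The failure is structural rather than a matter of bookkeeping: a straddling occurrence whose right part sits in the slot that gets moved to the bottom of the branch is destroyed because $u_m$ has no left child to complete the pattern, and since the collision already occurs with all slots trivial (and, taking $m$ large, with $n$ exceeding the number of vertices of $Q$, so adding a size base case does not help), no strengthening of the induction hypothesis about how $g$ acts on smaller trees can repair it --- the map itself has to change.

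The paper's proof avoids needing any containment-preservation statement by making the case distinction depend only on quantities the map preserves: the base case is $n$ smaller than the size of $Q$, and otherwise the cases are $b<d$ versus $b\geq d$, where $b$ is the length of the top-left branch. Instead of recursing into every slot and then swapping two boundary slots, it cyclically shifts all $b$ slots by $s'-s$ positions, leaving the slots in the $Q$-window untouched (they avoid $(P,e)$ and land exactly on the $Q'$-window) and recursing only into the out-of-window slots (which avoid $Q$ and become $Q'$-avoiders). Avoidance of $Q'$ then follows directly from the window characterization, and reversibility follows because the inverse can recompute $b$, undo the shift, and recurse on the out-of-window positions. Your swap idea can be salvaged, but only by adopting this architecture --- case split on $n$ and $b$ rather than on $(P,e)$-containment, recursion restricted to the out-of-window slots, window slots left untouched --- at which point the swap is just another slot permutation carrying the $Q$-window onto the $Q'$-window, i.e.\ essentially the paper's cyclic shift.
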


\begin{figure}[h!]
\centerline{
\includegraphics[page=2]{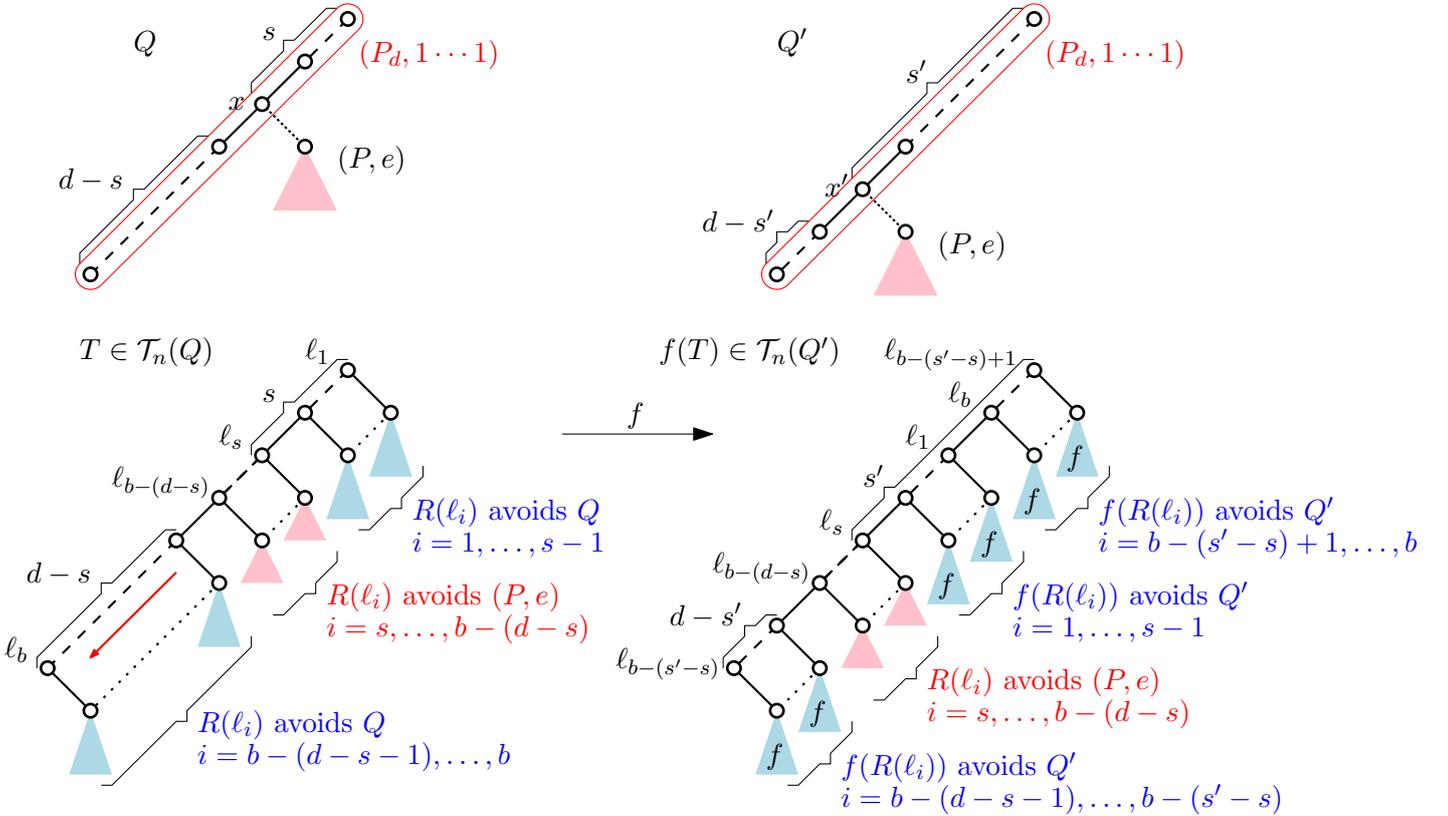}
}
\caption{Illustration of Lemma~\ref{lem:pathmove}.}
\label{fig:pathmove}
\end{figure}

\begin{proof}
The proof is illustrated in Figure~\ref{fig:pathmove}.
Let $s$ and $s'$ be the number of vertices above and including~$x$ or~$x'$ in~$Q$ and~$Q'$, respectively.
We assume w.l.o.g.\ that $s<s'$.
We inductively describe a bijection $f \colon \cT_n(Q) \to \cT_n(Q')$.
Consider a tree~$T \in \cT_n(Q)$.
If $n$ is strictly smaller than the number of vertices of~$Q$, then we define $f(T):=T$, i.e., $f$ is defined to be the identity mapping.
Otherwise we define $b:=\beta_L(r(T))$ and~$\ell_i:=c_L^{i-1}(r(T))$ for~$i=1,\ldots,b$, i.e., we consider the left branch~$(\ell_1,\ldots,\ell_b)$ starting at the root of~$T$.
If $b<d$ then $f(T)$ is obtained by recursively applying~$f$ to each subtree~$R(\ell_i)$ for $i=1,\dots,b$.
It remains to consider the case that~$b\ge d$.
Since $T$ avoids~$Q$, each subtree $R(\ell_i)$ for $i=s,\ldots,b-(d-s)$ avoids~$(P,e)$ and each subtree~$R(\ell_i)$ for $i=1,\dots,s-1$ and $i=b-(d-s-1),\dots,b$ avoids~$Q$.
The tree~$f(T)$ is obtained from~$T$ by cyclically down-shifting the left branch~$(\ell_1,\ldots,\ell_b)$ and its right subtrees by $s'-s$ positions.
Thus the vertex~$\ell_s$ becomes the $s'$th vertex from the root in~$f(T)$.
Furthermore, to the subtrees~$R(\ell_i)$ for $i=1,\dots,s-1$ and $i=b-(d-s-1),\ldots,b$ (avoiding $Q$) we recursively apply the mapping~$f$.
By construction, the tree~$f(T)$ avoids the tree pattern~$Q'$.
Furthermore, the mapping $f:\cT_n(Q) \to \cT_n(Q')$ is reversible, so it is a bijection.
\end{proof}

The next four lemmas consider different ways of moving a subpattern attached by a non-contiguous edge to a contiguous L-shaped path of length~2; see Figures~\ref{fig:Lmove1}--\ref{fig:Lmove3}.

\begin{figure}[h!]
\includegraphics[page=3]{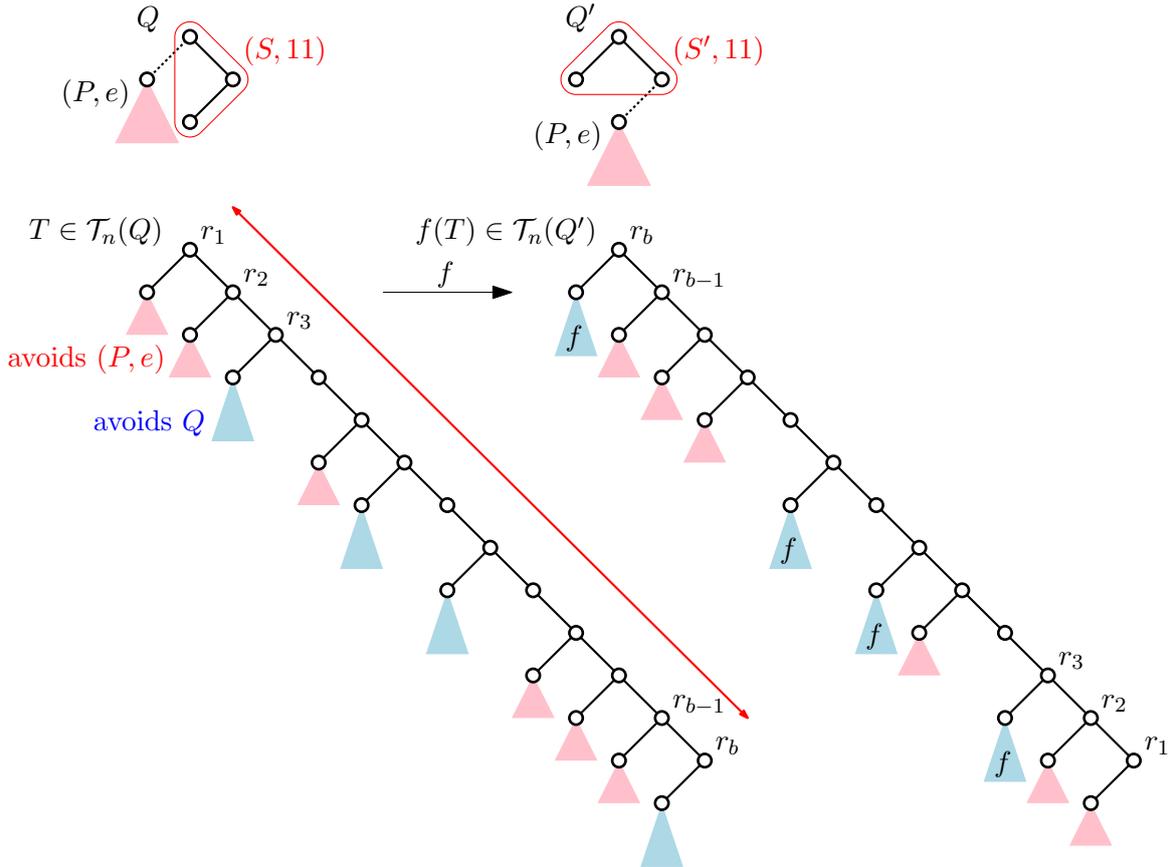}
\caption{Illustration of Lemma~\ref{lem:Lmove1}.}
\label{fig:Lmove1}
\end{figure}

\begin{lemma}
\label{lem:Lmove1}
Let $S$ and $S'$ be the paths of length~2 with $\tau(S)=132$ and $\tau(S')=213$, respectively.
Let $Q$ and~$Q'$ denote the tree patterns obtained from the contiguous patterns~$(S,11)$ and~$(S',11)$ by attaching a tree pattern~$(P,e)$ with a non-contiguous edge to the root of~$S$ as a left subtree and to the right leaf of~$S'$ as a left subtree, respectively.
Then $Q$ and~$Q'$ are Wilf-equivalent.
\end{lemma}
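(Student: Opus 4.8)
The plan is to mimic the inductive-bijection strategy already used for Lemma~\ref{lem:subtree} and Lemma~\ref{lem:pathmove}, and to exploit the concrete Wilf-equivalence $\cT_n(132,\hyph\hyph)\cong\cT_n(213,\hyph\hyph)$ via the bitstring bijection $g\colon\cT_n(213,\hyph\hyph)\to\cT_n(132,\hyph\hyph)$ of Section~\ref{sec:213-bin} and Section~\ref{sec:132-bin} (composed appropriately). Concretely, I would first spell out the two patterns: $Q$ has a root $r$ with left child $\ell$ (the contiguous $(S,11)$ with $\tau(S)=132$ means $r$ has a left child and that left child has a right child — careful with the labelling; in compact form $S=132$ so the root is the smallest, its left child is a leaf's parent\ldots), and the subpattern $(P,e)$ is hung by a non-contiguous edge off the \emph{root} of $S$ as a \emph{left} subtree; $Q'$ is built from $(S',11)$ with $\tau(S')=213$ and $(P,e)$ hung by a non-contiguous edge off the \emph{right leaf} of $S'$, again as a left subtree. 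The point of the lemma is that these two ``L-shaped anchors plus a dangling left subtree'' constrain the host tree in the same way.

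First I would set up the recursion on the host tree $T\in\cT_n(Q)$. If $n$ is smaller than $|Q|$, set $f(T):=T$. Otherwise look at the root $r(T)$ and decompose according to whether the forbidden L-shape is ``active'' at the root. The key structural observation is the one used implicitly throughout Section~\ref{sec:wilf}: because all edges of $S$ (resp.\ $S'$) are contiguous and only the edge down to $(P,e)$ is non-contiguous, avoiding $Q$ (resp.\ $Q'$) restricts $T$ locally near each occurrence of the contiguous L, and away from those occurrences $T$ is unconstrained. I would identify, inside $T$, the collection of vertices that play the role of the ``corner'' of the L; at each such corner the relevant left subtree must avoid $(P,e)$, and everywhere else the subtrees must still avoid $Q$ (recursively). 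Then $f$ replaces, at each corner, the $(P,e)$-avoiding left subtree $W$ by $g(W)$ (a $(P',e')$-avoiding, here $(P,e)$ and $(P',e')$ are the \emph{same} pattern — wait: re-reading, $(P,e)$ is the same on both sides, so actually the bijection needed is not between $\cT_n(P,e)$ and $\cT_n(P',e')$ but is the identity on the dangling subtree) and simultaneously rearranges the \emph{anchor} from a $132$-shape to a $213$-shape. That is, the real content is a local surgery $(S\text{-anchor},W)\leftrightarrow(S'\text{-anchor},W)$ together with the bitstring bijection $g$ applied to the \emph{spine} of left/right branches surrounding the anchor, exactly as the figure~\ref{fig:Lmove1} caption suggests. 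So I would: (1) characterise $\cT_n(Q)$ as the trees whose ``skeleton'' (obtained by contracting the dangling $(P,e)$-avoiding subtrees to single marked leaves) avoids the small pattern $(132\text{ with a marked left-leaf},\,11)$; (2) apply the $213\leftrightarrow132$ bijection $g$ of Sections~\ref{sec:213-bin}–\ref{sec:132-bin} to that skeleton, which turns a $132$-avoidance condition into a $213$-avoidance condition while carrying the marked leaves along; (3) re-expand each marked leaf into its original dangling subtree. Reversibility is immediate because $g$ is a bijection and the expand/contract operations are inverse to each other.

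The main obstacle I expect is bookkeeping the marked leaves through the bijection $g$: $g$ as defined in Section~\ref{sec:213-bin}/\ref{sec:132-bin} operates on unadorned binary trees (paths, in the $213$ case), so I must check that it can be ``decorated'' — i.e.\ that a tree in $\cT_n(213,\hyph\hyph)$ which additionally carries some left subtrees hung at prescribed spine vertices maps under $g$ to a tree in $\cT_n(132,\hyph\hyph)$ carrying the corresponding left subtrees, in a way compatible with both avoidance conditions. Equivalently, I need that the ``corner'' vertex of the L (where $(P,e)$ dangles) is tracked correctly: in $Q$ it dangles from the \emph{root} of the $132$-anchor, in $Q'$ from the \emph{right leaf} of the $213$-anchor, and the bijection $g$ must send the former position to the latter. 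I would verify this by checking it on the three-vertex building block directly (matching the statement $\tau(S)=132$, $\tau(S')=213$ and ``root'' vs.\ ``right leaf''), and then argue that the recursive/decorated version inherits it. A secondary, more routine obstacle is making precise the claim in step~(1) that the skeleton avoidance is \emph{equivalent} to $T$ avoiding $Q$ — this needs the same ``a new occurrence of the contiguous part must use a vertex strictly inside a replaced subtree'' argument that appears in the proof of Lemma~\ref{lem:subtree}, adapted to the fact that here the anchor is on the left spine and the dangling piece is a left subtree. Once those two points are nailed down, the remaining verification that $f(T)$ avoids $Q'$ and that $f$ is a bijection is a direct, if slightly tedious, unwinding of the definitions, entirely parallel to the proofs of Lemmas~\ref{lem:subtree} and~\ref{lem:pathmove}.
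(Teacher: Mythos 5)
Your proposal correctly isolates the local content of $Q$-avoidance (at every occurrence of the contiguous anchor, the left subtree hanging at the anchor's root must avoid $(P,e)$, and the dangling pattern $(P,e)$ is the same on both sides), but the mechanism you propose for the bijection does not work. The bitstring bijection $g$ of Sections~\ref{sec:213-bin}--\ref{sec:132-bin} is a map between the very special classes $\cT_n(213,\hyph\hyph)$ and $\cT_n(132,\hyph\hyph)$ (paths, respectively trees consisting of a left branch with pendant right branches). A host tree in $\cT_n(Q)$ is an arbitrary binary tree, and so is any ``skeleton'' obtained from it by contracting subtrees; it avoids neither $(213,\hyph\hyph)$ nor $(132,\hyph\hyph)$ in general, so there is nothing to apply $g$ to. (The appearance of $g$ in Figure~\ref{fig:subtree} belongs to Lemma~\ref{lem:subtree}, where it illustrates a Wilf-equivalence between the two \emph{attached} subpatterns; in Lemma~\ref{lem:Lmove1} the attached pattern is identical on both sides, as you noticed yourself, so $g$ plays no role.) Your step~(1) is also not a valid characterization: avoiding $Q$ is not a pattern-avoidance condition on a contracted skeleton, since copies of the anchor $(S,11)$ may occur freely in $T$; the only constraint is that at each such occurrence the left subtree of its root avoids $(P,e)$.

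The genuinely missing idea is how to rearrange the tree \emph{globally} so that ``the left subtree at the root of every $132$-anchor avoids $(P,e)$'' turns into ``the left subtree at the right leaf of every $213$-anchor avoids $(P,e)$''. A local surgery performed independently at each anchor cannot achieve this, because anchors overlap: along a right branch $(r_1,\ldots,r_b)$ from the root, in every maximal run of consecutive non-empty left subtrees $L(r_i)$, avoiding $Q$ forces all subtrees of the run except the \emph{last} to avoid $(P,e)$ (the last one need only avoid $Q$), whereas avoiding $Q'$ forces all except the \emph{first} to avoid $(P,e)$; rewiring one anchor creates or destroys neighbouring ones. The paper's proof resolves this with one global operation plus recursion: it reverses the order of the vertices and their left subtrees along the right branch starting at the root, and applies the map recursively inside the $Q$-avoiding subtrees (the last subtree of each maximal run); this reversal converts the $Q$-run-condition exactly into the $Q'$-run-condition and is evidently invertible. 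Your outline would need to supply an equivalent global rearrangement; neither the contraction-to-skeleton step nor the decorated bitstring bijection provides one.
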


\begin{proof}
The proof is illustrated in Figure~\ref{fig:Lmove1}.
We inductively describe a bijection $f \colon \cT_n(Q) \to \cT_n(Q')$.
Consider a tree~$T \in \cT_n(Q)$.
If $n$ is strictly smaller than the number of vertices of~$Q$, then we define $f(T):=T$, i.e., $f$ is defined to be the identity mapping.
Otherwise we define $b:=\beta_R(r(T))$ and~$r_i:=c_R^{i-1}(r(T))$ for~$i=1,\ldots,b$, i.e., we consider the right branch~$(r_1,\ldots,r_b)$ starting at the root of~$T$.
Since $T$ avoids~$Q$, in every maximal sequence of consecutive non-empty left subtrees~$L(r_i)\neq\varepsilon$, the last one avoids~$Q$ and all earlier ones avoid~$(P,e)$.
The tree~$f(T)$ is obtained from~$T$ by reversing the order of vertices and their left subtrees on the branch~$(r_1,\ldots,r_b)$, and by applying~$f$ recursively to the subtrees avoiding~$Q$ (i.e., the last subtree in each maximal sequence of consecutive non-empty left subtrees~$L(r_i)\neq\varepsilon$ of~$T$).
By construction, the tree~$f(T)$ avoids the tree pattern~$Q'$.
Furthermore, the mapping $f:\cT_n(Q) \to \cT_n(Q')$ is reversible, so it is a bijection.
\end{proof}

\begin{lemma}
\label{lem:Lmove2}
Let $S$ and $S'$ be the paths of length~2 with $\tau(S)=132$ and $\tau(S')=213$, respectively.
Let $Q$ and~$Q'$ denote the tree patterns obtained from the contiguous patterns~$(S,11)$ and~$(S',11)$ by attaching a tree pattern~$(P,e)$ with a non-contiguous edge to the leaf of~$S$ as a right subtree and to the left leaf of~$S'$ as a left subtree, respectively.
Then $Q$ and~$Q'$ are Wilf-equivalent.
\end{lemma}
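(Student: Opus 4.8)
The plan is to prove the Wilf-equivalence by an explicit inductive bijection $f\colon\cT_n(Q)\to\cT_n(Q')$, in the spirit of the proofs of Lemmas~\ref{lem:pathmove} and~\ref{lem:Lmove1}: we decompose a binary tree along the right branch emanating from its root and then rearrange the left subtrees hanging off that branch, this time also swapping left and right inside the subtrees at the constrained positions.

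First I would record the precise shapes of $Q$ and $Q'$. Writing $a$ for the root of $Q$ with $b=c_R(a)$, $c=c_L(b)$ (both edges contiguous) and $(P,e)$ attached to $c$ by a non-contiguous right edge, a host tree $T$ contains $Q$ if and only if some vertex $\hat a$ of $T$ satisfies $c_R(\hat a)\neq\varepsilon$, $c_L(c_R(\hat a))\neq\varepsilon$, and $R(c_L(c_R(\hat a)))$ contains $(P,e)$; dually, writing $d$ for the root of $Q'$ with $e=c_L(d)$, $g=c_R(d)$ and $(P,e)$ attached to $e$ by a non-contiguous left edge, $T$ contains $Q'$ if and only if some vertex $\hat d$ satisfies $c_L(\hat d)\neq\varepsilon$, $c_R(\hat d)\neq\varepsilon$, and $L(c_L(\hat d))$ contains $(P,e)$. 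In particular every occurrence of $Q$ or $Q'$ restricts to an occurrence of $(P,e)$, so a tree avoiding $(P,e)$ automatically avoids both $Q$ and $Q'$. Now for $T\in\cT_n$ set $b:=\beta_R(r(T))$, $r_i:=c_R^{i-1}(r(T))$ and $W_i:=L(r_i)$ for $i=1,\ldots,b$; this decomposition is unique, and each vertex of $T$ lies either on the branch $r_1,\ldots,r_b$ or inside exactly one $W_i$. Examining the possible locations of the top vertex $\hat a$ of an occurrence of $Q$ shows that $T$ avoids $Q$ exactly when every $W_i$ avoids $Q$ and, for every $j\in\{2,\ldots,b\}$ with $W_j\neq\varepsilon$, the subtree $R(r(W_j))$ avoids $(P,e)$. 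The crucial simplification is that ``$W_j$ avoids $Q$ and $R(r(W_j))$ avoids $(P,e)$'' is equivalent to ``$W_j=\varepsilon$, or $W_j=(\rho,L,R)$ with $L$ avoiding $Q$ and $R$ avoiding $(P,e)$'': if $R$ avoids $(P,e)$ then $R$ avoids $Q$ as well, and it also rules out the only occurrence of $Q$ that could use $\rho$ as its top vertex. Hence, letting $\mathcal A:=\{\varepsilon\}\cup\{(\rho,L,R) : L \text{ avoids } Q,\ R \text{ avoids } (P,e)\}$, we get $T\in\cT_n(Q)$ iff $W_1$ avoids $Q$ and $W_2,\ldots,W_b\in\mathcal A$; symmetrically, letting $\mathcal B:=\{\varepsilon\}\cup\{(\rho,L,R) : L \text{ avoids } (P,e),\ R \text{ avoids } Q'\}$, we get $T\in\cT_n(Q')$ iff $W_b$ avoids $Q'$ and $W_1,\ldots,W_{b-1}\in\mathcal B$.

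Given these two mirror-image descriptions, the bijection is essentially forced. I would define, by simultaneous induction on $n$, a bijection $f\colon\cT_n(Q)\to\cT_n(Q')$ together with a bijection $g$ between the $n$-vertex trees of $\mathcal A$ and of $\mathcal B$: the map $g$ fixes $\varepsilon$ and sends $(\rho,L,R)$, with $L$ avoiding $Q$ and $R$ avoiding $(P,e)$, to the tree whose left subtree is $R$ and whose right subtree is $f(L)$ (which lies in $\mathcal B$ by the inductive hypothesis); the map $f$ keeps the right branch $r_1,\ldots,r_b$ of $T$ and replaces the list of left subtrees $(W_1,W_2,\ldots,W_b)$ by $(g(W_2),\ldots,g(W_b),f(W_1))$, i.e.\ it cyclically shifts the list forward by one, applying $g$ to each $\mathcal A$-entry and $f$ to the first entry. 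All recursive calls are to strictly smaller trees, so $f$ and $g$ are well defined; they preserve the number of vertices and the length of the right branch; and they are inverted by the analogous construction using $f^{-1}$ and $g^{-1}$, so $f$ is a bijection for every $n$, which establishes the Wilf-equivalence.

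The step I expect to be most delicate is the right-branch bookkeeping in the second paragraph: checking that the listed conditions capture \emph{all} occurrences of $Q$ and of $Q'$ — in particular that an occurrence whose top vertex lies off the right branch is confined to a single $W_i$ — and that the collapse of ``$W_j$ avoids $Q$ together with $R(r(W_j))$ avoiding $(P,e)$'' into the product condition defining $\mathcal A$ (and its analogue for $Q'$ and $\mathcal B$) is correct. Once those structural facts are in place, writing down $f$ and $g$ and verifying that they are mutually inverse bijections is routine.
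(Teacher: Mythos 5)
Your proposal is correct and follows essentially the same route as the paper: decompose the tree along the maximal right branch at the root, characterize avoidance of $Q$ and $Q'$ by conditions on the left subtrees hanging off that branch, and build a recursive bijection that permutes these left subtrees while swapping the two subtrees at each of their roots and recursing on the $Q$-avoiding parts (your helper $g$ is exactly the paper's ``swap $L(c_L(r_i))$ and $R(c_L(r_i))$ and apply $f$ recursively'' step). The only difference is cosmetic: you cyclically shift the sequence of left subtrees so that $f(W_1)$ lands in the last position, whereas the paper reverses the sequence; both choices send the $\mathcal{A}$-entries to the first $b-1$ positions and hence both yield valid bijections.
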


\begin{figure}[h!]
\includegraphics[page=4]{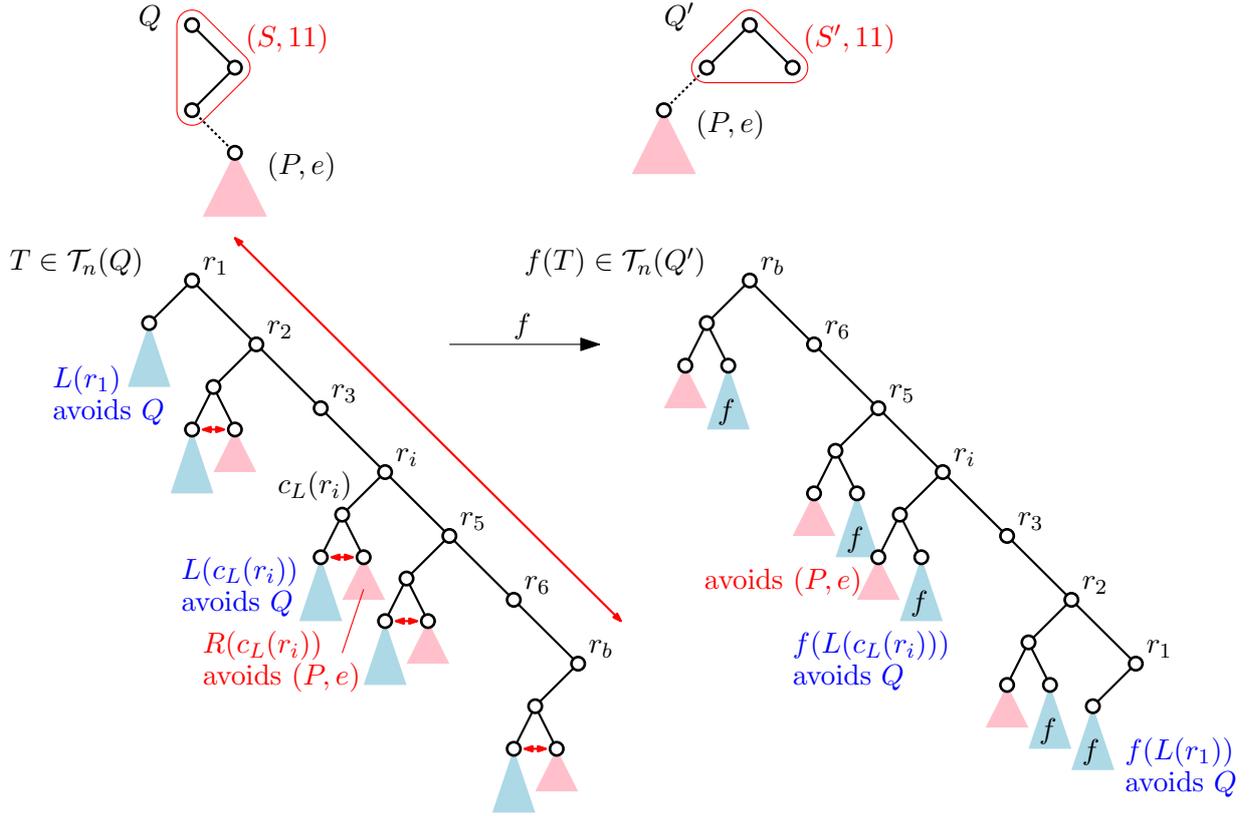}
\caption{Illustration of Lemma~\ref{lem:Lmove2}.}
\label{fig:Lmove2}
\end{figure}

\begin{proof}
The proof is illustrated in Figure~\ref{fig:Lmove2}.
We inductively describe a bijection $f \colon \cT_n(Q) \to \cT_n(Q')$.
Consider a tree~$T \in \cT_n(Q)$.
If $n$ is strictly smaller than the number of vertices of~$Q$, then we define $f(T):=T$, i.e., $f$ is defined to be the identity mapping.
Otherwise we define $b:=\beta_R(r(T))$ and~$r_i:=c_R^{i-1}(r(T))$ for~$i=1,\ldots,b$, i.e., we consider the right branch~$(r_1,\ldots,r_b)$ starting at the root of~$T$.
Since $T$ avoids~$Q$, the subtree $L(r_1)$ avoids~$Q$ and for each $i=2,\dots,b$, if $c_L(r_i)\ne \varepsilon$ then $L(c_L(r_i))$ avoids~$Q$ and $R(c_L(r_i))$ avoids~$(P,e)$.
The tree~$f(T)$ is obtained from~$T$ by reversing the order of vertices and their left subtrees on the branch~$(r_1,\ldots,r_b)$, by swapping the subtrees~$L(c_L(r_i))$ and~$R(c_L(r_i))$ if $c_L(r_i)\ne \varepsilon$ for $i=2,\ldots,b$, and by applying~$f$ recursively to the subtrees avoiding~$Q$ (i.e., the subtree $L(r_1)$ of~$T$ and the subtrees $L(c_L(r_i))$ if $c_L(r_i)\neq \varepsilon$ for $i=2,\ldots,b$).
By construction, the tree~$f(T)$ avoids the tree pattern~$Q'$.
Furthermore, the mapping $f:\cT_n(Q) \to \cT_n(Q')$ is reversible, so it is a bijection.
\end{proof}

\begin{lemma}
\label{lem:Lmove3}
Let $S$ be the path of length~2 with $\tau(S)=132$.
Let $Q$ and~$Q'$ denote the tree patterns obtained from the contiguous pattern~$(S,11)$ by attaching a tree pattern~$(P,e)$ with a non-contiguous edge to the leaf and the middle vertex of~$S$ as a right subtree, respectively.
Then $Q$ and~$Q'$ are Wilf-equivalent.
\end{lemma}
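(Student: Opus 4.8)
The plan is to mimic the inductive bijection arguments used in Lemmas~\ref{lem:Lmove1} and~\ref{lem:Lmove2}, only now the two tree patterns $Q$ and $Q'$ share the same underlying shape $(S,11)$ with $\tau(S)=132$ (a root whose right child is a leaf, with the root also carrying a left subtree slot), and they differ only in \emph{where} the non-contiguous pendant copy of $(P,e)$ is attached as a right subtree: to the leaf of $S$ (the right child of the root) in $Q$, versus to the middle vertex of $S$ (the root of $S$, equivalently $c_R(\mathrm{root})$'s parent) in $Q'$. So $Q$ forces a configuration ``root, right child $v$ which is a leaf, and $v$ has a right descendant rooted copy of $(P,e)$'', wait --- more precisely $Q$ attaches $(P,e)$ below the lower vertex and $Q'$ attaches it below the upper vertex of the length-2 right path. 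The key structural observation is that avoiding $Q$ is a local condition along a maximal right branch of the host tree $T$, together with conditions on the left subtrees hanging off that branch.

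First I would set up the recursion on $n$: if $n$ is smaller than $|Q|=|Q'|$, set $f(T):=T$. Otherwise, write $b:=\beta_R(r(T))$ and $r_i:=c_R^{i-1}(r(T))$ for the maximal right branch $(r_1,\ldots,r_b)$ from the root, and examine the left subtrees $L(r_i)$. Because $Q$ has its $(P,e)$ attached one step lower on the length-2 right path than $Q'$ does, avoidance of $Q$ by $T$ translates into: along the right branch, in each window of two consecutive positions $r_i,r_{i+1}$, the subtree $L(r_{i+1})$ (the one hanging off the lower of the two) must avoid $(P,e)$ unless $L(r_i)=\varepsilon$, and similarly for $Q'$ with the roles of the two slots swapped; the ``terminal'' left subtree in each maximal run, and subtrees outside such runs, must avoid $Q$ itself. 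The map $f(T)$ is then obtained by reversing the order of the vertices $r_1,\ldots,r_b$ together with their attached left subtrees along this right branch (exactly as in Lemmas~\ref{lem:Lmove1}--\ref{lem:Lmove2}), and recursively applying $f$ to those left subtrees that are required to avoid $Q$. Reversing the branch interchanges ``the left subtree one step above'' with ``the left subtree one step below'', which is precisely the difference between the attachment points in $Q$ and $Q'$, so $f(T)$ avoids $Q'$. Since reversal is an involution on the branch data and $f$ is applied recursively only to subtrees whose defining avoidance condition is preserved, $f$ is reversible, hence a bijection $\cT_n(Q)\to\cT_n(Q')$.

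The main obstacle I expect is bookkeeping the boundary behaviour of the ``maximal runs of non-empty left subtrees'' along the right branch: one must verify that after reversal the roles of first/last subtree in each run are consistent with $Q'$-avoidance, i.e.\ that the subtree now occupying the ``upper'' slot which must avoid $(P,e)$ for $Q'$ indeed does so, and that the subtree now required to avoid $Q'$ (to which we recurse) is one we already knew avoided $Q$. This is the same kind of careful case analysis as in Lemma~\ref{lem:Lmove1}, and the figure (Figure~\ref{fig:Lmove3}) should make the correspondence transparent; the argument is otherwise routine.
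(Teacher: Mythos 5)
Your plan does not work for this lemma: the branch-reversal bijection of Lemmas~\ref{lem:Lmove1} and~\ref{lem:Lmove2} is not the right operation here, and the paper's proof uses a different one. The source of the error is the shape of~$S$: since $\tau(S)=132$, the pattern $S$ is a root~$u$ with right child~$v$, and the leaf~$w$ is the \emph{left} child of~$v$. So the two attachment points of~$(P,e)$ --- the leaf~$w$ in~$Q$ and the middle vertex~$v$ in~$Q'$ --- are not two consecutive positions on a right path, and in both patterns $(P,e)$ hangs below as a \emph{right} subtree. Concretely, $T$ contains~$Q$ iff there are vertices $u$, $v=c_R(u)$, $w=c_L(v)$ such that $R(w)$ contains~$(P,e)$, whereas $T$ contains~$Q'$ iff there are such $u,v$ with $c_L(v)\neq\varepsilon$ and $R(v)$ containing~$(P,e)$. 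Along the right branch~$(r_1,\ldots,r_b)$ of~$T$, avoiding~$Q$ therefore constrains the subtrees~$R(c_L(r_i))$ for $i\geq 2$ with $c_L(r_i)\neq\varepsilon$ (they must avoid~$(P,e)$), while avoiding~$Q'$ constrains $R(r_i)=T(r_{i+1})$, i.e., the entire suffix of the branch. Your claimed translation of $Q$-avoidance into conditions on the $L(r_i)$'s, and your claim that reversing the branch ``interchanges the attachment points'', are both incorrect: reversal permutes the left subtrees among the positions on the branch but never converts a condition on~$R(c_L(r_i))$ into one on~$R(r_i)$.

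A concrete failure: take $(P,e)$ to be a single vertex, so that $Q=(1423,110)$ and $Q'=(1324,110)$, and let $T\in\cT_4$ be the tree with root~$1$, $c_R(1)=3$, $c_L(3)=2$, $c_R(3)=4$. Then $T$ avoids~$Q$ (the only left child of a right child is~$2$, and $R(2)=\varepsilon$) but contains~$Q'$ (take $v=3$, whose right subtree is nonempty). The sequence of left subtrees along the right branch~$(1,3,4)$ is $(\varepsilon,T(2),\varepsilon)$, a palindrome, so your reversal fixes~$T$, and the image still contains~$Q'$. Hence your map does not send $\cT_n(Q)$ into $\cT_n(Q')$.

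The paper's bijection instead, for each $i\in\{2,\ldots,b\}$ with $c_L(r_i)\neq\varepsilon$, \emph{swaps the right subtrees} $R(c_L(r_i))$ and~$R(r_i)$ (and recurses into $L(r_1)$ and the subtrees~$L(c_L(r_i))$). This moves the subtree that $Q$-avoidance forces to avoid~$(P,e)$, namely $R(c_L(r_i))$, into the slot~$R(r_i)$ that $Q'$-avoidance constrains; the map is reversible because the alternating path through the~$r_i$'s and the~$c_L(r_i)$'s is uniquely recoverable from the image. If you want to repair your write-up, replace the branch reversal by this right-subtree swap and redo the verification of $Q'$-avoidance and of invertibility.
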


\begin{figure}[h!]
\includegraphics[page=5]{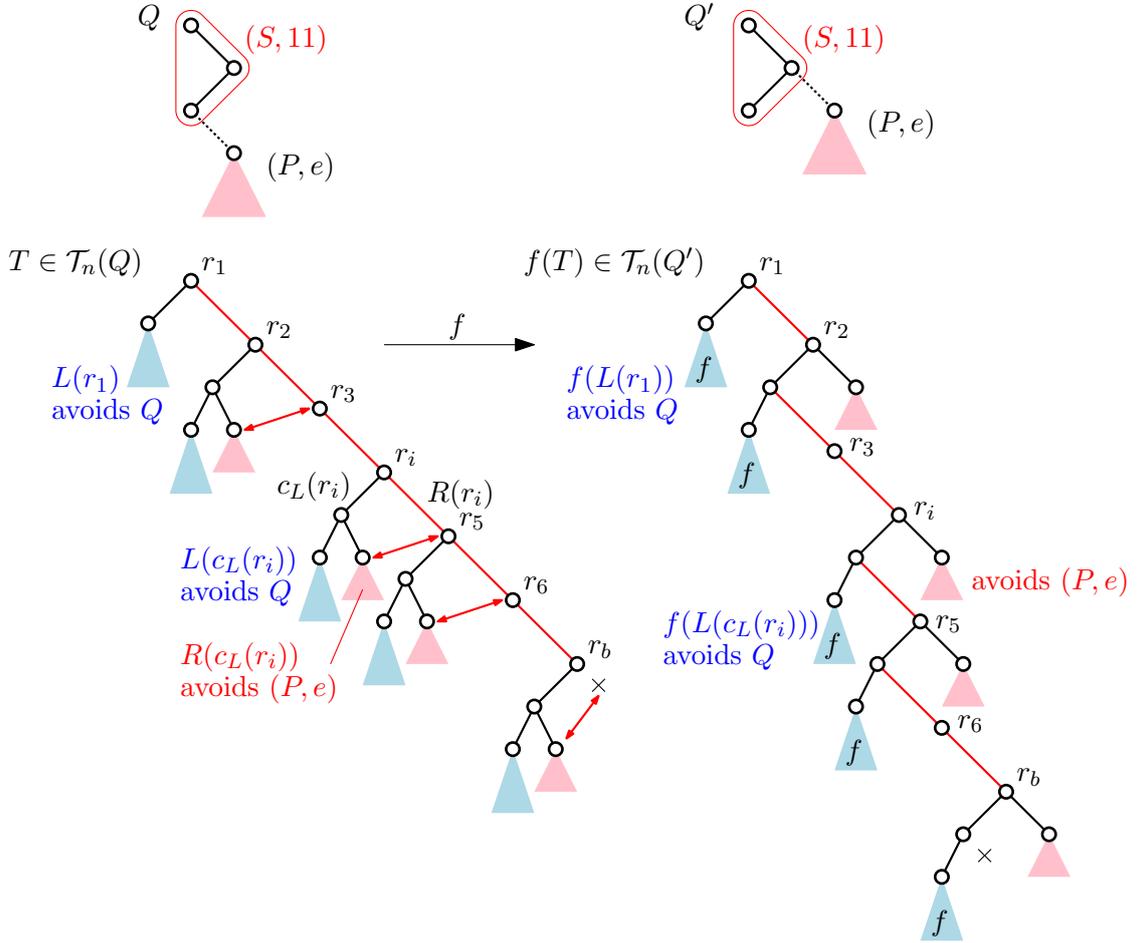}
\caption{Illustration of Lemma~\ref{lem:Lmove3}.}
\label{fig:Lmove3}
\end{figure}

\begin{proof}
The proof is illustrated in Figure~\ref{fig:Lmove3}.
We inductively describe a bijection $f \colon \cT_n(Q) \to \cT_n(Q')$.
Consider a tree~$T \in \cT_n(Q)$.
If $n$ is strictly smaller than the number of vertices of~$Q$, then we define $f(T):=T$, i.e., $f$ is defined to be the identity mapping.
Otherwise we define $b:=\beta_R(r(T))$ and~$r_i:=c_R^{i-1}(r(T))$ for~$i=1,\ldots,b$, i.e., we consider the right branch~$(r_1,\ldots,r_b)$ starting at the root of~$T$.
Since $T$ avoids~$Q$, the subtree $L(r_1)$ avoids~$Q$ and for each $i=2,\dots,b$, if $c_L(r_i)\ne \varepsilon$ then $L(c_L(r_i))$ avoids~$Q$ and $R(c_L(r_i))$ avoids~$(P,e)$.
The tree~$f(T)$ is obtained from~$T$ as follows.
First, we replace $L(r_1)$ recursively by~$f(L(r_1))$.
Next, we consider every $i\in\{2,\dots,b\}$ for which $c_L(r_i)\ne \varepsilon$, we replace $L(c_L(r_i))$ recursively by~$f(L(c_L(r_i)))$, and we swap the subtrees~$R(c_L(r_i))$ and~$R(r_i)$, i.e., $R(c_L(r_i))$ is attached as the right subtree of~$r_i$, and $R(r_i)$ is attached as the right subtree of~$c_L(r_i)$.
Some of these subtrees may be empty~$\varepsilon$, then by attaching an empty tree we mean attaching no tree.
In particular, $R(r_b)$ is empty.
By construction, the tree~$f(T)$ avoids the tree pattern~$Q'$.
Furthermore, the mapping $f:\cT_n(Q) \to \cT_n(Q')$ is reversible, as the path on the vertices~$r_i$ and~$c_L(r_i)$ if they exist, for $i=2,\dots,b$, is uniquely determined in~$f(T)$.
Consequently, $f$ is a bijection, as claimed.
\end{proof}

Clearly, by applying the mirroring operation~$\mu$, we obtain variants of the preceding lemmas where the direction of attachment is interchanged.

\subsection{Specific patterns with 5 vertices}

\begin{lemma}
\label{lem:21435-13254}
The tree patterns~$\cT_n(21435, \hyph1\hyph\hyph)$ and~$\cT_n(13254, 1\hyph1\hyph)$ are Wilf-equivalent.
\end{lemma}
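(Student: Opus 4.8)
The plan is to reduce both avoidance classes to a condition on the shapes of the maximal right branches of the tree, and then to check that an explicit size-preserving involution on~$\cT_n$ interchanges the two conditions.

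First I would pass to convenient representatives. In~$P=21435$ the vertices labelled~$1,3,5$, and in~$P=13254$ the vertices labelled~$2,4$, are all leaves, hence belong to the leaf set~$X$ of Theorem~\ref{thm:e-equality}; applying that theorem to each of these vertices gives $\cT_n(21435,\hyph1\hyph\hyph)=\cT_n(21435,1111)$ and $\cT_n(13254,1\hyph1\hyph)=\cT_n(13254,1111)$, so it suffices to treat the all-contiguous patterns. Reading off the preorder words together with the search-tree labels shows at once that $T$ contains $(21435,1111)$ if and only if $T$ has a vertex~$u$ with two children whose right child~$c_R(u)$ again has two children, and that $T$ contains $(13254,1111)$ if and only if $T$ has a vertex~$v$ with two children such that $v=c_R(p(v))$ and $c_R(v)$ has a left child. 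I would then rephrase these via maximal right branches: for a maximal right branch $s_1\to s_2\to\dots\to s_m$ (so $s_1$ is not a right child and $c_R(s_m)=\varepsilon$) write $A_i:=L(s_i)$; then $T$ avoids $(21435,1111)$ iff on every maximal right branch no two consecutive entries among $A_1,\dots,A_{m-1}$ are both nonempty, and $T$ avoids $(13254,1111)$ iff on every maximal right branch no two consecutive entries among $A_2,\dots,A_m$ are both nonempty. The only delicate point here is the book-keeping at the two ends of a branch: the bottom vertex~$s_m$ never has a right child and the top vertex~$s_1$ is never a right child, and this is exactly what produces the index range $1,\dots,m-1$ in the first case and $2,\dots,m$ in the second.

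Next I would define the candidate map $\Phi\colon\cT_n\to\cT_n$ recursively by reversing, along every maximal right branch, the order in which the left subtrees are attached: if $T$ has maximal right branch $s_1,\dots,s_m$ from the root with left subtrees $A_1,\dots,A_m$, then $\Phi(T)$ is the tree consisting of a right path $s_1',\dots,s_m'$ with $\Phi(A_{m+1-i})$ attached as the left subtree of~$s_i'$, and $\Phi(\varepsilon):=\varepsilon$. A straightforward induction on the number of vertices shows that $\Phi$ preserves the vertex set (the labels being forced by the search-tree property), that $\Phi$ is an involution, and that $\Phi$ induces a length-preserving bijection between the maximal right branches of~$T$ and those of~$\Phi(T)$ under which a branch with left-subtree sequence $(A_1,\dots,A_m)$ corresponds to the branch with left-subtree sequence $(\Phi(A_m),\dots,\Phi(A_1))$; since $\Phi(A)=\varepsilon$ exactly when $A=\varepsilon$, this reverses the pattern of empty and nonempty positions along the branch.

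Finally I would combine the two observations. Reversing a length-$m$ sequence turns the property ``no two consecutive entries among positions $1,\dots,m-1$ are nonempty'' into ``no two consecutive entries among positions $2,\dots,m$ are nonempty'', and vice versa; applying this on every maximal right branch via the correspondence above yields $T\in\cT_n(21435,1111)$ iff $\Phi(T)\in\cT_n(13254,1111)$. Hence $\Phi$ restricts to a bijection $\cT_n(21435,\hyph1\hyph\hyph)\to\cT_n(13254,1\hyph1\hyph)$, which proves the Wilf-equivalence. I expect the main obstacle to be nailing down the two structural characterizations exactly; the whole argument hinges on the off-by-one at the ends of the maximal right branches, and it is precisely that offset which the reversal map absorbs. (Alternatively, the same two branch decompositions can be turned into generating function identities showing that both counting sequences satisfy $2x^2F^2+(x^2+2x-1)F+x=0$ with $F(0)=0$, which already gives the equality of the sequences without constructing~$\Phi$.)
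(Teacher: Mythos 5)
Your proof is correct and follows essentially the same route as the paper: the paper's (omitted) argument is precisely the path-reversal bijection technique of Lemmas~\ref{lem:Lmove1} and~\ref{lem:Lmove2}, and your involution~$\Phi$, which reverses the left subtrees along every maximal right branch, is that technique made global and recursive. Your reduction to the all-contiguous patterns via Theorem~\ref{thm:e-equality}, the branch characterizations with the off-by-one at the two ends of each maximal right branch, and the check that reversal exchanges the two conditions correctly supply exactly the details the paper omits.
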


The proof uses the path reversal bijection technique employed in the proofs of Lemmas~\ref{lem:Lmove1} and~\ref{lem:Lmove2}; see Figure~\ref{fig:21435-13254}.
We omit the details.

\begin{figure}[h!]
\includegraphics{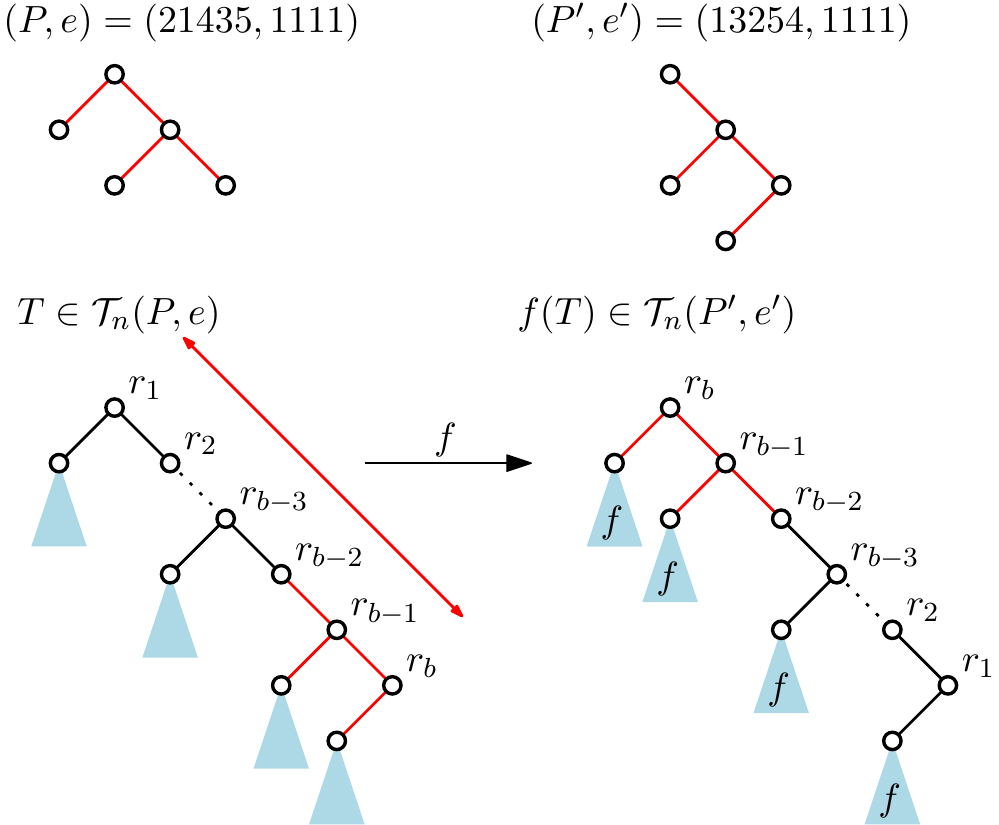}
\caption{Illustration of Lemma~\ref{lem:21435-13254}.}
\label{fig:21435-13254}
\end{figure}

\begin{figure}
\includegraphics{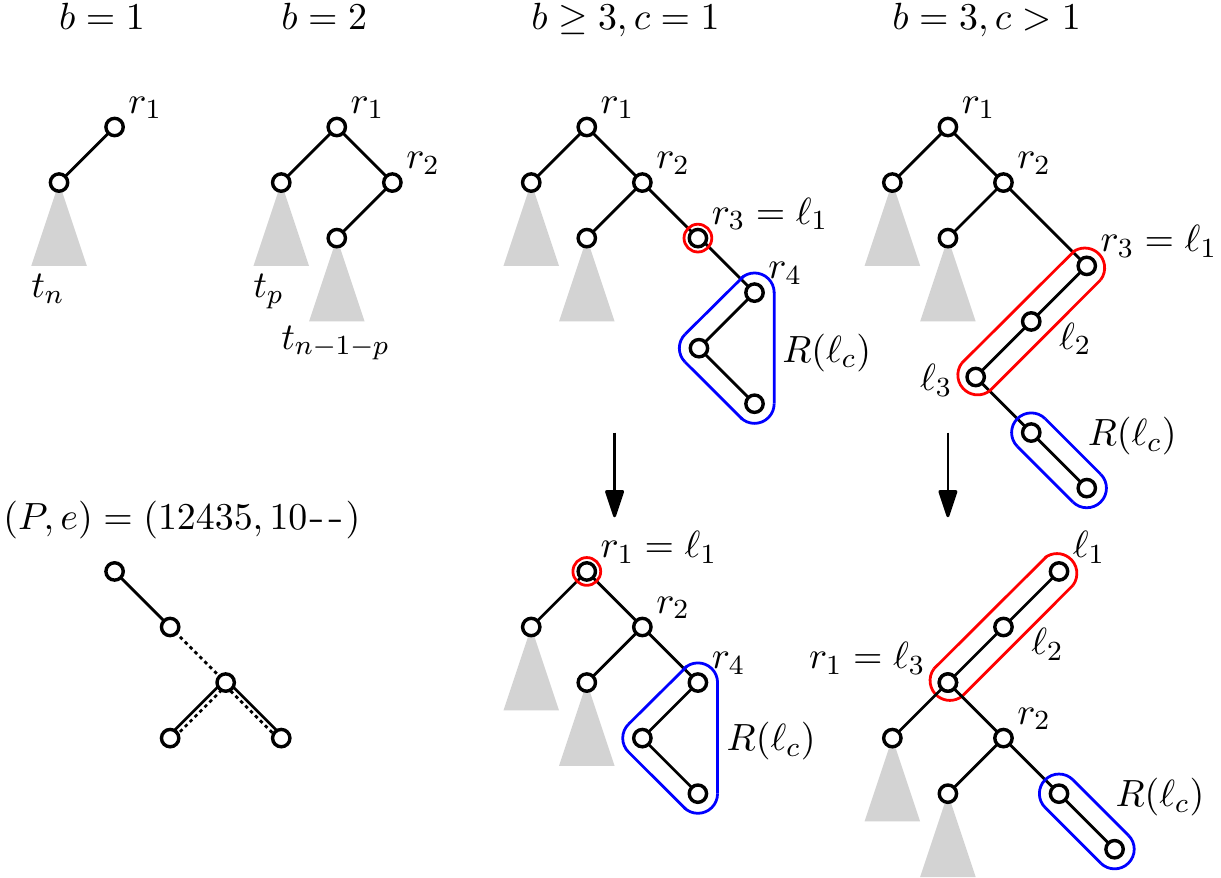}
\caption{Illustration of the proof of Lemma~\ref{lem:12435}.
The rightmost two figures with $b\geq 3$ illustrate the mapping~$f$ in two cases, the first trivial~($c=1$) and the other non-trivial~($c>1$).}
\label{fig:12435}
\end{figure}

\begin{lemma}
\label{lem:12435}
The class of trees $\cT_n(12435,10\hyph\hyph)$ is counted by the sequence OEIS~\href{https://oeis.org/A176677}{A176677}.
\end{lemma}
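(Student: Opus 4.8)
The plan is to find a recursive decomposition of the trees in $\cT_n(12435,10\hyph\hyph)$ that translates into the defining recurrence of OEIS~\href{https://oeis.org/A176677}{A176677}. First I would recall the pattern: in compact form $(12435,10\hyph\hyph)$ means the pattern tree $P$ has preorder $\tau(P)=12435$, with the edge from the vertex in position~2 (the vertex~2, which is $c_L(r(P))$) being contiguous, the edge to the vertex in position~3 (the vertex~4, which is $c_R(2)$) being non-contiguous, and the remaining two edges ``don't care''. So $P$ looks like: root~$1$ with a contiguous left child~$2$; vertex~$2$ has a non-contiguous right child which is the root of a right branch on vertices $4,3$ (with $3=c_L(4)$) and~$5=c_R(4)$, the last two edges being irrelevant by Theorem~\ref{thm:e-equality}. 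The picture in Figure~\ref{fig:12435} makes this precise, and the key structural fact to extract is: a tree $T$ avoids $(P,e)$ iff, looking at the left branch from the root, something controlled happens; the figure caption signals a case split on $b=\beta_R(\text{something})$ and an auxiliary parameter~$c$.

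Concretely, I would set up the recursion as follows. Write $T\in\cT_n$ as $(r(T),L(T),R(T))$. An occurrence of $(P,e)$ requires the contiguous left edge $1\!-\!2$, so the image of~$1$ must be a vertex $v$ having a \emph{left child} whose subtree contains (non-contiguously, as a right descendant) a copy of the right branch $4,3$ together with $5=c_R(4)$. The cleanest way to organize this is: $T$ contains $(P,e)$ iff there is a vertex $v$ in $T$ with $c_L(v)\neq\varepsilon$ such that $L(v)=T(c_L(v))$ contains $(S,e_S)$, where $(S,e_S)=(P(2),e_{P(2)})$ is the three-vertex pattern ``vertex with a non-contiguous right child that itself has both children'' — i.e.\ $(S,e_S)=(1324,0\hyph\hyph)$ on vertices $\{2,4,3,5\}$ relabeled. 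Then I would argue that $T$ avoids $(P,e)$ iff for every vertex $v$ with a left child, $L(v)$ avoids $(1324,0\hyph\hyph)$, and further unravel what it means for a $(1324,0\hyph\hyph)$-avoider to sit as a left subtree. Following the figure, the decomposition distinguishes whether the right branch $\beta_R(r(T))=b$ from the root is small ($b\le 2$, the trivial base cases giving the initial terms $1,2,5,14$) or $b\ge 3$, and within $b\ge 3$ whether a secondary count $c$ (the length of some left branch hanging off the right spine, or the number of admissible ``insertion slots'') equals~$1$ or is~$>1$; each slot contributes a factor, producing a convolution.

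Having isolated the decomposition, the remaining step is bookkeeping: show that $a_n:=|\cT_n(12435,10\hyph\hyph)|$ satisfies the recurrence that defines OEIS~\href{https://oeis.org/A176677}{A176677}. I would look up its formula — the sequence $1,2,5,14,41,123,375,1158,3615,\dots$ has generating function/recurrence listed there (it is a Motzkin-like or Catalan-like transform; from the data it satisfies a linear recurrence with polynomial coefficients, or equivalently $A(x)$ is algebraic of low degree) — and verify that the convolution identity coming from the $b$-and-$c$ case analysis matches it, checking the base cases $a_1=1,a_2=2,a_3=5,a_4=14$ against the initial segment. To make the write-up clean I would state the structural characterization as a sub-claim (``$T$ avoids $(12435,10\hyph\hyph)$ iff $\cdots$''), prove it by the standard induction on $|T|$ splitting at the root exactly as in the proof of Theorem~\ref{thm:bijection}, then derive the counting recurrence, then identify it with the OEIS entry.

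The main obstacle I anticipate is pinning down the precise structural characterization of $(12435,10\hyph\hyph)$-avoiders — in particular correctly handling the interaction between the contiguous left edge at the top of $P$ and the non-contiguous edge below it, so that the ``$v$ has a left child whose subtree contains the 3-vertex sub-pattern as a \emph{right descendant}'' condition is stated and exploited without error. Once that characterization is nailed down (and it is essentially what Figure~\ref{fig:12435} asserts), translating it into the recurrence for OEIS~\href{https://oeis.org/A176677}{A176677} and matching initial terms is routine; the figure's explicit $c=1$ versus $c>1$ split is exactly the shape of the convolution, so the only real work is verifying the algebra of the generating function against the known entry. If a direct recurrence proves awkward, a fallback is to exhibit a bijection with another family already known to be counted by A176677 (several neighbouring entries in Table~\ref{tab:size-5}, e.g.\ $(12534,1\hyph0\hyph)$, are in this class via Lemma~\ref{lem:subtree}), using the path-reversal / subtree-swap technique of Lemmas~\ref{lem:Lmove1}--\ref{lem:Lmove3}.
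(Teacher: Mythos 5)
There is a genuine gap, and it occurs at the very first step: you have misread the pattern tree. Under the paper's labelling convention (the search tree property, Section~\ref{sec:notions}), the preorder $\tau(P)=12435$ forces vertex~$2$ to be the \emph{right} child of the root~$1$ (a left descendant of $1$ cannot be larger than $1$), so the pattern is: root~$1$ with a \emph{contiguous right} child~$2$, vertex~$2$ with a non-contiguous right descendant~$4$, and $4$ carrying the two don't-care children $3$ and~$5$. Your proposal instead takes $2=c_L(r(P))$ and builds the entire structural characterization around ``a vertex $v$ with a left child whose subtree contains $(1324,0\hyph\hyph)$''. That describes a different pattern (in compact form $(51324,10\hyph\hyph)$), and for that pattern the contiguity of the root edge is a don't-care by Theorem~\ref{thm:e-equality} (the left child of the root lands in the set $Y'$), so it coincides with the fully non-contiguous pattern and is counted by OEIS~A007051 $=1,2,5,14,41,122,\ldots$, not by A176677 $=1,2,5,14,41,123,375,\ldots$; the discrepancy already shows at $n=6$. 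So carried out faithfully, your plan counts the wrong class. (Two smaller inaccuracies in the same step: $(P(2),e_{P(2)})$ is a four-vertex pattern, not a three-vertex one, and ``$L(v)$ contains the subpattern'' must be ``contains it rooted at $c_L(v)$'' — containment anywhere in the subtree is not equivalent, because the contiguous edge pins the image of~$2$ to the child of the image of~$1$.)

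Even with the orientation corrected, the proposal leaves out the actual content of the proof. The paper decomposes $T\in\cT_{n+1}(P,e)$ by the length $b=\beta_R(r(T))$ of the right spine at the root: $b=1$ gives $t_n$ trees, $b=2$ gives the convolution $\sum_{p=0}^{n-1}t_pt_{n-1-p}$, and the crux is the case $b\ge 3$, where one observes that $T(r_3)$ must be a path and constructs an explicit reversible map (cutting the left branch at $r_3$, regluing $T(r_1)\setminus T(\ell_1)$ and reattaching $R(\ell_c)$) that puts these trees in bijection with all of $\cT_n(P,e)$ except the left path, contributing $t_n-1$ and yielding $t_{n+1}=2t_n-1+\sum_{p=0}^{n-1}t_pt_{n-1-p}$, which one then matches to the recurrence $a_{n+1}=-1+\sum_{p=0}^{n}a_pa_{n-p}$ defining A176677 via the shift $t_n=a_{n+1}$. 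Your write-up gestures at ``a convolution'' and defers the rest to ``routine bookkeeping'', but this $b\ge 3$ bijection and the shifted-recurrence identification are precisely the non-routine parts, and they are absent. Finally, the fallback of biject­ing to a neighbouring A176677 entry such as $(12534,1\hyph0\hyph)$ is circular within the paper: those entries are established via Wilf-equivalences (Lemmas~\ref{lem:subtree} and~\ref{lem:pathmove}) that trace back to $(12435,10\hyph\hyph)$ and to this very lemma, so it cannot substitute for the counting argument unless you independently enumerate one of those classes (e.g.\ the RGS class from Table~\ref{tab:staggered}), which the proposal does not do.
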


\begin{proof}
The sequence $(a_i)_{i\ge 1}$ in OEIS~\href{https://oeis.org/A176677}{A176677} is defined by the recurrence $a_0=a_1=1$ and $a_{n+1}=-1+\sum_{p=0}^n a_p a_{n-p}$ for $n\ge 1$.
Using this definition, a straightforward computation shows that $t_n:=a_{n+1}$ satisfies the recursion $t_0=t_1=1$ and
\begin{equation}
\label{eq:tn}
t_{n+1}=2t_n-1+\sum_{p=0}^{n-1}t_p t_{n-1-p}\quad\text{for }n\ge 1.
\end{equation}
To prove the lemma, we show that $t_n=|\cT_n(P,e)|$ for $(P,e):=(12435,10\hyph\hyph)$ by induction.

Clearly, we have $t_0=1$ as $\cT_0(P,e)=\{\varepsilon\}$, which settles the induction basis.
For the induction step, let $n\ge 1$ and consider all trees $T$ from $\cT_{n+1}(P,e)$, distinguished by the number $b:=\beta_R(r(T))$ of vertices on the right branch $(r_1,\dots,r_b)$ starting at the root.
Clearly, there are exactly $t_n$ trees $T$ with $b=1$ as $L(T)\in \cT_n(P,e)$ and $R(T)=\varepsilon$ in such trees.
Furthermore, the number of trees $T$ with $b=2$ is $\sum_{p=0}^{n-1}t_{p}t_{n-1-p}$ as $L(T)=L(r_1)\in \cT_{p}(P,e)$ and $L(r_2)\in \cT_{n-1-p}(P,e)$ in such trees where $p=|L(T)|$ ranges from~$0$ to~$n-1$.

We claim that the number of trees with $b\ge 3$ is $t_n-1$ by mapping them bijectively to all trees in~$\cT_n(P,e)$ except the left path.
Let $T\in \cT_{n+1}(P,e)$ with $b\ge 3$.
Observe that $T(r_3)$ is a path as $T$ avoids~$(P,e)$.
Let $c:=\beta_L(r_3)$ and let $(r_3=\ell_1,\dots,\ell_c)$ denote the left branch in~$T$ starting at~$r_3$.
We map $T\in\cT_{n+1}(P,e)$ to a tree $f(T)\in\cT_n(P,e)$ as follows; see Figure~\ref{fig:12435}.
We start $f(T)$ with the branch $(\ell_1,\dots,\ell_c)$, making $\ell_1$ the root of~$f(T)$, we glue $T(r_1)\setminus T(\ell_1)$ with the vertex~$r_1$ to the vertex~$\ell_c$ (so these two vertices merge into one), and the subtree $R(\ell_c)$ (possibly empty) is attached to~$r_2$ as a right child instead of~$r_3=\ell_1$.
Observe that $f(T) \in \cT_n(P,e)$, the tree $f(T)$ is not the left path, and the mapping~$f$ is reversible.
This completes the inductive proof of~\eqref{eq:tn}.
\end{proof}

\begin{figure}
\includegraphics[page=1]{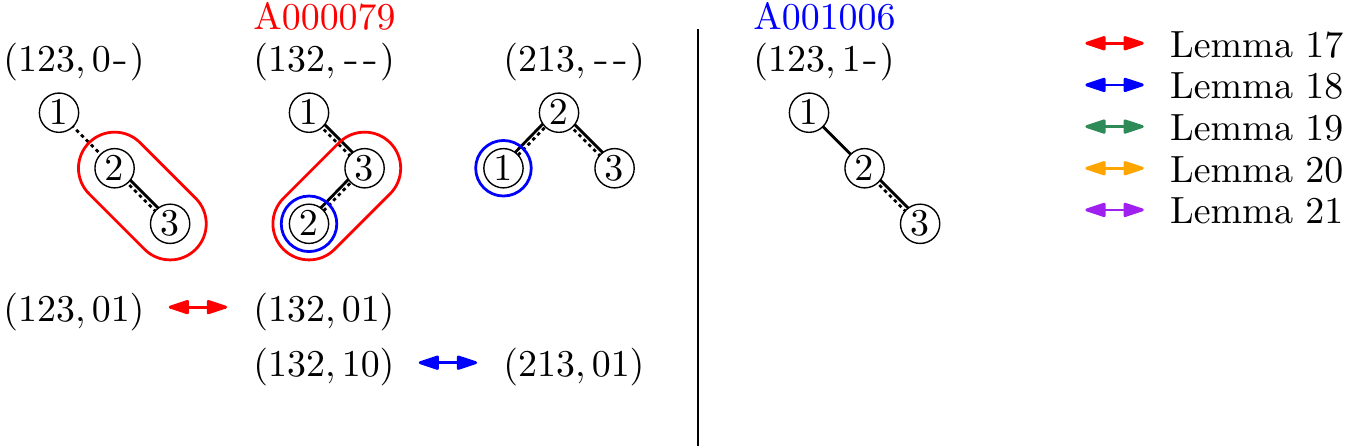}
\caption{Wilf classes of tree patterns on 3~vertices.
Modified subtrees are highlighted in the colors of the corresponding lemmas (see legend).}
\label{fig:wilf3}
\end{figure}

\begin{figure}
\includegraphics[page=2]{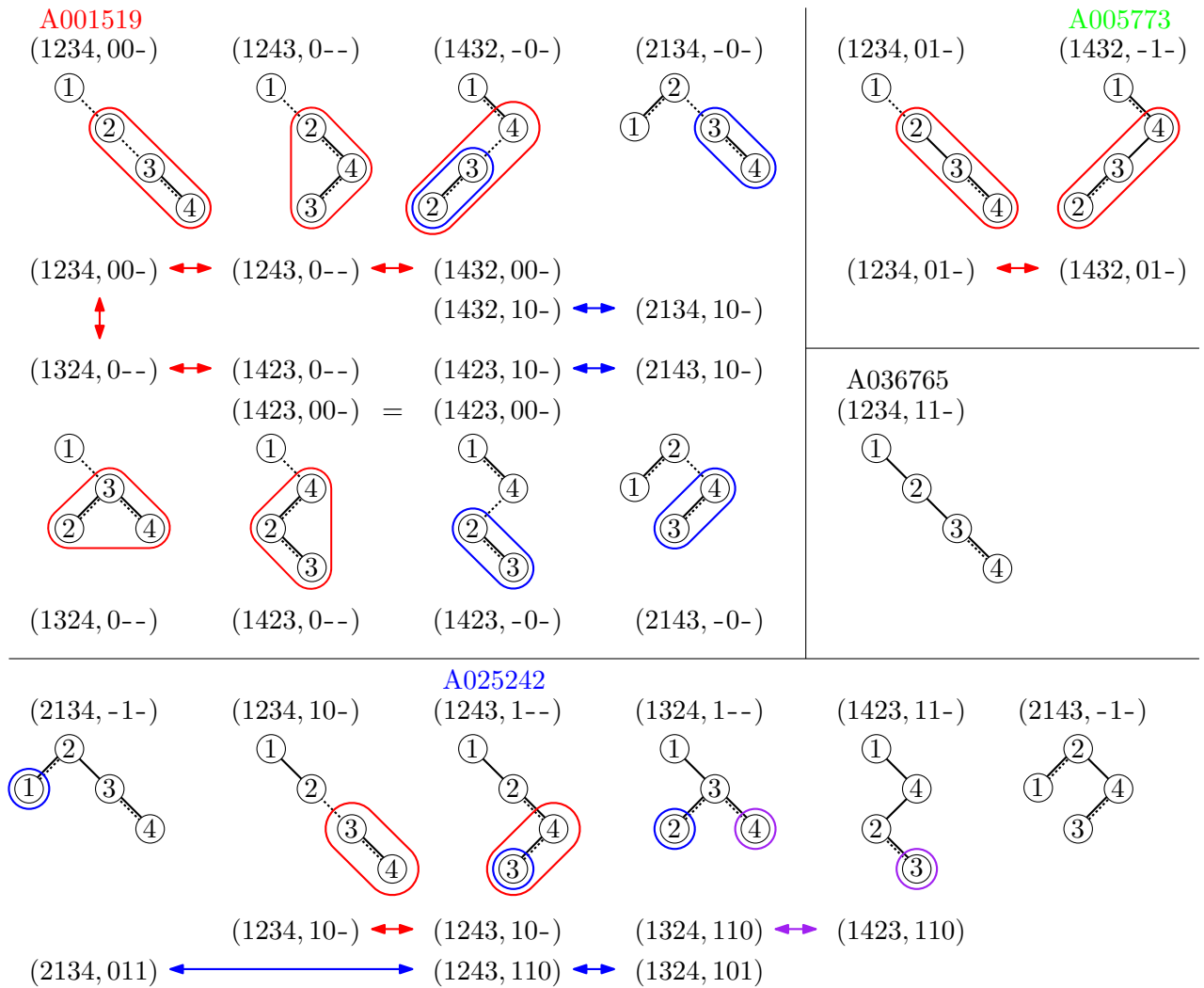}
\caption{Wilf classes of tree patterns on 4 vertices (see legend in Figure~\ref{fig:wilf3}).}
\label{fig:wilf4}
\end{figure}

\begin{figure}
\includegraphics[page=3,scale=0.8]{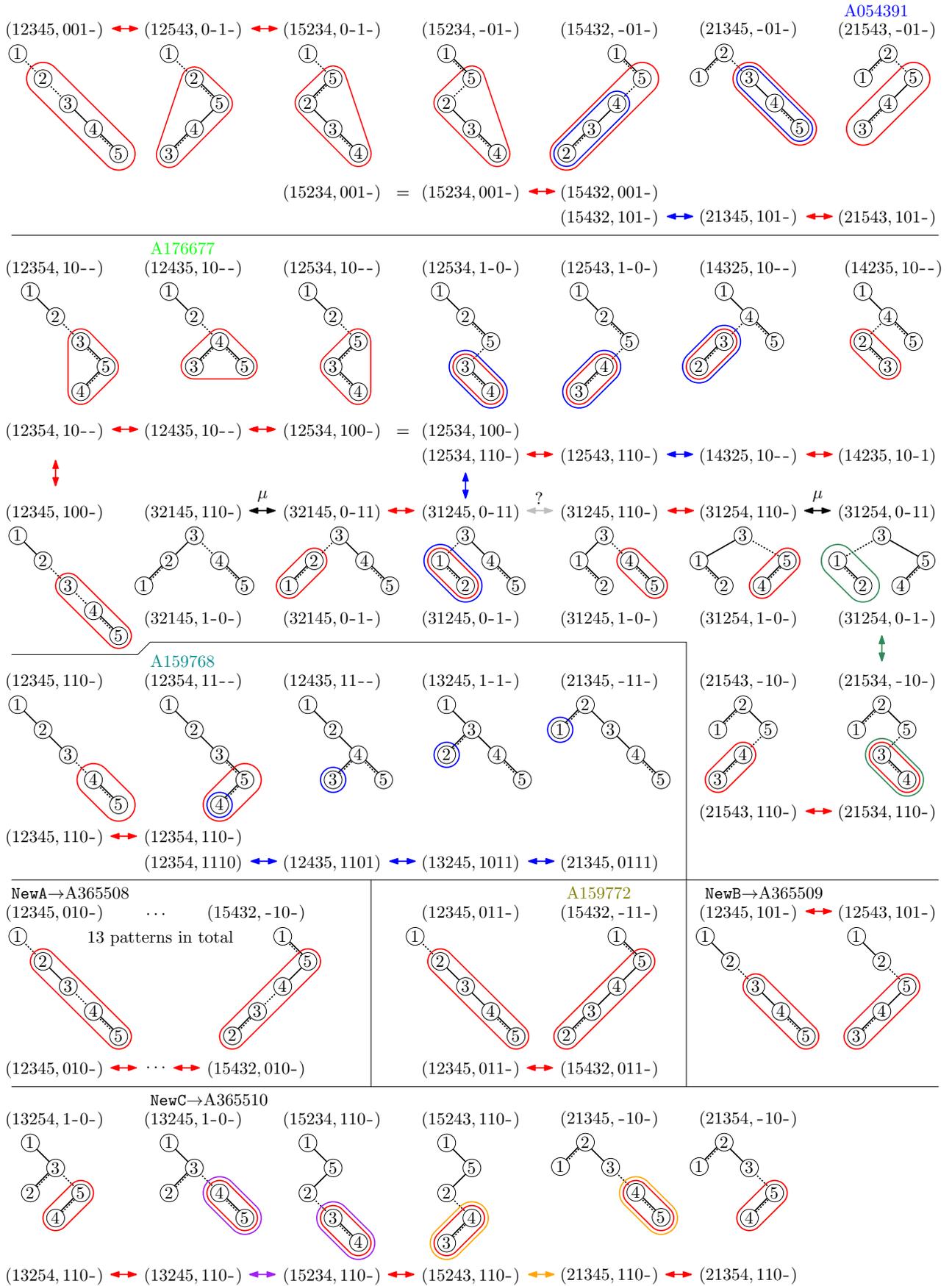}
\caption{Wilf classes of tree patterns on 5 vertices (see legend in Figure~\ref{fig:wilf3}; the question mark denotes an open problem).
Non-contiguous patterns are omitted as they are all counted by OEIS~\href{https://oeis.org/A007051}{A007051} (\cite[Thm.~1]{MR2967227}).
Also contiguous patterns where no lemma applies are omitted.
Only two out of 13 patterns from the class \newa{} are shown.}
\label{fig:wilf5}
\end{figure}

\subsection{Wilf-equivalent patterns with up to 5~vertices}

In this section we apply the lemmas derived in the preceding sections to establish Wilf-equivalences between tree patterns on at most 5 vertices; see Figures~\ref{fig:wilf3}--\ref{fig:wilf5}.
If needed, we use mirrored variants of the lemmas, which is not distinguished in the figures.

It remains an open problem to find a bijection between the tree patterns $(31245,0\hyph1\hyph)$ and $(31245,1\hyph0\hyph)$, or between any of their Wilf-equivalent patterns.
The first class of trees is counted by OEIS~\href{https://oeis.org/A176677}{A176677}, as it is Wilf-equivalent to~$(12435,10\hyph\hyph)$, and then we can use Lemma~\ref{lem:12435}.
For the second class of trees we are missing an argument connecting it to the first class.

\section{Open Problems}
\label{sec:open}

\begin{itemize}[leftmargin=8mm, noitemsep, topsep=1pt plus 1pt]
\item
Are there elegant bijections between pattern-avoiding binary trees and other interesting combinatorial objects such as Motzkin paths with 2-colored $\tF$-steps at odd heights (OEIS~A176677), or so-called skew Motzkin paths (OEIS~A025242)?
For the first family of objects, such a bijection might help to prove Wilf-equivalence between the tree patterns $(31245,0\hyph1\hyph)$ and $(31245,1\hyph0\hyph)$, or between any of their Wilf-equivalent patterns.

\item
For purely contiguous or non-contiguous tree patterns~$(P,e)$, there are recursions to derive the generating function for $|\cT_n(P,e)|$; see~\cite{MR2645188} and \cite{MR2967227}.
For our more general patterns with some contiguous and some non-contiguous edges, these methods seem to fail.
Therefore, it is an interesting open question whether there is an algorithm to compute those more general generating functions, and to understand some of their
properties.
Furthermore, can the set of pattern-avoiding trees for such pure (non-friendly) patterns be generated efficiently?

\item
In addition to contiguous and non-contiguous edges~$(i,p(i))$ of a binary tree pattern, which we encode by $e(i)=1$ and $e(i)=0$, there is another very natural notion of pattern containment that is intermediate between those two, which we may encode by setting~$e(i):=\nicefrac{1}{2}$.
Specifically, for such an edge with $e(i)=\nicefrac{1}{2}$ in the pattern tree~$P$, we require from the injection~$f$ described in Section~\ref{sec:pat-tree} that~$f(i)$ is a descendant of~$f(p(i))$ along a left or right branch in the host tree~$T$.
Specifically, if $i=c_L(p(i))$, then $f(i)=c_L^j(f(p(i)))$ for some $j>0$, whereas if $i=c_R(p(i))$, then $f(i)=c_R^j(f(p(i)))$ for some $j>0$.
Theorem~\ref{thm:bijection} can be generalized to also capture this new notion, by modifying the definition~\eqref{eq:Cip} in the natural way to
\[
C_i':=\begin{cases} \emptyset & \text{if } e(i)=0, \\
\big\{(\rho(i)-1,\min P(i)-1)\} & \text{if } e(i)=\nicefrac{1}{2} \text{ and } i=c_L(p(i)), \\
\big\{(\rho(i)-1,\max P(i))\big\} & \text{if } e(i)=\nicefrac{1}{2} \text{ and } i=c_R(p(i)), \\
\Big\{\big(\rho(i)-1,\min P(i)-1\big),\,\big(\rho(i)-1,\max P(i)\big)\Big\} & \text{if } e(i)=1.
\end{cases}
\]
The notion of friendly tree pattern can be generalized by modifying condition~(iii) in Section~\ref{sec:algo} as follows:
(iii') If~$e(k)\in\{1,\nicefrac{1}{2}\}$, then we have~$e(c_L(k))\in\{0,\nicefrac{1}{2}\}$.
It is worthwhile to investigate this new notion of pattern containment/avoidance and its interplay with the other two notions.
Our computer experiments show that there are patterns with edges~$e(i)=\nicefrac{1}{2}$ that give rise to counting sequences that are distinct from the ones obtained from patterns with edges~$e(i)=1$ (contiguous) and~$e(i)=0$ (non-contiguous).
The corresponding functionality has already been built into our generation tool~\cite{cos_btree}.

This intermediate notion of pattern-avoidance in binary trees has interesting applications in the context of pattern-avoidance in rectangulations, a  line of inquiry that was initiated in~\cite{MR4598046}.
\end{itemize}

\bibliographystyle{alpha}
\bibliography{refs}

\end{document}